\let\oldnl\nl
\newcommand{\nonl}{\renewcommand{\nl}{\let\nl\oldnl}}
\theoremstyle{plain}
\newtheorem{theorem}{Theorem}[section]
\newtheorem{lemma}[theorem]{Lemma}
\newtheorem{corollary}[theorem]{Corollary}
\theoremstyle{definition}
\newtheorem{definition}[theorem]{Definition}
\theoremstyle{remark}
\newtheorem{remark}[theorem]{Remark}
\newtheorem{example}{Example}
\newcommand{\Prob}{\mathsf{P}}
\newcommand{\Expect}{\mathsf{E}}
\DeclareMathOperator*{\esssup}{ess\,sup}
\newcommand{\mf}{\mathcal{F}}
\newcommand{\mg}{\mathcal{G}}
\newcommand{\mb}{\mathcal{B}}
\newcommand{\mpp}{\mathcal{P}}
\newcommand{\fb}{F, \ta}
\newcommand{\gb}{G, \ta}
\newcommand{\ta}{\theta}
\begin{document}

\title{Robust Quickest Change Detection in Multi-Stream Non-Stationary Processes}
\author{Yingze~Hou,
        Hoda~Bidkhori,
        and~Taposh~Banerjee
\thanks{Y. Hou and T. Banerjee are with the Department of Industrial Engineering, University of Pittsburgh, Pittsburgh,
PA, 15260 USA e-mail: taposh.banerjee@pitt.edu.}
\thanks{H. Bidkhori is with the Department of Computational and Data Sciences, 
George Mason University, Fairfax, 
VA 22030 USA.}
}

\maketitle

\begin{abstract}
The problem of robust quickest change detection (QCD) in non-stationary processes under a multi-stream setting is studied.  
In classical QCD theory, optimal solutions are developed to detect a sudden change in the distribution of stationary data. Most studies have focused on single-stream data.
In non-stationary processes, the data distribution both before and after change varies with time and is not precisely known. The multi-dimension data even complicates such issues.  
It is shown that if the non-stationary family for each dimension or stream has a least favorable law (LFL) or distribution in a well-defined sense, then the algorithm designed using the LFLs is robust optimal. The notion of LFL defined in this work differs from the classical definitions due to the dependence of the post-change model on the change point.  
Examples of multi-stream non-stationary processes encountered in public health monitoring and aviation applications are provided.
Our robust algorithm is applied to simulated and real data to show its effectiveness.
\end{abstract}

\begin{IEEEkeywords}
Change detection, Non-stationary processes, Multi-stream data, Robust optimality
\end{IEEEkeywords}

\IEEEpeerreviewmaketitle

\section{Introduction}
\label{sec:intro}

In the classical Quickest Change Detection (QCD) theory, the objective is to detect a change in the distribution of a stochastic process with minimal delay, while controlling the frequency of false alarms (\cite{veer-bane-elsevierbook-2013, tart-niki-bass-2014, poor-hadj-qcd-book-2009}). In the independent and identically distributed (i.i.d.) setting, a sequence of random variables ${X_n}$ is observed. Before a change point, denoted as $\nu$, these variables follow a fixed distribution with density $f$. After $\nu$, they follow a different distribution with density $g$. Optimal or asymptotically optimal methods are developed to detect this distributional shift from $f$ to $g$. In the Bayesian framework, the change point $\nu$ is treated as a random variable with a known prior distribution. The performance criterion based on the average detection delay leads to the Shiryaev test as the optimal solution (\cite{shir-siamtpa-1963, tart-veer-siamtpa-2005}). Recent studies continue to advance QCD in Bayesian settings (\cite{veer-bane-elsevierbook-2013, tart-niki-bass-2014, tart-book-2019, bane-tit-2021, guo2023bayesian, naha2024bayesian}).

In non-Bayesian settings (\cite{poll-astat-1985, lord-amstat-1971}), the change point $\nu$ is considered an unknown constant. In this case, the notion of conditional delay (conditioned on the change point) can be defined, and the average delay in detection depends on the specific location of the change point $\nu$. As a result, a minimax approach is typically adopted, seeking to minimize the worst-case delay across all possible change points.
A precise mathematical formulation of this approach is provided in Section~\ref{sec:QCD model}. For the i.i.d. case, the optimal solution is the Cumulative Sum (CUSUM) algorithm, as developed in seminal works (\cite{page-biometrica-1954, lord-amstat-1971, mous-astat-1986, lai-ieeetit-1998}). A comprehensive literature review and recent developments in the non-Bayesian setting can be found in 
\cite{tart-niki-bass-2014, veer-bane-elsevierbook-2013, tart-book-2019, liang2022quickest, brucks2023modeling}. 

The problem of QCD is also studied in a multi-stream setting, where the data being processed consists of multiple independent streams of observations. At the change point, the distribution of some unknown subsets of streams can change.
It has been shown that the optimal algorithms in this setting are extensions of the optimal algorithms for the single-stream case, modified using a generalized likelihood ratio or mixture approach (\cite{tart-niki-bass-2014, veer-bane-elsevierbook-2013, tart-book-2019}). For some recent results on multi-stream QCD, see \cite{xie2013sequential, wang2015large, fellouris2017multistream, xu2021optimum, xu2021multi, xu2022active, oleyaeimotlagh2023quickest}. 

In many applications of multi-stream QCD, the multi-stream data encountered is non-stationary. This means that the distribution of the data both before and after the change is time-varying. Also, due to the non-stationary nature of this data, the distributions are unknown and also impossible to learn from limited data. We give below two examples of such non-stationary data: 
\begin{enumerate}
    \item [(1)] Public health applications:
    In the post-COVID era, rapid detection of emerging pandemics is critical (\cite{liang2022quickest, hou2024robust}). 
    In Figure \ref{fig: covid infection}, we plot the daily infection counts for Alabama and Pennsylvania during the first 150 days starting from January 22, 2020. As illustrated, infection rates in different counties evolve over time, with some experiencing early surges that later stabilize, while others exhibit slower initial growth followed by rapid increases later on;

    \item [(2)] Aviation applications:
    A classical problem in aviation is detecting approaching aircraft or objects (\cite{brucks2023modeling}). The arrival of aircraft typically leads to a stochastically growing process. For instance, Figure \ref{fig: flight distance} displays the distance and signal measurements from aircraft trajectory datasets collected around Pittsburgh-Butler Regional Airport (\cite{Patrikar2021}). The trajectories show variations in the approach patterns, highlighting the need for accurate detection algorithms.
\end{enumerate}

\begin{figure}
	\centering
	\includegraphics[scale=0.5]{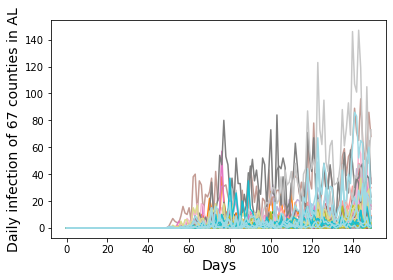}
	\includegraphics[scale=0.5]{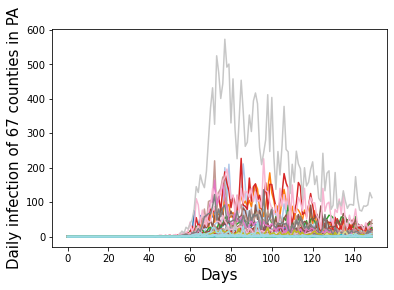}
    \caption{Daily infection rates for each county in Alabama (AL) and Pennsylvania (PA) state in the first $150$ days starting January 22, 2020. The number of infections over time (even beyond the dates shown here) has multiple cycles of high values and low values. Some counties have confirmed cases earlier than others.}
    \label{fig: covid infection}
\end{figure}

\begin{figure}
	\centering
	\includegraphics[scale=0.5]{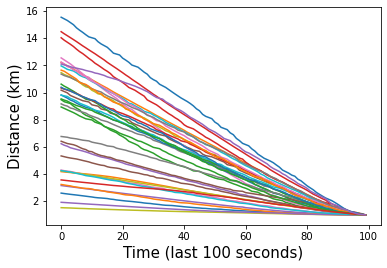}
	\includegraphics[scale=0.5]{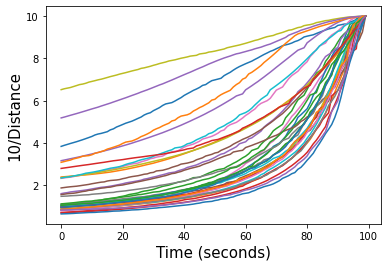}
    \caption{Distance measurements and corresponding signals extracted from a database on aircraft trajectories collected from aircraft around Pittsburgh-Butler Reginal Airport (\cite{Patrikar2021}).}
    \label{fig: flight distance}
\end{figure}

When the precise law for the non-stationary models before and after the change is known, the optimal algorithms for multi-stream QCD, for i.i.d. and non-i.i.d. settings, can be found in \cite{tart-book-2019, tart-niki-bass-2014}. When the pre-change law is known but the post-change law is unknown, optimal algorithms based on generalized likelihood ratios (GLR) or mixture are also available; see \cite{tart-book-2019, tart-niki-bass-2014}. However, 
GLR or mixture-based tests are computationally expensive to implement. 

In this paper, we take a robust approach to multi-stream quickest change detection leading to algorithms that are often computationally easy to implement in practice. We also show that our proposed algorithms are optimal for a robust problem formulation. The literature on robust QCD is reviewed in Section~\ref{sec:literature}. 

Our specific contributions are as follows:

\begin{enumerate}
    \item We first study the problem of robust QCD in a single stream where both pre- and post-change distributions are non-stationary and unknown. 
    We show that if there exists a pair of pre- and post-change non-stationary distributions, least favorable in a well-defined sense, then the test designed using this pair is robust optimal (see Section~\ref{sec: robust QCD model}). The least favorable distributions are identified using stochastic boundedness and monotone likelihood ratio conditions that are different from the standard conditions used in QCD literature (e.g., compare with \cite{unni-etal-ieeeit-2011}). Our conditions differ from the standard conditions in the literature because our post-change model depends on the change point. 
    \item We then study the multi-stream QCD problem with unknown non-stationary pre- and post-change distributions in each stream. In this setting, we obtain an optimal robust solution using the least favorable pair of distributions for each stream. We show that our algorithm can robustly detect the change when the change occurs in any arbitrary subset of the streams (see Section~\ref{sec: multi stream}). 
    \item We provide several analytical examples in which the least favorable pair of distributions can be identified (see Section~\ref{sec:LFDexamples}). 
    \item We apply our tests to real and simulated data and show that the robust tests can detect changes for all possible post-change scenarios (see Section~\ref{sec:NumericalResults}). For application to real datasets, we use COVID-19 infection rate data across different cities in the US to detect the onset of the pandemic. We also apply our algorithms to airline data collected around Pittsburgh airport for approaching aircraft. 
\end{enumerate}

\subsection{Related Works on QCD with Non-Stationary or Unknown Statistical Models}
\label{sec:literature}
In the QCD literature, optimal algorithms are developed when the post-change distribution is unknown under three categories: 1) generalized likelihood ratio (GLR) tests are developed where the unknown post-change parameter is replaced by its maximum likelihood estimate (\cite{lord-amstat-1971, lai-ieeetit-1998, tart-niki-bass-2014, lau2018binning, tart-book-2019}), 2) mixture-based tests are developed where a prior model is assumed on the post-change parameters and the likelihood ratio is integrated over this prior (\cite{lai-ieeetit-1998, poll-astat-1987, tart-book-2019, tart-niki-bass-2014}), and 3) robust tests are developed where the optimal tests are designed using a least favorable distribution (\cite{unni-etal-ieeeit-2011, oleyaeimotlagh2023quickest}). Of these three approaches, only the robust approach leads to statistics that can be calculated recursively.

In terms of non-stationary post-change models, optimal algorithms are developed under non-i.i.d. setting (\cite{lai-ieeetit-1998, tart-veer-siamtpa-2005, tart-niki-bass-2014, tart-book-2019, jain2020algorithms}), and under independent setting (\cite{liang2022quickest, brucks2023modeling}). While GLR and mixture approaches have been taken in these works, a result with a robust approach is not available in the literature. 

Exactly optimal solutions for unknown post-change distributions are developed in (\cite{unni-etal-ieeeit-2011}). However, their post-change model is i.i.d., and hence stationary. 
As for non-stationary settings, exactly optimal algorithms are developed in (\cite{bane-tit-2021}), but the post-change model is assumed to be known.
The post-change assumption can be relaxed to a least favorable distribution, and exact robust optimality for non-stationary processes is studied in (\cite{molloy2018minimax}). However, the notion of the least favorable distribution is stationary with time.
Both non-stationary and unknown distributions are considered in (\cite{oleyaeimotlagh2023quickest}), but the non-stationary model is limited to a statistically periodic model.
A broader (almost arbitrary) class of non-stationary processes is considered in (\cite{hou2024robust}), and exactly optimal tests in non-stationary and unknown distribution settings are developed. Nonetheless, the pre-change model is assumed to be known.

As discussed in the introduction, the QCD problem in multistream or distributed settings has been extensively studied. We refer to (\cite{tart-book-2019, tart-niki-bass-2014, veer-bane-elsevierbook-2013}) for a literature review and to (\cite{xie2013sequential, xu2022active, xu2021optimum, xu2021multi, fellouris2017multistream, wang2015large, oleyaeimotlagh2023quickest}) for some recent results. As with the single-stream case, while unknown and non-stationary post-change have been studied in the literature, a solution in the robust setting at the generality of our paper is not available in the literature. 

\section{A Classical QCD Model and Algorithm}
\label{sec:QCD model}
This section reviews a classical result in the QCD literature and discusses its limitations. 

In the QCD problem with i.i.d. observations, a decision maker observes a sequence of random variables $\{X_n\}$. Before a time $\nu$ (called the change point), the random variables are i.i.d. with a fixed density $f$, and after $\nu$, are i.i.d. with another density $g$:
\begin{equation}\label{eq:iidQCD}
	X_n \sim
	\begin{cases}
		f, &\quad \forall n < \nu, \\
		g, &\quad \forall n \geq \nu.
	\end{cases}
\end{equation}
Here $f$ and $g$ are densities such that
$$
D(g \; \| \; f) := \int g(x) \log \frac{g(x)}{f(x)} dx \; > \; 0,
$$
where $D(g\ \| \ f)$ is the Kullback-Leibler divergence between densities $g$ and $f$. 
The goal of the QCD problem is to detect this change in distribution from $f$ to $g$ with the minimum possible delay subject to a constraint on the rate of false alarms (\cite{poor-hadj-qcd-book-2009, veer-bane-elsevierbook-2013, tart-niki-bass-2014}). 

Studying this problem in the Bayesian setting would require us to assume a prior distribution on the change point $\nu$. Since such an assumption is not always practical, we focus on non-Bayesian settings in this paper. 
In the non-Bayesian settings, the change point $\nu$ is treated as an unknown constant (\cite{poll-astat-1985, lord-amstat-1971}). Let $\tau$ be a stopping time for the process $\{X_n\}$, i.e., a positive integer-valued random variable such that the event $\{\tau \leq n\}$ belongs to the $\sigma$-algebra generated by $X_1, \cdots, X_n$. In other words, whether or not $\tau \leq n$ is completely determined by the first $n$ observations. We consider this stopping time as our algorithm and declare that a change has occurred at the stopping time $\tau$. The detection delay of this stopping time 
can be measured using the conditional delay measures such as 
$$
\Expect_\nu \left[(\tau - \nu +1)^+\right] \quad \text{or} \quad \Expect_\nu \left[\tau - \nu | \tau \geq \nu \right], 
$$
where we use $\Prob_\nu$ to denote the law of the observation process $\{X_n\}$ when the change occurs at $\nu$. We use $\Expect_\nu$ to denote the corresponding expectation. In case of no change, notations $\Prob_\infty$ and $\Expect_\infty$ are used (suggesting a change occurring at $\nu=\infty$). However, these measures of delay displayed in the above equation depend on the unknown change point $\nu$. Thus, the popular measures for detection delay are minimax in nature. In the popular formulation of \cite{poll-astat-1985}, the minimax (minimizing the maximum of the conditional delay) metric considered is
$$
\sup_{\nu \geq 1} \; \Expect_\nu \left[\tau - \nu | \tau \geq \nu \right]. 
$$
However, this metric is not amenable to robust QCD theory in the generality we pursue in this paper. In this paper, we consider another popular minimax formulation proposed by \cite{lord-amstat-1971}:
\begin{equation}
\label{eq:LordenMetric}
    \textnormal{WADD}(\tau) \coloneqq \sup_\nu \; \esssup \; \Expect_\nu \left[(\tau - \nu +1)^+ | \mathscr{F}_{\nu-1}\right],
\end{equation}
where $\mathscr{F}_{\nu-1}$ is the $\sigma$-algebra generated by $X_1, \dots, X_{\nu-1}$, and $\esssup X$ represents the smallest constant $C$ such that $\Prob(X \leq C)=1$. Thus, the worst-case delay is taken over not just the change points but also over the past realizations. While this formulation may appear overly pessimistic due to the appearance of the essential supremum operation, it is the only metric for which strong optimality results have been developed in the literature in both minimax and robust settings (\cite{unni-etal-ieeeit-2011, hou2024robust}). 

The Lorden's formulation of the QCD problem is
\begin{equation}\label{eq:QCDproblem}
\begin{split}
    \inf_{\tau} \;\; 
    & \textnormal{WADD}(\tau),\\
    s.t. \;\; \
    & \Expect_{\infty}[\tau] \geq \frac{1}{\alpha},
\end{split}
\end{equation}
where $1/\alpha$ is a constraint on the mean time to a false alarm $\Expect_{\infty}[\tau]$, for $\alpha \in (0,1)$.
The optimal solution is given by the cumulative sum (CUSUM) algorithm  (\cite{page-biometrica-1954, mous-astat-1986, lai-ieeetit-1998}): 
\begin{equation}
    \label{eq:iidcusum}
    \begin{split}
            W_n &= \max_{1 \leq k \leq n} \sum_{i=k}^n \log \frac{g(X_i)}{f(X_i)},\\
            \tau_c &= \min\{n \geq 1: W_n \geq A\}.
    \end{split}
\end{equation}
To achieve optimality, the threshold $A$ in \eqref{eq:iidcusum} must be chosen carefully to meet a constraint on the false alarm with equality.
The CUSUM statistic $W_n$ also has an efficient recursion:
\begin{equation}
    W_n = \left(W_{n-1} + \log \frac{g(X_n)}{f(X_n)}\right)^+, \quad W_0=0.
\end{equation}
Deep insights can be obtained on the CUSUM algorithm through its asymptotic analysis. Since the CUSUM algorithm is optimal, it is also asymptotically optimal for the problem as $\alpha \to 0$. The following theorem provides the asymptotic performance of the CUSUM algorithm.  

\begin{theorem}
\label{thm:optimal them cusum}
    (Asymptotic optimality of CUSUM (\cite{lord-amstat-1971})):
    If $0 < D(g \ \| \ f) < \infty$, with $A = \log(1/\alpha)$,
    \begin{equation}
        \Expect_{\infty}[\tau_c] \geq \frac{1}{\alpha},
    \end{equation}
    and as $\alpha \to 0$,
    \begin{equation}
        \textnormal{WADD}(\tau_c) = \Expect_1[\tau_c - 1] = \frac{|\log \alpha|}{D(g \ \| \ f)} + o(1).
    \end{equation}
    Here the $o(1)$ term goes to zero as $\alpha \to 0$. 
    Because of the above equations (and can also be established directly), for any stopping time $\tau$ with $\Expect_{\infty}[\tau] \geq {1}/{\alpha}$,
    \begin{equation}
        \textnormal{WADD}(\tau) \geq \frac{|\log \alpha|}{D(g \ \| \ f)} + o(1), \quad \textnormal{ as } \alpha \to 0.
    \end{equation}
\end{theorem}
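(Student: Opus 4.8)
The plan is to prove the three assertions in turn: the false-alarm bound $\Expect_\infty[\tau_c] \geq 1/\alpha$, the delay asymptotics for $\tau_c$, and the matching universal lower bound.

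For the false-alarm bound I would exploit the regenerative structure of CUSUM together with an exponential martingale. Writing $Z_i = \log(g(X_i)/f(X_i))$ and $\Lambda_k^n = \sum_{i=k}^n Z_i$, the process $e^{\Lambda_k^n}$ is a nonnegative $\Prob_\infty$-martingale of unit mean, since $\Expect_\infty[g(X)/f(X)] = 1$. Because the statistic restarts whenever $W_n = 0$, and under $\Prob_\infty$ the walk has negative drift $-D(f\|g)$, the sample path decomposes into i.i.d.\ excursions away from $0$ on each of which $W_n$ coincides with the random walk $\Lambda$. Applying optional stopping (justified through Fatou's lemma, as each excursion terminates almost surely) to $e^{\Lambda}$ at the first time the walk returns to $0$ or reaches $A$ shows that the probability $p$ of crossing $A$ during a given excursion obeys $e^A p \leq 1$. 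The number of excursions before the first crossing is geometric with mean $1/p \geq e^A$, and since each excursion has length at least $1$, Wald's identity yields $\Expect_\infty[\tau_c] \geq e^A$; the choice $A = |\log\alpha|$ gives $\Expect_\infty[\tau_c] \geq 1/\alpha$.

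For the delay I would first reduce $\textnormal{WADD}(\tau_c)$ to a detection delay at $\nu = 1$. The decisive structural fact is that the recursion $W_n = (W_{n-1} + Z_n)^+$ is monotone in its initial value, so conditioning on any past history $\mathscr{F}_{\nu-1}$ can only raise $W_{\nu-1}$ above its reset value $0$, which can only shorten the subsequent delay; hence the worst case over $\nu$ and over the past is attained at $\nu = 1$ with $W_0 = 0$, which gives the expression $\Expect_1[\tau_c - 1]$ in the statement. The asymptotics then follow from renewal theory: under $\Prob_1$ every observation is drawn from $g$, so $\Lambda_1^n$ is a random walk with positive drift $\Expect_1[Z] = D(g\|f) > 0$, and by Wald's identity at $\tau_c$ (with the overshoot over $A$ bounded) one obtains $\Expect_1[\tau_c] = |\log\alpha|/D(g\|f) + O(1)$, whose leading term is exactly the claimed first-order expression.

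Finally, for the universal lower bound I would couple the false-alarm constraint to the delay through a change of measure. For any $\tau$ with $\Expect_\infty[\tau] \geq 1/\alpha$, I would compare $\Prob_1$ and $\Prob_\infty$ on the window $\{1 \leq \tau < m\}$: changing measures introduces the likelihood ratio $e^{\Lambda_1^\tau}$, while the constraint forces $\Prob_\infty(\tau < m)$ to be of order $m\alpha$. Combining this with the strong law for $\Lambda_1^n$ under $\Prob_1$ shows that whenever $m < (1-\epsilon)|\log\alpha|/D(g\|f)$ one has $\Prob_1(\tau < m) \to 0$, so the conditional expected delay cannot fall below $(1-\epsilon)|\log\alpha|/D(g\|f)$; letting $\epsilon \to 0$ produces the matching bound. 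I expect this last step to be the main obstacle, since the change-of-measure estimate must be made uniform enough to bound an expectation rather than a probability, and this is precisely where the essential-supremum form of $\textnormal{WADD}$ and careful control of the overshoot enter.
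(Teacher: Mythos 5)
First, a point of calibration: the paper does not prove this theorem at all --- it is quoted as a classical result of Lorden (1971) and used as background. So your proposal can only be compared against the standard proofs in the literature (Lorden's repeated--SPRT analysis, and the change-of-measure argument popularized by Lai), not against anything in this paper.

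Your first two steps are correct and are essentially those standard arguments. The false-alarm bound via regeneration plus the unit-mean martingale $e^{\Lambda}$ (Fatou giving $\Expect_\infty[e^{\Lambda_T}]\le 1$, hence per-excursion crossing probability $p\le e^{-A}$, then a geometric number of cycles each of length at least one) is sound. The reduction of $\textnormal{WADD}(\tau_c)$ to the case $\nu=1$, $W_0=0$ via monotonicity of the recursion $W_n=(W_{n-1}+Z_n)^+$ in its initial condition is also the right argument. Two small remarks: with the paper's definition of WADD using $(\tau-\nu+1)^+$, the worst case at $\nu=1$ is $\Expect_1[\tau_c]$, not $\Expect_1[\tau_c-1]$ (an off-by-one in the paper's own statement); and what your Wald/renewal step actually delivers is $\Expect_1[\tau_c]=|\log\alpha|/D(g\,\|\,f)+O(1)$, i.e.\ the first-order $(1+o(1))$ asymptotics, not the additive $o(1)$ written in the theorem --- the nonvanishing expected overshoot makes the additive-$o(1)$ form unattainable in general, and the first-order version is all that is ever used. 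You were right to write $O(1)$.

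The genuine gap is in the lower bound. The step ``the constraint forces $\Prob_\infty(\tau<m)$ to be of order $m\alpha$'' is false as stated: $\Expect_\infty[\tau]\ge 1/\alpha$ gives no upper bound on $\Prob_\infty(\tau<m)$, since Markov's inequality points the wrong way. For instance, a stopping time with $\tau=1$ with probability $1/2$ and $\tau=4/\alpha$ otherwise satisfies the ARL constraint, yet $\Prob_\infty(\tau<m)\ge 1/2$ for every $m\ge 2$; so a change of measure performed only at $\nu=1$ cannot close the argument. The repair is a pigeonhole over change points \emph{and} histories --- and this, rather than overshoot control, is exactly where the $\sup_\nu\esssup$ structure of WADD is indispensable. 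Concretely, write $D:=D(g\,\|\,f)$ and fix $m$. If for every $\nu$ and almost every history in $\{\tau\ge\nu\}$ one had $\Prob_\infty(\tau<\nu+m\mid\mathscr{F}_{\nu-1})>q$, then iterating over the blocks $\nu_j=1+jm$ (using that $\{\tau\ge\nu_j\}\in\mathscr{F}_{\nu_j-1}$) gives $\Prob_\infty(\tau\ge 1+km)\le(1-q)^k$ and hence $\Expect_\infty[\tau]\le m/q$. Taking $q=2m\alpha$ contradicts the constraint, so there exist some $\nu$ and a positive-probability set of histories in $\{\tau\ge\nu\}$ on which $\Prob_\infty(\tau<\nu+m\mid\mathscr{F}_{\nu-1})\le 2m\alpha$: your ``order $m\alpha$'' estimate is correct, but it holds conditionally at some unknown $\nu$, not unconditionally at $\nu=1$. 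On that set you then run your change-of-measure bound conditionally,
\begin{equation*}
\Prob_\nu\!\left(\tau<\nu+m\mid\mathscr{F}_{\nu-1}\right)
\;\le\; e^{(1+\delta)Dm}\,\Prob_\infty\!\left(\tau<\nu+m\mid\mathscr{F}_{\nu-1}\right)
\;+\;\Prob_\nu\!\left(\max_{t\le m}\Lambda_\nu^{\nu+t}>(1+\delta)Dm \;\Big|\; \mathscr{F}_{\nu-1}\right),
\end{equation*}
where with $m=(1-\epsilon)|\log\alpha|/D$ and $\delta<\epsilon/(1-\epsilon)$ the first term is at most $2m\,\alpha^{\,1-(1+\delta)(1-\epsilon)}\to 0$, and the second term vanishes by the SLLN under $g$ --- uniformly in $\nu$ and in the history, because conditional on $\mathscr{F}_{\nu-1}$ the post-change observations are i.i.d.\ $g$, so this probability is a deterministic function of $m$ alone. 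This yields $\esssup\Expect_\nu[(\tau-\nu+1)^+\mid\mathscr{F}_{\nu-1}]\ge m(1-o(1))$ at that $\nu$, hence $\textnormal{WADD}(\tau)\ge (1-\epsilon)\frac{|\log\alpha|}{D}(1-o(1))$, and letting $\epsilon\to 0$ finishes the proof. With this replacement of your $\nu=1$ step, the outline becomes a complete and correct proof.
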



\section{Robust Single-Stream QCD with Unknown Non-Stationary Pre- and Post-Change Distributions}
\label{sec: robust QCD model}

In this section, we present our results on robust QCD in single stream data with unknown non-stationary pre- and post-change distributions. The results of the more general multi-stream case are discussed in Section~\ref{sec: multi stream}. While the result of this section on the single-stream case is a special case of the multi-stream result, we present it here separately to emphasize the proof technique and the role of different assumptions. Also, it is notationally easier to discuss examples and applications in the single stream case.  

\subsection{Change Point Model and Notations for Non-Stationary Data}
The classical QCD model (see \eqref{eq:iidQCD}) assumes that the observations $\{X_n\}$ are i.i.d. with fixed density $f$ before a change time $\nu$, and i.i.d. with density $g$ after $\nu$. 
The change point model that we study  is as follows: 
\begin{equation}\label{eq:changepointmodel}
	X_{n} \sim
	\begin{cases}
		f_{n}, &\quad \forall n < \nu, \\
		g_{n, \nu}, &\quad \forall n \geq \nu.
	\end{cases}
\end{equation}
The random variables $\{X_n\}$ are independent conditioned on the change point $\nu$. 
Thus, before the change time $\nu$, the process follows a sequence of densities $\{f_{n}\}$. After time $\nu$, the process changes to a sequence of densities $\{g_{n, \nu}\}$. Both the pre- and post-change distribution may depend on the observation time $n$. Besides, the post-change distribution may depend on the change point $\nu$. The densities are assumed to satisfy
\begin{equation}
\nonumber
	D(g_{n, \nu} \; \| \; f_{n}) > 0, \quad \forall n \geq 1, \;   \nu \geq 1, \; n \geq \nu. 
\end{equation}
We will refer to a process with pre-change densities $\{f_{n}\}$ and post-change densities $\{g_{n, \nu}\}$ as a non-stationary process since the density of the data evolves or changes with time. 

We use $F = \{f_{n}\}$ and $G = \{g_{n, \nu}\}$ to denote the pre- and post-change sequence of densities, respectively. In our model, $F=\{f_{n}\}$ and $G=\{g_{n, \nu}\}$ are not known to the decision-maker. However, we assume that there are families of distributions $\{\mathcal{P}^{F}_{n}\}$ and $\{ \mathcal{P}^{G}_{n, \nu}\}$ such that
\begin{equation*}
    (f_{n}, g_{n, \nu}) \in \mpp^{F}_{n} \times \mathcal{P}^{G}_{n, \nu}, \quad n, \nu = 1,2, \dots, 
\end{equation*}
and the families $\{\mathcal{P}^{F}_{n}\}$ and $\{ \mathcal{P}^{G}_{n, \nu}\}$ are known to the decision-maker. 
We use the notation 
\begin{equation*}
\mf = \{F = \{f_{n}\}: f_{n} \in \mpp^{F}_{n}, \; n \geq 1\}
\end{equation*}
to denote the set of all possible pre-change laws, and
\begin{equation*}
    \mathcal{G} = \{G = \{g_{n, \nu}\}: g_{n, \nu} \in \mathcal{P}^{G}_{n, \nu}, \; n, \nu \geq 1\}
\end{equation*}
to denote the set of all possible post-change laws. 
We use $\Prob_{\nu}^{F, G}$ to denote the law of the observation process $\{X_{n}\}$ when the change occurs at $\nu$ and the pre-change and post-change laws are given by $F$ and $G$, respectively. We use $\Expect_{\nu}^{F, G}$ to denote the corresponding expectation. 
The notations $\Prob_{\infty}^{F, G}=\Prob^{F}_{\infty}$ and $\Expect_{\infty}^{F, G}=\Expect^{F}_{\infty}$ are used when there is no change (change occurring at $\nu=\infty$), as the post-change law plays no role when the change never occurs. 

\subsection{Problem Formulation for Single-Stream Robust QCD}
In the following, we use Lorden's criteria (\cite{lord-amstat-1971}) to find the best stopping time to detect the change in distribution in non-stationary processes in a robust manner. We define 
\begin{equation}
\label{eqn:wadd definition}
    \text{WADD}^{F, G}(\tau) \coloneqq \sup_\nu \; \esssup \; \Expect_\nu^{F, G}\left[(\tau - \nu + 1)^+ | \mathscr{F}_{\nu-1}\right],
\end{equation}
where $\mathscr{F}_{n} = \sigma(X_{1}, \dots, X_{n})$. This is Lorden's delay metric when the pre-change law is $F$ and the post-change law is $G$. 
Since the decision maker is unaware of the exact pre- and post-change laws, we seek a stopping time to minimize the following worst-case version of the detection delay:
\begin{equation}\label{eq:robustProbmini}
\begin{split}
    \inf_{\tau}  \sup_{F \in \mf, G \in \mg}   
    &\text{WADD}^{F, G}(\tau)\\
 s.t. \quad &\inf_{F \in \mf}\Expect^{F}_\infty[\tau] \geq \frac{1}{\alpha}.
\end{split}
\end{equation}
{We say that a solution is robust optimal if a stopping time or algorithm is a solution to the problem in \eqref{eq:robustProbmini}}. 

\subsection{Robust Algorithm for QCD in Non-Stationary Data}
 We now provide the optimal or asymptotically optimal solution to \eqref{eq:robustProbmini} under assumptions on the families of pre-change uncertainty classes $\{\mathcal{P}^{F}_{n}\}$ and post-change uncertainty classes $\{\mathcal{P}^{G}_{n, \nu}\}$. 
We assume in the rest of this section that all densities involved are equivalent to each other (absolutely continuous with respect to each other). 

To state the assumptions on $\{\mpp^{F}_n\}$ and $\{\mathcal{P}^{G}_{n, \nu}\}$, we need some definitions. We say that a random variable $Z_2$ is stochastically larger than another random variable $Z_1$ (and write $
Z_2 \succ Z_1
$) if
$$
\Prob(Z_2 \geq t) \geq \Prob(Z_1 \geq t), \quad \forall t \in \mathbb{R}.
$$
We now introduce the notion of stochastic boundedness in non-stationary processes. 

\begin{definition}[Stochastic Boundedness and Least Favorable Laws (LFLs) in Non-Stationary Processes]
\label{def1}
We say that the pair of families ($\{\mathcal{P}^{F}_{n} \}, \{\mathcal{P}^{G}_{n, \nu}\}) $ are stochastically bounded by least favorable laws (LFLs) $(\bar{F}, \bar{G}) = (\{\bar{f}_{n}\}, \{\bar{g}_{n, \nu}\})$, with $\bar{f}_{n} \in \mathcal{P}^{F}_{n}$ and 
$
\bar{g}_{n, \nu} \in \mathcal{P}^{G}_{n, \nu}, \; n, \nu=1,2, \dots,
$
or that the pair $(\bar{F}, \bar{G})$ is least favorable if 
\begin{equation}\label{eq:stocbounded-1}
		\begin{split}
			 \bar f_{n}
            &\succ  {f}_{n}, \quad \text{ for every } \; f_{n} \in \mathcal{P}^{F}_{n},\quad n = 1,2, \dots, 
		\end{split}
	\end{equation}
and 
\begin{equation}\label{eq:stocbounded-2}
        \begin{split}
			 {g_{n, \nu}} 
            &\succ 
			\bar{g}_{n, \nu}, \quad \text{ for every } \; g_{n, \nu} \in \mathcal{P}^{G}_{n, \nu}, \quad n,\nu=1,2, \dots.       
		\end{split}
    \end{equation}
\end{definition}

\begin{remark}
    We note that the conditions above differ from those used, e.g., in \cite{unni-etal-ieeeit-2011} and \cite{oleyaeimotlagh2023quickest} because in our case the post-change law depends on the change point. 
\end{remark}

Given a pair of LFLs, we use the following algorithm for robust QCD: 
\begin{align} \label{eq:W_t_n}
     \bar{W}_{n} = \max_{1 \leq k \leq n} \sum_{i=k}^n \log \frac{\bar{g}_{i,k}(X_{i})}{\bar f_{i}(X_{i})},
\end{align}
with stopping rule
\begin{align} \label{eq:tau_t_star}
    \tau_{ss} = \inf\{n \geq 1 : \bar W_{n} \geq \bar A_{n, \alpha}\}.
\end{align}
Here thresholds $ \bar{A}_{n,\alpha}$ are chosen to satisfy the constraint on the mean time to a false alarm. The statistic $\bar{W}_{n}$ is the classical CUSUM statistic \eqref{eq:iidcusum} modified for the non-stationary setting and based on the LFLs. 

The optimality of $\tau_{ss}$ when the true pre- and post-change laws of $\{X_n\}$ are given by the LFLs has been studied in \cite{lai-ieeetit-1998, tart-niki-bass-2014, tart-book-2019, liang2022quickest, brucks2023modeling, bane-tit-2021, oleyaeimotlagh2023quickest, bane-elsevier-2024}.  In these works, it has been shown that under mild assumptions on the non-stationary models $\bar{F}$ and $\bar{G}$,  the stopping rule $\tau_{ss}$ is asymptotically optimal (as $\alpha \to 0$) when the pre- and post-change laws are also given by $\bar{F}$ and $\bar{G}$, respectively. In Section~\ref{sec:designSingleStream} below, we provide a brief discussion on some standard sufficient conditions on optimality.  

\subsection{Robust Optimality in a Non-Stationary Setting}
In this section, we prove the robust optimality of $\tau_{ss}$ for the problem in \eqref{eq:robustProbmini}. 
Specifically, we show in  Theorem~\ref{thm:LFLRobust_minimax} that if the stopping rule $\tau_{ss}$ is optimal for \eqref{eq:QCDproblem}
when the pre- and post-change laws are $\bar F$ and $\bar G$, respectively, then it is robust optimal for the problem in \eqref{eq:robustProbmini}. For the theorem, we need the following lemma from (\cite{unni-etal-ieeeit-2011}), which we reproduce here for readability.

\begin{lemma}[Stochastic dominance (\cite{unni-etal-ieeeit-2011})]
\label{lem:stocbound_UV}
    Suppose $\{U_i: 1 \leq i \leq n\}$ is a set of mutually independent random variables, and $\{V_i: 1 \leq i \leq n\}$ is another set of mutually independent random variables such that $U_i \succ V_i$, $1 \leq i \leq n$. Now let $q: \mathbb{R}^n \to \mathbb{R}$ be a continuous real-valued function defined on $\mathbb{R}^n$ that satisfies
    \begin{align*}
        q(X_1, \dots, X_{i-1}, a, X_{i+1}, \dots, X_{n}) \geq q(X_1, \dots, X_{i-1}, X_i, X_{i+1}, \dots, X_{n})
    \end{align*}
    for all $(X_1, \dots, X_n) \in \mathbb{R}^n$, $a > X_i$, and $i \in \{1, \dots, n\}$. Then we have 
    \begin{align*}
        q(U_1, U_2, \dots, U_n) \succ q(V_1, V_2, \dots, V_n).
    \end{align*}
\end{lemma}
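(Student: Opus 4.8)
The plan is to prove the lemma by a monotone coupling argument, reducing the statement about stochastic dominance to an almost-sure inequality that follows from the coordinatewise monotonicity of $q$. First I would recall the quantile characterization of stochastic dominance: writing $F_{U_i}(t) = \Prob(U_i \le t)$ and $F_{U_i}^{-1}(u) = \inf\{t : F_{U_i}(t) \ge u\}$ for the CDF and its generalized inverse, the hypothesis $U_i \succ V_i$ is equivalent to $F_{U_i}(t) \le F_{V_i}(t)$ for all $t$, which in turn is equivalent to $F_{U_i}^{-1}(u) \ge F_{V_i}^{-1}(u)$ for every $u \in (0,1)$.

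Next I would construct the coupling. Let $W_1, \dots, W_n$ be independent random variables, each uniform on $(0,1)$, on a common probability space, and set
\begin{equation*}
\tilde U_i = F_{U_i}^{-1}(W_i), \qquad \tilde V_i = F_{V_i}^{-1}(W_i), \qquad i = 1, \dots, n.
\end{equation*}
By the inverse-transform property, $\tilde U_i \overset{d}{=} U_i$ and $\tilde V_i \overset{d}{=} V_i$. Because the $W_i$ are mutually independent, the vectors $(\tilde U_1, \dots, \tilde U_n)$ and $(\tilde V_1, \dots, \tilde V_n)$ have product laws, and hence share the joint distributions of $(U_1, \dots, U_n)$ and $(V_1, \dots, V_n)$ respectively (this is where the mutual independence of each original family is used). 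Using the same $W_i$ for both $\tilde U_i$ and $\tilde V_i$ together with the quantile inequality above yields $\tilde U_i \ge \tilde V_i$ almost surely for each $i$.

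Then I would transfer the domination through $q$ one coordinate at a time. Define the hybrid vectors $\xi^{(0)} = (\tilde V_1, \dots, \tilde V_n)$ and, for $1 \le j \le n$, $\xi^{(j)} = (\tilde U_1, \dots, \tilde U_j, \tilde V_{j+1}, \dots, \tilde V_n)$, so that $\xi^{(n)} = (\tilde U_1, \dots, \tilde U_n)$. Passing from $\xi^{(j-1)}$ to $\xi^{(j)}$ replaces the $j$-th coordinate from $\tilde V_j$ to $\tilde U_j \ge \tilde V_j$; when the inequality is strict the monotonicity hypothesis on $q$ gives $q(\xi^{(j)}) \ge q(\xi^{(j-1)})$, and when it is an equality the two vectors coincide and the inequality is trivial. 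Telescoping over $j$ gives $q(\tilde U_1, \dots, \tilde U_n) \ge q(\tilde V_1, \dots, \tilde V_n)$ almost surely. Finally, for any $t \in \mathbb{R}$ the event $\{q(\tilde V_1,\dots,\tilde V_n) \ge t\}$ is contained in $\{q(\tilde U_1,\dots,\tilde U_n) \ge t\}$ on this almost-sure event, so comparing probabilities and matching the joint laws back to the originals yields $\Prob(q(U_1,\dots,U_n) \ge t) \ge \Prob(q(V_1,\dots,V_n) \ge t)$ for all $t$, which is precisely $q(U_1, \dots, U_n) \succ q(V_1, \dots, V_n)$.

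The step I expect to require the most care is the bookkeeping in the coupling, where two demands must hold simultaneously: a common $W_i$ within each index $i$ to secure the pointwise domination $\tilde U_i \ge \tilde V_i$, and independence of the $W_i$ across $i$ to reproduce the product structure of the original families. I would also note that the continuity of $q$ enters only to guarantee that $q(U_1,\dots,U_n)$ and $q(V_1,\dots,V_n)$ are genuine Borel-measurable random variables; it is the monotonicity, not the continuity, that does the analytic work in the telescoping step.
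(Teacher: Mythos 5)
Your proof is correct. Note that the paper does not actually prove this lemma -- it reproduces the statement verbatim and defers to the cited reference \cite{unni-etal-ieeeit-2011}, so there is no in-paper argument to compare against; your quantile-coupling construction is the standard proof of this fact (essentially the classical Lehmann-style coupling that the cited work itself relies on). The two points that usually trip people up are handled cleanly: you use a common uniform $W_i$ within each index to get the almost-sure domination $\tilde U_i \ge \tilde V_i$ while keeping the $W_i$ independent across $i$ to preserve the product structure, and you correctly dispose of the equality case $\tilde U_j = \tilde V_j$ in the telescoping step, which matters because the lemma's monotonicity hypothesis is stated only for strict increments $a > X_i$. Your closing remark is also accurate: continuity of $q$ is needed only for measurability of $q(U_1,\dots,U_n)$ and $q(V_1,\dots,V_n)$; the monotonicity carries the argument.
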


\begin{theorem}
\label{thm:LFLRobust_minimax}
Suppose the following conditions hold:
    \begin{enumerate}
	\item[(a)] 	The pair of families ($\{\mathcal{P}^{F}_{n} \}, \{\mathcal{P}^{G}_{n, \nu}\}) $ have a pair of LFLs $(\bar{F}, \bar{G}) = (\{\bar{f}_{n}\}, \{\bar{g}_{n, \nu}\})$ as defined in Definition \ref{def1}. 
	\item[(b)] Let $\alpha \in (0,1)$ be a constraint on the false alarm. Let the thresholds $\{A_{n, \alpha}\}$ used by the stopping rule $\tau_{ss}$ be such that $\Expect^{\bar{F}}_\infty[\tau_{ss}] = {1}/{\alpha}$, where $\tau_{ss}$ in \eqref{eq:tau_t_star} is the optimal rule designed using the LFL.
	\item[(c)] All likelihood ratio functions used in the stopping rule $\tau_{ss}$ are continuous and monotone increasing.
  \end{enumerate}
Then the following results are true:
    \begin{enumerate}
    \item If the stopping rule $\tau_{ss}$ in \eqref{eq:tau_t_star} is exactly optimal for the problem in  \eqref{eq:QCDproblem} when the pre- and post-change laws are given by the LFLs $\bar{F}$ and  $\bar{G}$, respectively, then the stopping rule is exactly robust optimal for the problem in \eqref{eq:robustProbmini}.
    \item If the stopping rule $\tau_{ss}$ in \eqref{eq:tau_t_star} is asymptotically optimal, as $\alpha \to 0$,  for the problem in  \eqref{eq:QCDproblem} when the pre- and post-change laws are given by the LFLs $\bar{F}$ and  $\bar{G}$, respectively, then the stopping rule is asymptotically robust optimal for the problem in \eqref{eq:robustProbmini}, as $\alpha \to 0$. 
    \end{enumerate}
	
\end{theorem}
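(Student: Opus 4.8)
The plan is to show that the pair of LFLs $(\bar F,\bar G)$ forms a saddle point for the robust problem: under $\tau_{ss}$, the LFLs simultaneously \emph{minimize} the mean time to false alarm and \emph{maximize} the worst-case detection delay over the uncertainty classes $\mf,\mg$. Once this is established, the robust problem \eqref{eq:robustProbmini} collapses to the nominal problem \eqref{eq:QCDproblem} under $(\bar F,\bar G)$, and the assumed (exact or asymptotic) optimality of $\tau_{ss}$ there closes the argument by a sandwich inequality. The engine for the worst-case identification is a monotonicity property of the detector: by condition (c) each likelihood ratio $x\mapsto \bar g_{i,k}(x)/\bar f_i(x)$ is continuous and nondecreasing, so each summand $\log\bigl(\bar g_{i,k}(X_i)/\bar f_i(X_i)\bigr)$ is continuous and nondecreasing in $X_i$; summing over $i$ and maximizing over $k$ preserves both properties, so for every $n$ the functional $M_n\coloneqq\max_{1\le m\le n}\bigl(\bar W_m-\bar A_{m,\alpha}\bigr)$ is a continuous, coordinatewise nondecreasing function of $(X_1,\dots,X_n)$. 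Since $\{\tau_{ss}>n\}=\{M_n<0\}$, raising any observation can only advance the stopping time. I would work with the continuous statistic $M_n$ rather than the integer-valued $\tau_{ss}$ precisely because Lemma~\ref{lem:stocbound_UV} requires continuity of its functional.

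On the false-alarm side, under $\Prob^{F}_\infty$ the data satisfy $X_n\sim f_n$ while under $\Prob^{\bar F}_\infty$ they satisfy $X_n\sim\bar f_n$, and $\bar f_n\succ f_n$ by \eqref{eq:stocbounded-1}. Applying Lemma~\ref{lem:stocbound_UV} to $M_n$ shows $M_n$ is stochastically larger under $\bar F$, hence $\Prob^{\bar F}_\infty(\tau_{ss}>n)\le\Prob^{F}_\infty(\tau_{ss}>n)$ for every $n$; summing via $\Expect_\infty[\tau_{ss}]=\sum_{n\ge 0}\Prob_\infty(\tau_{ss}>n)$ gives $\Expect^{\bar F}_\infty[\tau_{ss}]\le\Expect^{F}_\infty[\tau_{ss}]$ for all $F\in\mf$. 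With (b) this yields $\inf_{F\in\mf}\Expect^{F}_\infty[\tau_{ss}]=\Expect^{\bar F}_\infty[\tau_{ss}]=1/\alpha$, so $\tau_{ss}$ is feasible for \eqref{eq:robustProbmini}. For the delay side, fix $\nu$ and a realization of the past $X_1,\dots,X_{\nu-1}$. Conditioned on this, the future observations are independent with $X_n\sim g_{n,\nu}$ (resp.\ $\bar g_{n,\nu}$), and $g_{n,\nu}\succ\bar g_{n,\nu}$ by \eqref{eq:stocbounded-2}. Treating the fixed past coordinates as constants, $M_n$ is still continuous and coordinatewise nondecreasing in $(X_\nu,\dots,X_n)$, so Lemma~\ref{lem:stocbound_UV} gives $\Prob^{\bar F,G}_\nu(\tau_{ss}>n\mid\mathscr F_{\nu-1})\le\Prob^{\bar F,\bar G}_\nu(\tau_{ss}>n\mid\mathscr F_{\nu-1})$ and therefore $\Expect^{\bar F,G}_\nu[(\tau_{ss}-\nu+1)^+\mid\mathscr F_{\nu-1}]\le\Expect^{\bar F,\bar G}_\nu[(\tau_{ss}-\nu+1)^+\mid\mathscr F_{\nu-1}]$ pathwise.

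Because all densities are assumed equivalent, the pre-change law $F$ does not alter the null sets governing $\mathscr F_{\nu-1}$, so the $\esssup$ over the past in \eqref{eqn:wadd definition} is invariant to the choice of $F$; consequently $\textnormal{WADD}^{F,G}(\tau_{ss})=\textnormal{WADD}^{\bar F,G}(\tau_{ss})$, and the pathwise bound above, after taking $\esssup$ over $\mathscr F_{\nu-1}$ and $\sup_\nu$, gives $\textnormal{WADD}^{F,G}(\tau_{ss})\le\textnormal{WADD}^{\bar F,\bar G}(\tau_{ss})$ for every $F\in\mf,\,G\in\mg$, with equality at $(\bar F,\bar G)$. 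Hence $\sup_{F\in\mf,G\in\mg}\textnormal{WADD}^{F,G}(\tau_{ss})=\textnormal{WADD}^{\bar F,\bar G}(\tau_{ss})$. For the closing sandwich, any $\tau$ feasible for \eqref{eq:robustProbmini} satisfies $\Expect^{\bar F}_\infty[\tau]\ge 1/\alpha$ (since $\bar F\in\mf$), hence is feasible for the nominal LFL problem \eqref{eq:QCDproblem}. In the exact case, optimality of $\tau_{ss}$ under $(\bar F,\bar G)$ gives $\textnormal{WADD}^{\bar F,\bar G}(\tau)\ge\textnormal{WADD}^{\bar F,\bar G}(\tau_{ss})$; chaining this with $\sup_{F,G}\textnormal{WADD}^{F,G}(\tau)\ge\textnormal{WADD}^{\bar F,\bar G}(\tau)$ and the worst-case identity $\textnormal{WADD}^{\bar F,\bar G}(\tau_{ss})=\sup_{F,G}\textnormal{WADD}^{F,G}(\tau_{ss})$ yields $\sup_{F,G}\textnormal{WADD}^{F,G}(\tau)\ge\sup_{F,G}\textnormal{WADD}^{F,G}(\tau_{ss})$, which is exact robust optimality. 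The asymptotic case is identical, with the optimality inequality replaced by the $(1+o(1))$ relation as $\alpha\to 0$.

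The part I expect to be most delicate is making the invocation of Lemma~\ref{lem:stocbound_UV} fully rigorous: verifying the coordinatewise monotonicity of the running-maximum statistic $\bar W_n$ (and of $M_n$), then passing correctly from the \emph{finite-horizon, continuous} functional $M_n$ to a statement about the integer-valued stopping time $\tau_{ss}$ and finally to the infinite-horizon expectation through the tail-sum representation of $\Expect[\tau_{ss}]$ and monotone summation of the per-$n$ inequalities. A second subtle point, easy to overlook, is that the $\esssup$ in $\textnormal{WADD}$ is invariant to the pre-change law $F$ \emph{only} because all densities are mutually equivalent and therefore share the same conditioning/null sets; without this equivalence the decoupling of the delay worst-case from the pre-change uncertainty would fail.
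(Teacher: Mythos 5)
Your proof is correct and follows essentially the same route as the paper's: the same two-part structure (feasibility via Lemma~\ref{lem:stocbound_UV} applied with $\bar f_n \succ f_n$, delay via a conditional application of the same lemma with $g_{n,\nu} \succ \bar g_{n,\nu}$, then the identical sandwich argument using $\bar F \in \mf$ for feasibility transfer), with your running-maximum statistic $M_n$ playing the role of the paper's function $h$. The only cosmetic differences are that you use tail-sum representations where the paper splits into $N=0$ and $N\geq 1$ event identities, and you absorb the paper's step $(1b)$ --- that the conditional delay, as a function of the past, does not depend on the pre-change law --- into your $\esssup$-invariance remark, both of which rest on the same standing assumption that all densities are mutually equivalent.
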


\begin{proof}
See Appendix \ref{appen:1}.
\end{proof}

\subsection{On Optimality of the Robust Algorithm under LFLs}
\label{sec:designSingleStream}
One of the main conditions for robust optimality of the stopping rule $\tau_{ss}$ in \eqref{eq:tau_t_star} is that it be optimal or asymptotically optimal when the pre- and post-change laws are given by the LFLs. In this section, we briefly discuss conditions under which such an optimality can be attained. 


In the theorem below, we provide some general conditions on the asymptotic optimality of $\tau_{ss}$ under the LFLs. For exact optimality in a non-stationary setting, we refer the readers to \cite{bane-tit-2021}. For conditions on asymptotic optimality more general than those given in the theorem below, we refer the readers to \cite{lai-ieeetit-1998, tart-veer-siamtpa-2005, tart-niki-bass-2014, tart-book-2019, liang2022quickest}. 

Define
$$
\bar Z_{n, \nu} = \log \frac{\bar{g}_{n,\nu}(X_n)}{\bar f_n(X_n)}
$$
to be the log-likelihood ratio at time $n$ calculated for the LFLs when the change occurs at $\nu$. The theorem below gives sufficient conditions on the sequence $\{\bar Z_{n, \nu} \}_{n, \nu}$ for $\tau_{ss}$ to be optimal under the LFLs.
\begin{theorem}[\cite{brucks2023modeling},\cite{lai-ieeetit-1998}]
	\label{thm:modifiedconds}
	\begin{enumerate}
		\item Let there exist a positive number $I$ such that the log-likelihood ratios $\{\bar Z_{n, \nu}\}$ satisfy the following  condition: for any $\delta > 0$, 
		\begin{equation}
			\label{eq:Znnu_LB}
			\begin{split}
				\lim_{n \to \infty} \; \sup_{\nu \geq 1} \; \esssup \mathsf{P}_\nu^{\bar F, \bar{G}} &\left(\max_{t \leq n} \sum_{i = \nu }^{\nu + t} \bar Z_{i, \nu} \geq I(1+\delta)n \; \bigg| \; X_1, \dots, X_{\nu-1}\right) = 0.
			\end{split}
		\end{equation}
		Then, we have the universal lower bound as $\alpha \to 0$, 
		\begin{equation}
			\begin{split}
				\inf_{\tau} \textup{WADD}^{\bar F, \bar{G}}(\tau) \; \geq \; \frac{|\log \alpha|}{I} (1+o(1)). 
			\end{split}
		\end{equation}
		Here the minimum over $\tau$ is over those stopping times satisfying $\Expect_\infty^{\bar{F}}[\tau] \geq \frac{1}{\alpha}$. 
		\item The robust algorithm $\tau_{ss}$ in \eqref{eq:tau_t_star}, when we choose the threshold $\bar A_{n, \alpha} = |\log \alpha|$,
		$$
		\tau_{ss} = \min\left\{n \geq 1: \max_{1 \leq k \leq n} \sum_{i=k}^n \bar Z_{i,k} \geq |\log \alpha|\right\},
		$$
		satisfies
		$$
		\mathsf{E}_\infty^{\bar F}[\tau_{ss}] \geq \frac{1}{\alpha}.
		$$
		\item Furthermore, if the log-likelihood ratios $\{\bar Z_{n, \nu}\}$ also satisfy, $\forall \delta > 0$, 
		\begin{equation}
			\label{eq:Znnu_UB}
			\begin{split}
				\lim_{n \to \infty} \; \sup_{k \geq \nu \geq 1} \; \esssup \mathsf{P}_\nu^{\bar F, \bar{G}} & \left(\frac{1}{n}\sum_{i = k }^{k +n} \bar Z_{i,k} \leq I - \delta \; \bigg| \; X_1, \dots, X_{k-1}\right) = 0.
			\end{split}
		\end{equation}
		Then as $\alpha \to 0$, $\tau_{ss}$ achieves the lower bound:
		\begin{equation}
			\begin{split}
				\textup{WADD}^{\bar F, \bar{G}}(\tau_{ss})  \leq  \frac{|\log \alpha |}{I}(1+o(1)), \quad \alpha \to 0. 
			\end{split}
		\end{equation}
	\end{enumerate}
\end{theorem}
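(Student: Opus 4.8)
The plan is to treat Theorem~\ref{thm:modifiedconds} as the non-stationary adaptation of classical asymptotic CUSUM theory (Page, Lorden, Lai), in which the constant $I$ plays exactly the role that the Kullback--Leibler divergence $D(g\|f)$ plays in Theorem~\ref{thm:optimal them cusum}. I would establish Part~2 (false-alarm control) first, since it is essentially self-contained, then Part~1 (the universal lower bound on the delay), and finally Part~3 (the matching achievability bound for $\tau_{ss}$). Two nonasymptotic ingredients drive everything: the change-of-measure (Wald likelihood-ratio) identity relating $\mathsf{P}_\nu^{\bar F,\bar G}$ to $\mathsf{P}_\infty^{\bar F}$ through the log-likelihood ratios $\bar Z_{i,\nu}$, and the two uniform growth-rate hypotheses \eqref{eq:Znnu_LB} and \eqref{eq:Znnu_UB}. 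For Part~2 I would use the standard martingale argument: for each fixed candidate change time $k$, the exponentiated partial sum $\exp\bigl(\sum\nolimits_{i=k}^{n}\bar Z_{i,k}\bigr)$ is a nonnegative $\mathsf{P}_\infty^{\bar F}$-martingale in $n\ge k$ with unit mean, because $\mathsf{E}_\infty^{\bar F}[\bar g_{i,k}(X_i)/\bar f_i(X_i)\mid\mathscr F_{i-1}]=1$. Representing the CUSUM rule \eqref{eq:W_t_n} as a family of one-sided sequential probability ratio tests started at each $k$, each such test crosses level $A=|\log\alpha|$ under $\mathsf{P}_\infty^{\bar F}$ with probability at most $e^{-A}=\alpha$, and Lorden's inequality (or a segmentwise application of Doob's maximal inequality) then yields $\mathsf{E}_\infty^{\bar F}[\tau_{ss}]\ge e^{A}=1/\alpha$.

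For the converse in Part~1, I would follow Lai's lower-bound technique. Fix $\delta>0$ and set $m=m_\alpha=(1-\delta)|\log\alpha|/I$. For any $\tau$ meeting the false-alarm constraint, change measure on the event $\{\nu\le\tau<\nu+m\}$ via
\[
\mathsf{P}_\nu^{\bar F,\bar G}\bigl(\tau<\nu+m \mid \mathscr F_{\nu-1}\bigr)=\mathsf{E}_\infty^{\bar F}\Bigl[\mathbbm 1\{\tau<\nu+m\}\,\exp\bigl(\sum\nolimits_{i=\nu}^{\tau}\bar Z_{i,\nu}\bigr)\,\Big|\,\mathscr F_{\nu-1}\Bigr],
\]
and split according to whether the partial-sum maximum $\max_{t\le m}\sum_{i=\nu}^{\nu+t}\bar Z_{i,\nu}$ exceeds $I(1+\delta)m$. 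On the complement the exponential is at most $e^{I(1+\delta)m}$ while $\mathsf{P}_\infty^{\bar F}(\tau<\nu+m)$ is controlled through the constraint, and on the exceedance event condition \eqref{eq:Znnu_LB} forces the probability to vanish uniformly over $\nu$ and the past. A Markov-type estimate then shows that the worst-case conditional probability of stopping within $m$ steps after the change tends to $0$, so $\mathrm{WADD}^{\bar F,\bar G}(\tau)\ge m(1+o(1))$; letting $\delta\to0$ gives the stated bound.

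For the achievability in Part~3, the key structural fact is that $\bar W_n\ge\sum\nolimits_{i=\nu}^{n}\bar Z_{i,\nu}$, obtained by taking $k=\nu$ in the maximum \eqref{eq:W_t_n}, so a large post-change partial sum forces an early crossing. Fixing $\delta>0$ and $n_\alpha=(1+\delta)|\log\alpha|/I$, I would bound the conditional delay by
\[
\mathsf{E}_\nu^{\bar F,\bar G}\bigl[(\tau_{ss}-\nu+1)^+\mid\mathscr F_{\nu-1}\bigr]\le n_\alpha+\sum_{n\ge n_\alpha}\mathsf{P}_\nu^{\bar F,\bar G}\Bigl(\sum\nolimits_{i=\nu}^{\nu+n}\bar Z_{i,\nu}<|\log\alpha|\,\Big|\,\mathscr F_{\nu-1}\Bigr),
\]
and invoke \eqref{eq:Znnu_UB}: for $n\ge n_\alpha$ the conditional mean of the inner sum exceeds $|\log\alpha|$, so each tail probability is small uniformly over $\nu$ and $\mathscr F_{\nu-1}$ and the tail sum is $o(n_\alpha)$. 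Taking $\sup_\nu\esssup$ and then $\delta\to0$ yields $\mathrm{WADD}^{\bar F,\bar G}(\tau_{ss})\le|\log\alpha|/I\,(1+o(1))$, matching Part~1.

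The main obstacle I anticipate is not the skeleton of the change-of-measure argument but the uniformity required by Lorden's metric: both the essential supremum over the pre-change history $\mathscr F_{\nu-1}$ and the supremum over $\nu$ must be propagated through every estimate, which is precisely why \eqref{eq:Znnu_LB} and \eqref{eq:Znnu_UB} are phrased as $\sup_\nu\esssup$ conditional statements. The delicate step is upgrading these in-probability statements to the $L^1$/expectation control the delay demands, i.e.\ justifying that the tail sum in Part~3 is genuinely summable and $o(n_\alpha)$ rather than merely vanishing termwise; I expect this to require either a uniform-integrability argument or the standard strengthening of \eqref{eq:Znnu_UB} to a summable (complete-convergence) rate, applied uniformly in $\nu$ and the past.
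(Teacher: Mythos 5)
A preliminary caveat: the paper itself contains no proof of Theorem~\ref{thm:modifiedconds} --- it is quoted from \cite{lai-ieeetit-1998} and \cite{brucks2023modeling} --- so your attempt can only be measured against the standard arguments in those references, whose skeleton (Wald change of measure plus the two uniform growth conditions) you correctly identify. Your Part~2 is right: under $\mathsf{P}_\infty^{\bar F}$ each $\exp\bigl(\sum_{i=k}^{n}\bar Z_{i,k}\bigr)$ is a unit-mean nonnegative martingale in $n$, the CUSUM rule is a family of one-sided tests each crossing $|\log\alpha|$ with $\mathsf{P}_\infty^{\bar F}$-probability at most $\alpha$, and Lorden's argument gives $\mathsf{E}_\infty^{\bar F}[\tau_{ss}]\ge 1/\alpha$. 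Your Part~1 is also the correct outline of Lai's lower bound: change measure on $\{\nu\le\tau<\nu+m\}$, split on the exceedance event $\{\max_{t\le m}\sum_{i=\nu}^{\nu+t}\bar Z_{i,\nu}\ge I(1+\delta)m\}$, kill that event uniformly with \eqref{eq:Znnu_LB}, and bound the complement by $e^{I(1+\delta)m}\mathsf{P}_\infty^{\bar F}(\tau<\nu+m\mid\mathscr F_{\nu-1})$ together with a choice of $\nu$ extracted from the false-alarm constraint (a step you gloss but correctly flag).

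The genuine gap is in Part~3, and you have only half-diagnosed it. Your bound keeps only the $k=\nu$ term of the CUSUM maximum and tries to sum
\[
\sum_{n\ge n_\alpha}\mathsf{P}_\nu^{\bar F,\bar G}\Bigl(\sum\nolimits_{i=\nu}^{\nu+n}\bar Z_{i,\nu}<|\log\alpha|\,\Big|\,\mathscr F_{\nu-1}\Bigr),
\]
but \eqref{eq:Znnu_UB} gives only termwise convergence to zero of these probabilities, with no rate, so the infinite tail sum need not be finite, let alone $o(n_\alpha)$. Where you go wrong is in the proposed repair: neither uniform integrability nor a strengthening of \eqref{eq:Znnu_UB} to a summable (complete-convergence) rate is needed --- that would amount to changing the hypothesis of a theorem that is true as stated. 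The standard argument (Lorden, Lai) is a block/geometric-trials argument, and it is precisely what the $\sup_{k\ge\nu\ge1}$ inside \eqref{eq:Znnu_UB} exists for; your sketch never exploits this uniformity because you only ever take $k=\nu$. Concretely, let $m=\lceil(1+\delta)|\log\alpha|/I\rceil$, so that $|\log\alpha|\le m(I-\delta')$ with $\delta'=\delta I/(1+\delta)$, and partition the post-change times into blocks of length $m$ starting at $k_j=\nu+(j-1)m$. On the event that $\tau_{ss}$ has not stopped before block $j$ begins, the restart property $\bar W_n\ge\sum_{i=k_j}^{n}\bar Z_{i,k_j}$ and \eqref{eq:Znnu_UB} applied at $k=k_j\ge\nu$ show that the conditional probability of not crossing $|\log\alpha|$ within block $j$, given the past, is at most
\[
\epsilon_m:=\sup_{k\ge\nu\ge1}\;\esssup\;\mathsf{P}_\nu^{\bar F,\bar G}\Bigl(\tfrac1m\sum\nolimits_{i=k}^{k+m}\bar Z_{i,k}\le I-\delta'\,\Big|\,X_1,\dots,X_{k-1}\Bigr)\;\longrightarrow\;0\quad(\alpha\to0).
\]
Iterating conditional expectations gives $\mathsf{P}_\nu^{\bar F,\bar G}(\tau_{ss}\ge\nu+jm\mid\mathscr F_{\nu-1})\le\epsilon_m^{\,j}$, hence $\mathsf{E}_\nu^{\bar F,\bar G}[(\tau_{ss}-\nu+1)^+\mid\mathscr F_{\nu-1}]\le m/(1-\epsilon_m)=m(1+o(1))$ uniformly in $\nu$ and the past; taking $\sup_\nu\esssup$ and then $\delta\to0$ completes Part~3 with the hypothesis exactly as stated.
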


Some special cases and well-studied models where the conditions of the theorem are satisfied are given below. 
\begin{enumerate}
    \item \textit{I.I.D. process}: In the special case when
\begin{equation}
    \label{eq:robustexactiid_2}
    \bar f_n = \bar f, \quad \bar{g}_{n, \nu} = \bar{g}, \quad \forall n, \nu,
\end{equation}
the robust algorithm given above is the robust CUSUM algorithm designed using densities $\bar f$ and $\bar g$. All the conditions of the theorem above are satisfied with $I = D(\bar{g} \; \| \; \bar f)$, the Kullback-Leibler divergence between $\bar g$ and $\bar f$. 
\item \textit{I.P.I.D. process}: Consider the case when there exist a positive integer $T$ such that
\begin{equation}
    \bar f_{n+T} = \bar f_n, \quad \bar{g}_{n, \nu} = \bar{g}_n, \quad \bar g_{n+T}=\bar g_n.
\end{equation}
A process with this periodic sequence of densities is called an independent and periodically identically distributed process (\cite{bane-tit-2021, oleyaeimotlagh2023quickest, bane-elsevier-2024}).  
The conditions of Theorem~\ref{thm:modifiedconds} are satisfied with
    $$
    I = \frac{1}{T} \sum_{n=1}^T D(\bar g_n \; \| \; \bar f_n). 
    $$
    \item \textit{MLR-order processes}: 
Finally, consider the case where we have a sequence of densities $\{\bar{h}_n\}$ such that
\begin{equation}
    \bar f_{n+T} = \bar f, \quad \bar{g}_{n, \nu} = \bar{h}_{n-\nu+1}. 
\end{equation}
In addition, 
  the densities $\{\bar{h}_n\}$ are increasing in MLR order: for all $n \geq 1$, 
    $$
    \frac{\bar{h}_{n+1}(x)}{\bar{h}_n(x)} \quad \uparrow \quad x,
    $$
    that is, the likelihood ratio is monotonically increasing in $x$. Such a process is called an exploding process or an monotone likelihood ratio (MLR)-order process (\cite{brucks2023modeling}).
    The conditions of Theorem~\ref{thm:modifiedconds} are satisfied with
    $$
    I = \lim_{N \to \infty} \frac{1}{N} \sum_{n=1}^N D(\bar{h}_n \; \| \; \bar f),
    $$ 
    provided the above limit exists (\cite{brucks2023modeling}). 
\end{enumerate}




\subsection{On Performance of the Robust Algorithm under Laws other than LFLs}
In this section, we discuss the false alarm and delay performances of the robust algorithm when the pre- and post-change laws are not given by the LFLs. The fact that the robust algorithm is designed using the LFLs helps us obtain bounds on the performances. The proof of the following corollary follows from the proof steps used for Theorem~\ref{thm:LFLRobust_minimax}. 

\begin{corollary}
    Suppose the log-likelihood ratios $\{\bar Z_{n, \nu}\}$ satisfy the conditions in Theorem~\ref{thm:modifiedconds}. Then the robust algorithm $\tau_{ss}$ in \eqref{eq:tau_t_star}, when we choose the threshold $\bar A_{n, \alpha} = |\log \alpha|$,
		$$
		\tau_{ss} = \min\left\{n \geq 1: \max_{1 \leq k \leq n} \sum_{i=k}^n \bar Z_{i,k} \geq |\log \alpha|\right\},
		$$
		satisfies
        \begin{equation}
			\begin{split}
            \mathsf{E}_\infty^{ F}[\tau_{ss}] &\geq \frac{1}{\alpha}, \quad \forall F \in \mathcal{F}. \\
				\textup{WADD}^{ F, {G}}(\tau_{ss})  &\leq  \frac{|\log \alpha |}{I}(1+o(1)), \quad \alpha \to 0, \quad \forall F \in \mathcal{F}, G \in \mathcal{G}. 
			\end{split}
		\end{equation}
\end{corollary}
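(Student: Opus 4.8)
The plan is to reduce the two performance bounds under an arbitrary pair of laws $(F,G) \in \mathcal{F} \times \mathcal{G}$ to the corresponding bounds under the LFLs $(\bar F, \bar G)$, which are already delivered by parts 2 and 3 of Theorem~\ref{thm:modifiedconds}. The engine of the reduction is the monotonicity built into $\tau_{ss}$: because every likelihood ratio used in the statistic is continuous and monotone increasing (condition (c) of Theorem~\ref{thm:LFLRobust_minimax}, which I take to be in force), each increment $\bar Z_{i,k}$ is increasing in $X_i$, so $\bar W_n = \max_{1 \le k \le n}\sum_{i=k}^n \bar Z_{i,k}$ is coordinatewise non-decreasing in the observations. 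Consequently $\tau_{ss}$ is coordinatewise non-increasing, and this monotonicity is inherited by both the false-alarm event $\{\tau_{ss} > n\}$ and the detection delay $(\tau_{ss}-\nu+1)^+$. This is exactly the structure exploited in the proof of Theorem~\ref{thm:LFLRobust_minimax}, so I would import it directly.

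For the false-alarm bound, I would invoke part 2 of Theorem~\ref{thm:modifiedconds} to get $\mathsf{E}_\infty^{\bar F}[\tau_{ss}] \ge 1/\alpha$, and then write $\mathsf{E}_\infty^{F}[\tau_{ss}] = \sum_{n \ge 0} \mathsf{P}_\infty^{F}(\tau_{ss} > n)$, reducing the claim to $\mathsf{P}_\infty^{F}(\tau_{ss} > n) \ge \mathsf{P}_\infty^{\bar F}(\tau_{ss} > n)$ for every $n$. Since $\bar f_i \succ f_i$ for each $f_i \in \mathcal{P}^F_i$, the pre-change coordinates are stochastically smaller under $F$ than under $\bar F$; as $\{\tau_{ss}>n\}$ is a non-increasing functional of the coordinates, Lemma~\ref{lem:stocbound_UV} yields the desired stochastic ordering, and hence $\mathsf{E}_\infty^{F}[\tau_{ss}] \ge \mathsf{E}_\infty^{\bar F}[\tau_{ss}] \ge 1/\alpha$ for all $F \in \mathcal{F}$.

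For the delay bound, I would first observe that, because all densities are equivalent, the essential supremum over $\mathscr{F}_{\nu-1}$ ranges over the same set of realizations under $F$ as under $\bar F$; thus $\textnormal{WADD}^{F,G}(\tau_{ss})$ does not depend on the pre-change law $F$ and enters only through $G$. Fixing an arbitrary pre-change realization and change point $\nu$, the conditional delay $\mathsf{E}_\nu^{F,G}[(\tau_{ss}-\nu+1)^+ \mid \mathscr{F}_{\nu-1}]$ is a non-increasing functional of the post-change observations $X_\nu, X_{\nu+1}, \ldots$ by the monotonicity above. Since $g_{i,\nu} \succ \bar g_{i,\nu}$, the post-change coordinates are stochastically larger under $G$ than under $\bar G$, so Lemma~\ref{lem:stocbound_UV} gives $\mathsf{E}_\nu^{F,G}[(\tau_{ss}-\nu+1)^+ \mid \mathscr{F}_{\nu-1}] \le \mathsf{E}_\nu^{\bar F,\bar G}[(\tau_{ss}-\nu+1)^+ \mid \mathscr{F}_{\nu-1}]$ pointwise; taking $\esssup$ and then $\sup_\nu$ gives $\textnormal{WADD}^{F,G}(\tau_{ss}) \le \textnormal{WADD}^{\bar F, \bar G}(\tau_{ss})$, and part 3 of Theorem~\ref{thm:modifiedconds} bounds the right-hand side by $\frac{|\log\alpha|}{I}(1+o(1))$.

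The main obstacle, and the only place needing genuine care, lies in the two mismatches between Lemma~\ref{lem:stocbound_UV} and the functionals above: the lemma is stated for continuous functions of finitely many variables, whereas $\mathbbm{1}\{\tau_{ss}>n\}$ is discontinuous and the delay depends on infinitely many observations. I would close the first gap by approximating the indicator and the delay by continuous monotone functions, using the equivalence of the densities and continuity of the likelihood ratios to argue that the threshold-crossing boundary $\{\bar W_m = |\log\alpha|\}$ is a null set so that the approximation becomes exact in the limit. I would close the second by truncating the delay at a finite horizon $N$, applying the lemma for each $N$, and passing to the limit by monotone convergence, which is legitimate because the delay has finite expectation under the LFLs by Theorem~\ref{thm:modifiedconds}.
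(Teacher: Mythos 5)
Your overall reduction is exactly the paper's: the paper proves this corollary by pointing back to the proof of Theorem~\ref{thm:LFLRobust_minimax}, i.e., the false-alarm bound comes from part 2 of Theorem~\ref{thm:modifiedconds} together with the comparison $\Prob_{\infty}^{F}(\tau_{ss} \leq N) \leq \Prob_{\infty}^{\bar F}(\tau_{ss} \leq N)$ obtained from Lemma~\ref{lem:stocbound_UV} on the pre-change coordinates, and the delay bound comes from conditioning on $\mathscr{F}_{k-1}$, noting that the conditional delay of the CUSUM-type rule does not depend on the pre-change law, dominating the post-change coordinates via $g_{i,\nu} \succ \bar g_{i,\nu}$, and then invoking part 3 of Theorem~\ref{thm:modifiedconds}. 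Your reading that condition (c) of Theorem~\ref{thm:LFLRobust_minimax} (continuous, monotone likelihood ratios) must be in force is also the paper's implicit assumption. So the skeleton of your argument matches the intended proof step for step.

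The place where your write-up goes wrong is the final paragraph, and it is worth naming because the repair you propose would actually fail. Lemma~\ref{lem:stocbound_UV} never needs to be applied to the indicator $\mathbbm{1}\{\tau_{ss} > n\}$ or to the delay functional: its conclusion is a stochastic ordering $q(U_1,\dots,U_n) \succ q(V_1,\dots,V_n)$, which by definition means $\Prob(q(U) \geq t) \geq \Prob(q(V) \geq t)$ for \emph{every} $t$. Taking $q$ to be the continuous, coordinatewise increasing function $h(z_1,\dots,z_m) = \max_{n \leq m}\bigl(\max_{k \leq n} \sum_{i=k}^{n} \log \tfrac{\bar g_{i,k}(z_i)}{\bar f_i(z_i)} - |\log\alpha|\bigr)$ and $t = 0$ compares $\Prob(\tau_{ss} \leq m)$ under the two laws directly, since $\{\tau_{ss} \leq m\} = \{h(X_1,\dots,X_m) \geq 0\}$ is an event depending on only finitely many coordinates; the expectation bounds then follow by summing tail probabilities over $N$ (your truncation plus monotone convergence is the same computation). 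In contrast, your claim that the threshold-crossing boundary $\{\bar W_m = |\log\alpha|\}$ is a null set is false in general: for the Poisson families of Example~\ref{exam:PoissonLFD}, the statistic $\bar W_m$ is a discrete random variable and can place positive mass exactly at the threshold, so a proof that genuinely relied on that claim would break for one of the paper's two running examples. Delete the approximation argument and apply the lemma to the statistic rather than to the indicator; everything else in your proposal then goes through.
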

Thus, no matter what the pre-change non-stationary law $F$ is, if the thresholds of the stopping rule $\tau_{ss}$ is set to a constant threshold $|\log \alpha|$, the mean time to a false alarm for $\tau_{ss}$  is always lower bounded by $\frac{1}{\alpha}$. Also, the detection delay is asymptotically upper bounded by $\frac{|\log \alpha |}{I}$ for any non-stationary pre- and post-change laws $F$ and $G$. 

\subsection{Examples of Least Favorable Laws}
\label{sec:LFDexamples}
We now provide some specific examples of LFLs from the Gaussian and Poisson families. 
We use the following simple lemma in these examples. The lemma is stated for densities but a similar result is true also for mass functions. Its proof can be found in (\cite{krishnamurthy2016partially}). We provide the proof for completeness. 
\begin{lemma}
\label{lem:MLRorder}
    Let $f$ and $g$ be two probability density functions such that $g$ dominated $f$ in monotone likelihood ratio (MLR) order:
    $$
    \frac{g(x)}{f(x)} \; \; \uparrow \; \; x. 
    $$
    Then, $g$ also dominates $f$ in stochastic order, i.e.,
    $$
    \int_{x}^\infty g(y) dy \geq \int_x^\infty f(y) dy, \quad \forall x. 
    $$
\end{lemma}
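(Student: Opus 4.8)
The plan is to exploit the single-crossing structure that the monotone likelihood ratio assumption forces on the difference $g - f$. Writing $L(x) = g(x)/f(x)$, the hypothesis says $L$ is non-decreasing. Since $f$ and $g$ are both densities, $\int_{-\infty}^\infty (g(y) - f(y))\,dy = 0$, equivalently $\int_{-\infty}^\infty f(y)\,(L(y) - 1)\,dy = 0$. My first step is to argue from these two facts that $g - f$ changes sign at most once, and in the correct direction: there is a threshold $x_0 \in [-\infty, +\infty]$ with $g(y) \le f(y)$ for $y < x_0$ and $g(y) \ge f(y)$ for $y > x_0$. This follows because $L$ is monotone, so $\{y : L(y) < 1\}$ is a lower ray $(-\infty, x_0)$ while $\{y : L(y) > 1\}$ is an upper ray; were $L \ge 1$ everywhere with strict inequality on a set of positive $f$-measure, the zero-integral identity would be violated, and symmetrically for $L \le 1$, so the threshold is a genuine sign change. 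The degenerate case $L \equiv 1$ (i.e.\ $g = f$ almost everywhere) makes the conclusion hold trivially with equality.

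Given the threshold $x_0$, the second step is to verify $\int_x^\infty g \ge \int_x^\infty f$ for every $x$ by a two-case split. For $x \ge x_0$, the integrand $g - f$ is nonnegative throughout $[x, \infty)$, so $\int_x^\infty (g - f)\,dy \ge 0$ immediately. For $x < x_0$, I instead use the total-mass identity to rewrite the tail as a left-integral, $\int_x^\infty (g - f)\,dy = -\int_{-\infty}^x (g - f)\,dy$, and on $(-\infty, x) \subseteq (-\infty, x_0)$ the integrand $g - f$ is nonpositive, so $\int_{-\infty}^x (g-f)\,dy \le 0$ and hence the tail difference is again nonnegative. Combining the two cases yields $\int_x^\infty g(y)\,dy \ge \int_x^\infty f(y)\,dy$ for all $x$, which is precisely the claimed stochastic dominance.

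The only genuinely delicate point, and the step I expect to require the most care, is the clean construction of the crossing threshold $x_0$ and the handling of the boundary and degenerate situations (where $L \equiv 1$, or where the crossing effectively sits at $\pm\infty$). Everything else is a direct consequence of monotonicity together with the normalization $\int f = \int g = 1$; no limiting or measure-theoretic subtlety beyond this sign analysis is needed. I would define the threshold explicitly as $x_0 = \sup\{y : L(y) < 1\}$ with the convention $\sup\emptyset = -\infty$, which makes the monotonicity argument for the sign pattern immediate and covers the boundary cases uniformly.
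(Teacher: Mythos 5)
Your proof is correct and follows essentially the same route as the paper's: both define a crossing threshold as a supremum of the set where $g/f$ is at most (or below) one, then split into two cases, using the pointwise inequality $g \ge f$ on the upper tail and the total-mass/complement argument on the lower side. The extra discussion of the "genuine sign change" via the zero-integral identity is not actually needed, since your two-case split goes through regardless.
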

\begin{proof}
    Define $t = \sup \left\{x: \frac{g(x)}{f(x)} \leq 1\right\}$.
    Then, when $x \leq t$, 
    $$
    \int_{-\infty}^x g(y) dy \leq  \int_{-\infty}^x f(y) dy,
    $$
    implying $
    \int_{x}^\infty g(y) dy \geq \int_x^\infty f(y) dy.
    $
    On the other hand, if $x > t$, then 
    $$
    \int_{x}^\infty g(y) dy =\int_{x}^\infty \frac{g(y)}{f(y)}f(y) dy \geq \int_x^\infty f(y) dy.
    $$
    This proves the lemma. 
\end{proof}

We now give two examples of non-stationary processes with specified uncertainty classes. We then identify their LFLs. 

\begin{example}(Gaussian LFL)
\label{exam:GaussLFD}
Let the pre- and post-change density be given by
\begin{align*}
    f_n = \mathcal{N}(\theta_{n}, 1), \quad
    g_{n, \nu} = \mathcal{N}(\mu_{n,\nu}, 1). 
\end{align*}
The parameters $\{\theta_n\}$ and $\{\mu_{n,\nu}\}$ are not known but are believed to satisfy for some known $\{\bar\theta_n\}$ and $\{\bar{\mu}_{n,\nu}\}$, respectively,
\begin{align*}
    \mu_{n,\nu} \geq \bar{\mu}_{n,\nu}, \quad
    \bar\theta_n \geq \theta_n, \quad \forall n, \text{ and } \forall \nu \leq n,
\end{align*}
such that for each $n$ and for all $\nu \leq n$,
$$
\bar{\theta}_n < \bar{\mu}_{n, \nu}. 
$$
The parameters $\{\bar\theta_n\}$ and $\{\bar{\mu}_{n, \nu}\}$ are known to the decision-maker. 

The following likelihood ratio calculations are straightforward:
\begin{align*}
\log \frac{\bar{g}_{n, \nu}(X)}{\bar f_n(X)} 
&=  \left(\bar{\mu}_{n,\nu} - \bar\theta_n\right) X - \frac{\bar{\mu}_{n,\nu}^2 - \bar\theta_n^2}{2}, \\
\log \frac{{g}_{n, \nu}(X)}{\bar{g}_{n, \nu}(X)} 
&=  \left(\mu_{n,\nu} - \bar{\mu}_{n,\nu}\right) X - \frac{{\mu}_{n,\nu}^2 - \bar{\mu}_{n,\nu}^2}{2},\\
\log \frac{\bar f_n(X)}{ f_n(X)} 
&=  \left( \bar\theta_n - \theta_n\right) X - \frac{\bar\theta_n^2 - \theta_n^2}{2}.
\end{align*}
Note that all the likelihood ratios are monotonically increasing and continuous in $X$, for any choice of $n$ and $\nu \leq n$. By Lemma~\ref{lem:MLRorder}, this means that 
$$
{g}_{n, \nu} 
\succ \bar{g}_{n, \nu}, \quad \text{ and } \quad \bar f_n \succ f_n. 
$$
Thus, the following densities are LFLs:
\begin{align*}
    \bar{f}_n = \mathcal{N}(\bar{\theta}_{n}, 1), \quad
    \bar{g}_{n, \nu} = \mathcal{N}(\bar{\mu}_{n,\nu}, 1), \quad n \geq 1, \quad \nu \leq n. 
\end{align*}

\end{example}

\begin{example}(Poisson LFL)
\label{exam:PoissonLFD}
Let the pre- and post-change density be given by
\begin{align*}
f_n =  \text{Pois}(\gamma_n), \quad g_{n, \nu} = \text{Pois}(\lambda_{n,\nu}).        
\end{align*}
where $\max_n \{\gamma_n\} \leq \min_{n, \nu}\{\lambda_{n, \nu}\}$.
$\{\gamma_n\}$ and $\{\lambda_{n,\nu}\}$ are not known but are believed to satisfy for some known $\{\bar\gamma_n\}$ and $\{\bar{\lambda}_{n,\nu}\}$, respectively,
\begin{align*}
    \lambda_{n,\nu} \geq \bar{\lambda}_{n,\nu}, \quad 
    \bar\gamma_n \geq \gamma_n, \quad \forall n, \nu,
\end{align*}
and for each $n$ and $\nu \leq n$,
$$
\bar{\gamma}_n < \bar{\lambda}_{n, \nu}. 
$$
Then,
\begin{align*}
\log \frac{\bar{g}_{n, \nu}(X)}{\bar f_n(X)} 
&=   \log \left[\left(\frac{\bar{\lambda}_{n,\nu}}{\bar\gamma_n}\right)^X e^{-\bar{\lambda}_{n,\nu} + \bar\gamma_n}\right] 
= X \log \left(\frac{\bar{\lambda}_{n,\nu}}{\bar\gamma_n}\right) -\bar{\lambda}_{n,\nu} + \bar\gamma_n,\\
\log \frac{{g}_{n, \nu}(X)}{\bar {g}_{n, \nu}(X)} 
&=   \log \left[\left(\frac{{\lambda}_{n,\nu}}{\bar{\lambda}_{n,\nu}}\right)^X e^{-{\lambda}_{n,\nu} + \bar{\lambda}_{n,\nu}}\right] 
= X \log \left(\frac{{\lambda}_{n,\nu}}{\bar{\lambda}_{n,\nu}}\right) -{\lambda}_{n,\nu} + \bar{\lambda}_{n,\nu},\\
\log \frac{\bar f_{n}(X)}{ f_n(X)} 
&=   \log \left[\left(\frac{\bar\gamma_n}{\gamma_n}\right)^X e^{-\bar\gamma_n + \gamma_n}\right] 
= X \log \left(\frac{\bar\gamma_n}{\gamma_n}\right) -\bar\gamma_n + \gamma_n.
\end{align*}
Note that because of the assumptions made on the order of the means, all the likelihood ratios are monotonically increasing and continuous in $X$, for any choice of $n$ and $\nu \leq n$. By Lemma~\ref{lem:MLRorder}, this means that 
$$
{g}_{n, \nu} 
\succ \bar{g}_{n, \nu}, \quad \text{ and } \quad \bar f_n \succ f_n. 
$$
Thus, the following densities are LFLs:
\begin{align*}
    \bar{f}_n =\text{Pois}(\bar{\gamma}_n), \quad \bar{g}_{n, \nu} = \text{Pois}({\bar\lambda}_{n,\nu}).     \quad n \geq 1, \quad \nu \leq n. 
\end{align*}

\end{example}

\section{Robust Multi-Stream QCD with Unknown Non-Stationary Pre- and Post-Change Distributions}
\label{sec: multi stream}

In this section, we obtain a robust solution to a multi-stream QCD problem where data in each stream is non-stationary, and the pre- and post-change distributions in each stream are not known to the decision-maker. At the change point, it is assumed that the change affects a class of subsets of streams. 


\subsection{Multi-Stream Change Point Model and Notations}
We first state the change point model and the problem formulation. 
Suppose $\{X_{\theta, n}\}$ is an independent sequence of random variables observed over time $n$ at stream indexed by $\theta \in \Theta =\{\theta_1, \dots, \theta_M\}$. Let $B \subset \{\theta_1, \dots, \theta_M\}$. 
We assume that changes occur simultaneously in streams indexed by elements of $B$, which is unknown to the decision maker. Specifically, if $\theta \in B$, then the process $\{X_{\theta, n}\}$ follows a sequence of densities $\{f_{\theta, n}\}$ before the change time $\nu$. After time $\nu$, the process changes to a sequence of densities $\{g_{\theta, n, \nu}\}$. 
Mathematically,  
\begin{equation}\label{eq:changepointmodel B}
	X_{\theta,n} \sim
	\begin{cases}
		f_{\theta, n}, &\quad \forall n < \nu, \\
		g_{\theta, n, \nu}, &\quad \forall n \geq \nu,
	\end{cases} \quad \text{if} \quad \theta\in B. 
\end{equation}
If $\theta \notin B$, the density sequence of the observations remains equal to $f_{\theta, n}$,
$$
X_{\theta, n} \sim f_{\theta, n}, \quad \forall n, \quad \text{if} \quad \theta \notin B. 
$$
As in the single-stream case, the densities before and after change are not known to the decision-maker. 
While the subset  $B \subset \Theta$ of streams, where the change occurs, is not known to the decision maker, we assume that there is a known collection $\mb$ of subsets of $\Theta$ such that 
$B \in \mb$. For example, we can have $\mb$ as the collection of all subsets of size less than a specified $K$: 
$$
\mb = \{B: |B| \leq K\}. 
$$
Typically, we would have $
K \ll M. 
$
At the end of this section, we study the special case of $K = 1$, i.e., the case where the change affects only one of the streams. 

For each $\ta\in \Theta$, the densities are assumed to satisfy
\begin{equation}
\nonumber
	D(g_{\theta, n, \nu} \; \| \; f_{\theta, n}) > 0, \quad \forall n \geq 1, \;   \nu \geq 1, \; n \geq \nu. 
\end{equation}
We refer to a process with pre-change $\{f_{\theta, n}\}$ and post-change $\{g_{\theta, n, \nu}\}$ as a non-stationary process, since the density of the data changes with time among streams.

Denote the non-stationary models before and after the change in stream $\theta$ as
\begin{equation*}
    F^\theta = \{f_{\theta, n}\}, \quad G^{\theta} = \{g_{\theta, n, \nu}\}, \quad \theta \in \Theta,
\end{equation*}
and the family of pre- and post-change models across the streams as
$$
F = \{F^\theta\}_{\theta \in \Theta}, \quad \text{ and } \quad G = \{G^\theta\}_{\theta \in \Theta}. 
$$
While the laws $F$ and $G$ are not known, we assume that there are known families of densities $\mathcal{P}^{F, \theta}_{n}$ and $\mathcal{P}^{G, \theta}_{n, \nu}$ such that
\begin{equation*}
    f_{\ta, n} \in \mathcal{P}^{F, \theta}_{n} , \quad g_{\theta, n, \nu} \in \mathcal{P}^{G, \theta}_{n, \nu}, \quad n,\nu = 1,2, \dots. 
\end{equation*}
Furthermore, define 
\begin{align}
\label{eq: dist family}
\mf &= \{F = \{F^\theta\}_{\theta \in \Theta}: F^\theta = \{f_{\theta, n}\}, \; f_{\theta, n} \in \mpp^{F, \theta}_{n}, \; n \geq 1, \; \theta \in \Theta\},\\
\mathcal{G} &= \{G = \{G^{\theta}\}_{\theta \in \Theta}: G^{\theta} = \{g_{\theta, n, \nu}\}, \; g_{\theta, n, \nu} \in \mathcal{P}^{G, \theta}_{n, \nu}, \; n, \nu \geq 1, \; \theta \in \Theta\}
\end{align} 
to be the collection of all possible pre- and post-change non-stationary process laws across all the streams. 

Finally, we use the notation
$$
\Prob_{\nu}^{F, G, B}
$$
to denote the probability measure such that change occurs at time $\nu$ in the subset $B$ when the pre- and post-change non-stationary models are $F$ and $G$, respectively. Also, let 
$\Expect_{\nu}^{F, G, B}$ be the corresponding expectation. We use the notations $\Prob_{\infty}^{F, G, B}=\Prob^{F}_{\infty}$ and $\Expect_{\infty}^{F, G, B}=\Expect^{F}_{\infty}$ when there is no change.

\subsection{Problem Formulation for Multi-Stream QCD with Non-Stationary Data}
Define the Lorden detection delay metric (\cite{lord-amstat-1971}) under the new notations as
\begin{equation}
\label{eqn:WADD2}
    \text{WADD}^{F, G, B}(\tau) \coloneqq \sup_\nu \; \esssup \; \Expect_\nu^{F, G, B}\left[(\tau - \nu+1)^+ | \mathscr{F}_{\nu-1}\right],
\end{equation}
where $\mathscr{F}_{n} = \sigma(\{X_{\theta, 1}, \dots, X_{\theta, n}\}_{\theta \in \Theta})$.
When the laws $F$ and $G$ are known, and streams $B$ are also known, the classical Lorden's formulation for this problem can be stated as (see \eqref{eq:QCDproblem}):
\begin{equation}\label{eq:QCDproblem-B}
\begin{split}
    \inf_{\tau} \;\; 
    & \textnormal{WADD}^{F, G, B}(\tau),\\
    s.t. \;\; \
    & \Expect_{\infty}^F[\tau] \geq \frac{1}{\alpha}.
\end{split}
\end{equation}

Since the laws $(F, G)$ and stream $B$ are not exactly known, we study a robust version of the minimax problem for arbitrary $B \in \mathcal{B}$:
\begin{equation}\label{eqn:robustprob2}
\begin{split}
  \inf_{\tau} \;\; {\sup_{F \in \mf, G \in \mathcal{G}}} \; \; 
  &\text{WADD}^{F, G, B}(\tau)\\
  s.t. \;\; \
    & \inf_{F \in \mf}\Expect^{F}_{\infty}[\tau] \geq \frac{1}{\alpha}.
\end{split}
\end{equation}
We say that a solution is robust optimal if a stopping time or algorithm is a solution to the problem in \eqref{eqn:robustprob2}.
As suggested by the formulation, we seek a solution that is robust optimal uniformly across every $B \in \mathcal{B}$. 

\subsection{Robust Algorithm for QCD in Multi-Stream Non-Stationary Processes}
We provide the asymptotically optimal solution to \eqref{eqn:robustprob2} under the assumption that the families of pre-change uncertainty classes $\{\mathcal{P}^{F, \ta}_{n}\}_{n, \ta}$ and post-change uncertainty classes $\{\mathcal{P}^{G, \ta}_{n, \nu}\}_{n, \ta}$ have LFLs for each $\theta$. The definition of LFLs is similar to Defintion~\ref{def1}, but we provide it here for completeness.

\begin{definition}[Stochastic Boundedness and Least Favorable Laws (LFLs) in Multi-Stream Non-Stationary Processes]
\label{def1 B}
For a fixed $\theta \in \Theta$, we say that the pair of families ($\{\mathcal{P}^{F, \theta}_{n} \}, \{\mathcal{P}^{G, \theta}_{n, \nu}\}) $ is stochastically bounded by the pair of least favorable laws (LFLs) $(\bar{F}^\theta, \bar{G}^\theta) = (\{\bar{f}_{\theta, n}\}, \{\bar{g}_{\theta, n, \nu}\})$, with $\bar{f}_{\theta, n} \in \mathcal{P}^{F, \theta}_{n}$ and 
$
\bar{g}_{\theta, n, \nu} \in \mathcal{P}^{G, \theta}_{n, \nu}, \; n, \nu=1,2, \dots,
$
or that the pair $(\bar{F}^\theta, \bar{G}^\theta)$ is least favorable if 
\begin{equation}\label{eq:stocbounded-1 multi-stream}
		\begin{split}
			 \bar f_{\theta,n}
            &\succ  {f}_{\theta,n}, \quad \text{ for every } \; f_{\theta,n} \in \mathcal{P}^{F, \theta}_{n},\quad n = 1,2, \dots, 
		\end{split}
	\end{equation}
and 
\begin{equation}\label{eq:stocbounded-2 multi-stream}
        \begin{split}
			 g_{\theta,n, \nu}
            &\succ 
			\bar{g}_{\theta, n, \nu}, \quad \text{ for every } \; g_{\theta,n, \nu} \in \mathcal{P}^{G, \theta}_{n, \nu}, \quad n,\nu=1,2, \dots.       
		\end{split}
    \end{equation}
We also say that the pair $(\bar{F}, \bar{G})$ are a pair of LFLs, where
$$
\bar{F} = \{\bar{F}^\theta\}_{\theta \in \Theta}, \quad \text{ and } \quad \bar{G} = \{\bar{G}^\theta\}_{\theta \in \Theta},
$$
if the pair $(\bar{F}^\theta, \bar{G}^\theta)$ is least favorable for each $\theta \in \Theta$. 
\end{definition}

Given a pair of LFLs $(\bar{F}, \bar{G})$, we use them to define our robust multi-stream detection algorithm. This is a cumulative sum algorithm based on generalized likelihood ratio statistic (computed using LFLs) and is given by
\begin{align}\label{G_n_B}
    \bar\Psi_{n} = \max_{1 \leq k \leq n} \max_{{B} \in \mathcal{B}} \sum_{\theta \in {B}} \sum_{i = k}^{n} \log\frac{\bar g_{\theta, i, k}(X_{\theta, i})}{\bar{f}_{\theta, i}(X_{\theta,i})}.
\end{align}
The corresponding stopping rule is
\begin{align}\label{eqn: tau glr B}
    \tau_{ms} = \inf \{n \geq 1 : \bar\Psi_{n} \geq {A_{n, \alpha}}\}. 
\end{align}	
Thresholds $ A_{n,\alpha}$ are chosen to satisfy the constraint on the mean time to a false alarm.

Finally, consider the case $K = 1$, where change happens in at most one stream. Then, $\mb = \{\{\theta_1\}, \dots, \{\theta_M\}\}$.  Therefore, the robust algorithm becomes the following:
\begin{align}\label{G_n_t}
    \bar\Phi_n = \max_{1 \leq k \leq n} \max_{\ta \in \Theta} \sum_{i = k}^{n} \log\frac{\bar g_{\theta, i, k}(X_{\theta, i})}{\bar{f}_{\theta, i}(X_{\theta,i})},
\end{align}
with a corresponding stopping rule
\begin{align}\label{eqn: tau glr t}
    \tau_{mc} = \inf \{n \geq 1 : \bar\Phi_n \geq {A_{n, \alpha}}\}. 
\end{align}	
In Section~\ref{sec:designMultiStream} below, we briefly discuss some standard sufficient conditions needed for the optimality of $\tau_{ms}$ (and hence of $\tau_{mc}$) when the true pre- and post-change laws are given by the LFLs. 

\subsection{Robust Optimality in a Multi-Stream Non-Stationary Setting}

The following Theorem~\ref{thm:2} is a multi-stream generalization of Theorem \ref{thm:LFLRobust_minimax}. 

\begin{theorem}\label{thm:2}
Suppose the following conditions hold:
\begin{enumerate}
    	\item[(a)] 	
     The pair of families $\{\mpp_n^{\fb}\}_{\ta\in\Theta}$ and $\{\mpp^{\gb}_{n, \nu}\}_{\ta\in\Theta}$ is stochastically bounded by the LFLs $\bar{F}$ and $\bar{G}$, where $
\bar{F} = \{\bar{F}^\theta\}_{\theta \in \Theta}$, and $\bar{G} = \{\bar{G}^\theta\}_{\theta \in \Theta}$. 
	\item[(b)] Let $\alpha \in (0,1)$ be a constraint on the rate of false alarms. Let the thresholds $\{A_{n, \alpha}\}$ used by the stopping rule $\tau_{ms}$ in \eqref{eqn: tau glr B} designed using the LFLs be such that 
    \begin{align}
    \label{eqn:E_infty_glr multi 2}
        \Expect_{\infty}^{{\bar{F}}}[\tau_{ms}] = \frac{1}{\alpha}.
    \end{align}
	\item[(c)] All likelihood ratio functions used in the stopping rule $\tau_{ms}$ are continuous and monotone increasing.
  \end{enumerate}
Then the following results are true:
    \begin{enumerate}
    \item If the stopping rule $\tau_{ms}$ in \eqref{eqn: tau glr B} is exactly optimal to \eqref{eq:QCDproblem-B}, uniformly over $B \in \mathcal{B}$, when the law is LFL, i.e., when $\mf = \{\bar{F}\}$ and  $\mg = \{\bar{G}\}$ as defined in \eqref{eq: dist family}, then the stopping rule is exactly robust optimal for the problem in \eqref{eqn:robustprob2} uniformly over $B \in \mathcal{B}$. 
    \item If the stopping rule $\tau_{ms}$ in \eqref{eqn: tau glr B} is asymptotically optimal to \eqref{eq:QCDproblem-B}, uniformly over $B \in \mathcal{B}$,  when the law is LFL, i.e., when $\mf = \{\bar{F}\}$ and  $\mg = \{\bar{G}\}$ as defined in \eqref{eq: dist family}, then the stopping rule is asymptotically robust optimal for the problem in \eqref{eqn:robustprob2}, uniformly over $B \in \mathcal{B}$, as $\alpha \to 0$. 
    \end{enumerate}
\end{theorem}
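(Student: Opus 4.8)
The plan is to mirror the single-stream argument behind Theorem~\ref{thm:LFLRobust_minimax} and exhibit a saddle point $(\tau_{ms},(\bar F,\bar G))$ for the minimax problem \eqref{eqn:robustprob2}, treating the true change set $B$ as fixed (uniformity in $B$ is then automatic, since neither $\tau_{ms}$ nor $\bar\Psi_n$ depends on $B$). Concretely, I would establish the two inequalities
\[
\textnormal{WADD}^{F,G,B}(\tau_{ms}) \;\le\; \textnormal{WADD}^{\bar F,\bar G,B}(\tau_{ms}) \;\le\; \textnormal{WADD}^{\bar F,\bar G,B}(\tau)
\]
for every $F\in\mf$, $G\in\mg$, and every $\tau$ feasible for \eqref{eqn:robustprob2}. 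The right inequality is exactly the hypothesis that $\tau_{ms}$ is (exactly or asymptotically) optimal under the LFLs for \eqref{eq:QCDproblem-B}, once I check that robust-feasibility implies LFL-feasibility, i.e.\ $\Expect^{\bar F}_\infty[\tau]\ge 1/\alpha$. Taking $\sup_{F,G}$ on the left and noting the right-hand side is bounded by $\sup_{F,G}\textnormal{WADD}^{F,G,B}(\tau)$ then gives $\sup_{F,G}\textnormal{WADD}^{F,G,B}(\tau_{ms})\le\sup_{F,G}\textnormal{WADD}^{F,G,B}(\tau)$ for all feasible $\tau$, which is robust optimality.

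The constraint side is the clean part. Under $\Prob^F_\infty$ every stream is pre-change, so $\bar\Psi_n$ in \eqref{G_n_B} is a pointwise maximum (over $k$ and $B'\in\mathcal{B}$) of sums of the single-coordinate log-likelihood ratios $\log(\bar g_{\theta,i,k}/\bar f_{\theta,i})$, each continuous and increasing by condition~(c). Sums and maxima of coordinatewise nondecreasing functions stay coordinatewise nondecreasing, so each $\bar\Psi_n$, and hence each event $\{\tau_{ms}>n\}$, is monotone in every observation. Since $\bar f_{\theta,i}\succ f_{\theta,i}$ for all streams, Lemma~\ref{lem:stocbound_UV} applied coordinatewise (after truncating $\tau_{ms}$ at a finite horizon and passing to the monotone limit) makes $\tau_{ms}$ stochastically smallest under $\bar F$, so $\Expect^F_\infty[\tau_{ms}]\ge\Expect^{\bar F}_\infty[\tau_{ms}]=1/\alpha$ by condition~(b); the identical coupling gives $\Expect^{\bar F}_\infty[\tau]\le\Expect^F_\infty[\tau]$ for arbitrary $\tau$, which is the feasibility transfer needed above.

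The left inequality, on the delay, is the substantive step and is where the problem departs from the single-stream case. Fixing $\nu$ and a realization in $\mathscr F_{\nu-1}$, the conditional delay $(\tau_{ms}-\nu+1)^+$ is a coordinatewise nonincreasing function of the post-$\nu$ observations. For the changed streams $\theta\in B$ the family obeys $g_{\theta,n,\nu}\succ\bar g_{\theta,n,\nu}$, so passing to the LFL makes these observations stochastically smaller and, by Lemma~\ref{lem:stocbound_UV} applied to the (negated) nonincreasing delay map, stochastically enlarges the delay — exactly as in the single-stream proof. The obstacle is the \emph{unchanged} streams $\theta\notin B$: after $\nu$ these still emit pre-change data $f_{\theta,n}$, they enter $\bar\Psi_n$ through every $B'\in\mathcal{B}$ containing them, and there the LFL ordering runs the wrong way, $\bar f_{\theta,n}\succ f_{\theta,n}$. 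Replacing $f$ by $\bar f$ on these streams makes their observations larger and can only \emph{accelerate} detection, so the ``worst case at $\bar f$'' used on the changed streams is simply unavailable on the unchanged ones. This direction clash between the two LFL orderings is the main difficulty, and it has no single-stream analogue because there $F$ never enters the delay.

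I would resolve this by showing the unchanged streams can only help the detector. Since the maximum over $B'$ in \eqref{G_n_B} may always discard any stream whose partial log-likelihood ratio is unfavorable, increasing a non-$B$ observation never decreases $\bar\Psi_n$ and never increases the delay; hence for every $F$ the conditional delay is dominated by its value when the non-$B$ streams are frozen out, and under $\bar F$ those streams matter only on events where some subset $B'\not\subset B$ crosses the threshold first. These are false-alarm-type events for the unchanged streams, controlled by $\Expect^{\bar F}_\infty\ge 1/\alpha$: their probability inside the $O(|\log\alpha|/I)$ detection window is $o(1)$, so $\sup_{F,G}\textnormal{WADD}^{F,G,B}(\tau_{ms})$ and $\textnormal{WADD}^{\bar F,\bar G,B}(\tau_{ms})$ agree to first order and part~2 follows from Theorem~\ref{thm:modifiedconds}. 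For the exact statement in part~1 the same domination closes the sandwich once the LFL-optimality hypothesis is read, as it must be, relative to the unchanged streams already running under $\bar f$. I expect reconciling the coordinatewise monotonicity of the subset-maximized statistic with the two opposite LFL orderings — and verifying that the non-$B$ contribution is genuinely negligible rather than merely small — to be the delicate part of the write-up, with Lemma~\ref{lem:stocbound_UV} and the false-alarm control as the principal tools.
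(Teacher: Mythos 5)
Your proposal reproduces the paper's overall architecture (feasibility via Lemma~\ref{lem:stocbound_UV} plus the pre-change ordering; a dominance argument meant to show the worst-case delay sits at the LFLs; then the sandwich with the LFL-optimality hypothesis), and your false-alarm half matches the paper's. One slip there: your ``identical coupling'' claim $\Expect^{\bar F}_\infty[\tau]\le\Expect^{F}_\infty[\tau]$ for \emph{arbitrary} feasible $\tau$ is unjustified --- Lemma~\ref{lem:stocbound_UV} applies only to monotone functionals of the observations, which an arbitrary stopping rule need not be --- but it is also unnecessary, since robust feasibility $\inf_{F\in\mf}\Expect^{F}_\infty[\tau]\ge 1/\alpha$ gives $\Expect^{\bar F}_\infty[\tau]\ge 1/\alpha$ directly because $\bar F\in\mf$.

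The substantive divergence is the delay step, and there your proposal is incomplete. The paper proves the core statement \eqref{eqn:mid result 2} through the chain \eqref{eq:keystep31}: the equality $(2b)$, $\Expect_k^{\bar F,\bar G,B}[\cdot\mid\mathscr{F}_{k-1}]=\Expect_k^{F,\bar G,B}[\cdot\mid\mathscr{F}_{k-1}]$, asserted ``as in the single-stream case,'' followed by $(2c)$, which applies Lemma~\ref{lem:stocbound_UV} only to the post-change coordinates of the streams in $B$, the unchanged streams' law being the same $F$ on both sides of that inequality. Your objection to exactly this point is well taken: for $\theta\notin B$ the observations at times $n\ge k$ still follow the pre-change law, so the conditional delay does depend on $F$, and the ordering $\bar f_{\theta,n}\succ f_{\theta,n}$ pushes in the unfavorable direction (under $\bar F$ those observations are stochastically larger, the statistic larger, the delay smaller); the single-stream justification of $(1b)$ does not transfer, so your diagnosis identifies a real subtlety that the paper's proof passes over. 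The problem is that your repair does not close the gap. For part 2, the assertion that the non-$B$ contribution is ``negligible to first order'' is precisely what must be proved; making it rigorous requires comparing $\tau_{ms}$ with the statistic restricted to the true set $B$ and invoking delay bounds of the type in Theorem~\ref{thm:modifiedconds_mult}, whose conditions (existence of $I_B$, \eqref{eq:Znnu_LB multi-stream}, \eqref{eq:Znnu_UB multi-stream}) are not hypotheses of Theorem~\ref{thm:2}; bare asymptotic optimality under the LFLs controls $\textnormal{WADD}^{\bar F,\bar G,B}(\tau_{ms})$ but gives no handle on $\sup_{F,G}\textnormal{WADD}^{F,G,B}(\tau_{ms})$. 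For part 1 (exact optimality) you offer only the remark that the hypothesis should be ``read relative to the unchanged streams already running under $\bar f$,'' which reinterprets the theorem rather than proves it; no exact argument is given, and asymptotic $o(1)$ estimates cannot yield one. So your write-up correctly flags the direction clash but replaces the paper's (questionable) equality $(2b)$ with a sketch that needs assumptions beyond the theorem's and does not cover the exact-optimality claim at all.
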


\begin{proof}
See Appendix \ref{appen:2}.
\end{proof}

\subsection{On Optimality of the Multi-Stream Robust Algorithm under LFLs}
\label{sec:designMultiStream}
In this section, we provide sufficient conditions under which the multi-stream robust algorithm $\tau_{ms}$ is optimal for the LFLs. The conditions are similar (but not identical) to the ones stated for the single-stream case in Theorem~\ref{thm:modifiedconds}.

Define
$$
\bar Z_{\theta, n, \nu} = \log \frac{\bar{g}_{\theta, n,\nu}(X_{\theta, n})}{\bar f_{\theta, n}(X_{\theta, n})}
$$
to be the log-likelihood ratio at time $n$, in stream $\theta$, and calculated for the LFLs when the change occurs at $\nu$. Let $I_\theta > 0$ be the information number (similar to the number $I$ in Theorem~\ref{thm:modifiedconds}) for stream $\theta$, and let 
$$
I_B = \sum_{\theta \in B} I_\theta. 
$$

\begin{theorem}[\cite{brucks2023modeling, lai-ieeetit-1998, tart-niki-bass-2014}]
	\label{thm:modifiedconds_mult}
	\begin{enumerate}
		\item Let there exist a positive numbers $I_\theta$, $\theta \in \Theta$, such that the log-likelihood ratios $\{\bar Z_{\theta, n, \nu}\}$ satisfy the following  condition: for any $\delta > 0$, 
		\begin{equation}
			\label{eq:Znnu_LB multi-stream}
			\begin{split}
		\hspace{-1.1cm}		\lim_{n \to \infty} \; \sup_{\nu \geq 1} \; \esssup \mathsf{P}_\nu^{\bar F, \bar{G}, B} &\left(\max_{t \leq n} \sum_{i = \nu }^{\nu + t} \sum_{\theta \in B}\bar Z_{\theta, i, \nu} \geq I_B(1+\delta)n \; \bigg| \; X_{\theta, 1}, \dots, X_{\theta, \nu-1}, \theta \in \Theta\right) = 0.
			\end{split}
		\end{equation}
		Then, we have the universal lower bound as $\alpha \to 0$, 
		\begin{equation}
			\begin{split}
				\inf_{\tau} \textup{WADD}^{\bar F, \bar{G}, B}(\tau) \; \geq \; \frac{|\log \alpha|}{I_B} (1+o(1)). 
			\end{split}
		\end{equation}
		Here the minimum over $\tau$ is over those stopping times satisfying $\Expect_\infty^{\bar{F}}[\tau] \geq \frac{1}{\alpha}$. 
		\item The robust algorithm $\tau_{ms}$ in \eqref{eqn: tau glr B}, when we choose the threshold $\bar A_{n, \alpha} = \log \frac{|\mathcal{B}|}{\alpha}$,
		$$
		\tau_{ms} = \min\left\{n \geq 1: \max_{1 \leq k \leq n} \max_{B \in \mathcal{B}} \sum_{i=k}^n \sum_{\theta \in B} \bar Z_{\theta, i,k} \geq \log \frac{|\mathcal{B}|}{\alpha}\right\},
		$$
		satisfies
		$$
		\mathsf{E}_\infty^{\bar F}[\tau_{ms}] \geq \frac{1}{\alpha}.
		$$
		\item Furthermore, if the log-likelihood ratios $\{\bar Z_{\theta, n, \nu}\}$ also satisfy, $\forall \delta > 0$, 
		\begin{equation}
			\label{eq:Znnu_UB multi-stream}
			\begin{split}
				\lim_{n \to \infty} \; \sup_{k \geq \nu \geq 1} \; \esssup \mathsf{P}_\nu^{\bar F, \bar{G}, B} & \left(\frac{1}{n}\sum_{i = k }^{k +n} \sum_{\theta \in \Theta}\bar Z_{\theta, i,k} \leq I_B - \delta \; \bigg| \; X_{\theta,1}, \dots, X_{\theta, k-1}, \theta \in \Theta\right) = 0.
			\end{split}
		\end{equation}
		Then as $\alpha \to 0$, $\tau_{ms}$ achieves the lower bound:
		\begin{equation}
			\begin{split}
				\textup{WADD}^{\bar F, \bar{G}, B}(\tau_{ms})  \leq  \frac{|\log \alpha |}{I_B}(1+o(1)), \quad \alpha \to 0. 
			\end{split}
		\end{equation}
	\end{enumerate}
\end{theorem}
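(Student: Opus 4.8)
The plan is to prove the three parts as multi-stream, non-stationary adaptations of the classical CUSUM analysis of \cite{lai-ieeetit-1998}, handling the false-alarm bound (part~2) first since it is self-contained, then the universal lower bound (part~1), and finally the matching delay upper bound (part~3). For part~2 I would work under $\mathsf{P}_\infty^{\bar F}$ and pass from the max-type statistic $\bar\Psi_n$ to a sum-type (Shiryaev--Roberts) statistic that is easier to control. For each $B\in\mathcal{B}$ set $R_n^B = \sum_{k=1}^n \exp(\sum_{i=k}^n\sum_{\theta\in B}\bar Z_{\theta,i,k})$ and $R_n = \sum_{B\in\mathcal{B}} R_n^B$. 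The key point, which survives the non-stationarity, is that every one-step likelihood ratio has unit mean under $\mathsf{P}_\infty$ (since $X_{\theta,i}\sim\bar f_{\theta,i}$), so $\mathsf{E}_\infty[\exp(\sum_{\theta\in B}\bar Z_{\theta,n,k})\mid\mathscr{F}_{n-1}]=1$ even though $\bar g_{\theta,n,k}$ depends on the change point $k$; hence $R_n - n|\mathcal{B}|$ is a $\mathsf{P}_\infty$-martingale. Because $e^{\bar\Psi_n}=\max_{k,B}\exp(\sum_{i=k}^n\sum_{\theta\in B}\bar Z_{\theta,i,k})\le R_n$, at $\tau_{ms}$ we have $R_{\tau_{ms}}\ge e^{\bar\Psi_{\tau_{ms}}}\ge e^{A}$ with $A=\log(|\mathcal{B}|/\alpha)$. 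Applying optional sampling to $\tau_{ms}\wedge N$, letting $N\to\infty$, and using Fatou yields $e^A\le |\mathcal{B}|\,\mathsf{E}_\infty[\tau_{ms}]$, i.e. $\mathsf{E}_\infty^{\bar F}[\tau_{ms}]\ge e^A/|\mathcal{B}|=1/\alpha$; the $|\mathcal{B}|$ in the threshold is precisely what absorbs the union over candidate subsets.

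For part~1 I would follow Lai's universal lower bound argument. Fixing $\nu$ and conditioning on $\mathscr{F}_{\nu-1}$, a change of measure between $\mathsf{P}_\nu^{\bar F,\bar G,B}$ and $\mathsf{P}_\infty^{\bar F}$ over $X_{\theta,\nu},\dots,X_{\theta,\nu+t}$, $\theta\in B$, introduces the factor $\exp(\sum_{i=\nu}^{\nu+t}\sum_{\theta\in B}\bar Z_{\theta,i,\nu})$. Setting $m_\alpha=(1-\delta)|\log\alpha|/I_B$, I would split the event $\{\tau_{ms}-\nu+1\le m_\alpha\}$ according to whether the running maximum of this log-likelihood ratio exceeds $I_B(1+\delta)m_\alpha$: on the complement the change-of-measure factor is at most $e^{I_B(1+\delta)m_\alpha}$, so that contribution is bounded by $e^{I_B(1+\delta)m_\alpha}\,\mathsf{P}_\infty(\tau_{ms}<\nu+m_\alpha\mid\mathscr{F}_{\nu-1})$, which the false-alarm constraint forces to vanish as $\alpha\to0$, while the excess event vanishes by \eqref{eq:Znnu_LB multi-stream}. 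This gives $\sup_\nu\,\esssup\,\mathsf{P}_\nu^{\bar F,\bar G,B}(\tau_{ms}-\nu+1\le m_\alpha\mid\mathscr{F}_{\nu-1})\to0$, and since $(\tau_{ms}-\nu+1)^+\ge m_\alpha\,\mathbbm{1}\{\tau_{ms}-\nu+1>m_\alpha\}$, taking expectations yields $\textnormal{WADD}^{\bar F,\bar G,B}(\tau_{ms})\ge m_\alpha(1-o(1))=\frac{|\log\alpha|}{I_B}(1+o(1))$, and the identical estimate applies to any competing $\tau$ meeting the constraint.

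For part~3 I would dominate $\tau_{ms}$ from above by a single-branch one-sided test. Under $\mathsf{P}_\nu^{\bar F,\bar G,B}$, the maximum defining $\bar\Psi_n$ in \eqref{eqn: tau glr B} includes the true change point $k=\nu$ and the true subset $B$, so $\bar\Psi_n\ge\sum_{i=\nu}^n\sum_{\theta\in B}\bar Z_{\theta,i,\nu}$ and therefore $\tau_{ms}\le T_\nu:=\inf\{n\ge\nu:\sum_{i=\nu}^n\sum_{\theta\in B}\bar Z_{\theta,i,\nu}\ge A\}$. The post-change drift condition \eqref{eq:Znnu_UB multi-stream} is a one-sided (complete-convergence-type) law of large numbers ensuring the true-branch partial sums grow at rate at least $I_B-\delta$ uniformly over $\nu$ and the past; a standard boundary-crossing lemma then gives $\mathsf{E}_\nu[(T_\nu-\nu+1)^+\mid\mathscr{F}_{\nu-1}]\le \frac{A}{I_B-\delta}(1+o(1))$ uniformly. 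Since $A=|\log\alpha|+\log|\mathcal{B}|$ with $\log|\mathcal{B}|$ of lower order as $\alpha\to0$, letting $\delta\downarrow0$ produces $\textnormal{WADD}^{\bar F,\bar G,B}(\tau_{ms})\le\frac{|\log\alpha|}{I_B}(1+o(1))$, which together with part~1 pins down the first-order delay.

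The main obstacle is that the increments $\bar Z_{\theta,i,\nu}$ are neither identically distributed nor free of the change point $\nu$, so renewal theory---which supplies both the law of large numbers and the overshoot control in the i.i.d.\ CUSUM analysis (cf.\ \cite{brucks2023modeling} and the single-stream Theorem~\ref{thm:modifiedconds})---is unavailable. Conditions \eqref{eq:Znnu_LB multi-stream} and \eqref{eq:Znnu_UB multi-stream} are engineered exactly to replace these ingredients (an upper bound on the cumulative drift for part~1, a lower bound on it for part~3), and the delicate part of the write-up is checking that every estimate is uniform in $\nu$ and in the conditioning $\mathscr{F}_{\nu-1}$, which is what the $\sup_\nu\,\esssup$ in Lorden's delay demands; the extra summation over $\theta\in B$, with the maximum over $B\in\mathcal{B}$ selecting the correct subset, contributes only the $|\mathcal{B}|$ bookkeeping factor and no essentially new difficulty.
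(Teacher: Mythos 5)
The paper never proves this theorem itself: it is imported from \cite{brucks2023modeling, lai-ieeetit-1998, tart-niki-bass-2014}, so your proposal can only be judged against those standard arguments, and your three-part plan (a martingale/Shiryaev--Roberts bound for the false alarm, change of measure for the lower bound, true-branch domination for the delay upper bound) is exactly the route those references take. Your part~2 is correct and essentially complete: the observation that $\mathsf{E}_\infty\bigl[\exp\bigl(\sum_{\theta\in B}\bar Z_{\theta,n,k}\bigr)\mid \mathscr{F}_{n-1}\bigr]=1$ for \emph{every} hypothesized change point $k$ (the expectation is taken under $\bar f_{\theta,n}$ regardless of $k$) is precisely what keeps $R_n - n|\mathcal{B}|$ a $\mathsf{P}_\infty$-martingale despite the non-stationarity and the $k$-dependence of $\bar g_{\theta,n,k}$, and optional sampling then gives $\mathsf{E}_\infty^{\bar F}[\tau_{ms}]\ge e^{A}/|\mathcal{B}| = 1/\alpha$.

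The gaps are in parts~1 and~3, both at the point where a probability estimate must be upgraded to the stated bound. In part~1 you assert that the false-alarm constraint forces $e^{I_B(1+\delta)m_\alpha}\,\mathsf{P}_\infty(\tau<\nu+m_\alpha\mid\mathscr{F}_{\nu-1})\to 0$ for every $\nu$ and history, and you conclude $\sup_\nu\esssup \mathsf{P}_\nu(\tau-\nu+1\le m_\alpha\mid\mathscr{F}_{\nu-1})\to0$ ``for any competing $\tau$.'' Neither holds for an arbitrary $\tau$ with $\mathsf{E}_\infty[\tau]\ge1/\alpha$: a rule that stops at time $1$ with probability $\sqrt{\alpha}$ and otherwise waits until $2/\alpha$ meets the constraint, yet $\alpha^{-(1-\delta^2)}\,\mathsf{P}_\infty(\tau<1+m_\alpha)=\alpha^{\delta^2-1/2}\to\infty$; moreover, for $\nu$ just before its deterministic stopping epoch the conditional stopping probability equals $1$, so the supremum cannot vanish. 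What the constraint actually yields---and what Lai's proof establishes via a pigeonhole/contradiction argument over blocks of length $m_\alpha$---is the \emph{existence} of some $\nu$ and a positive-probability set of histories on which $\mathsf{P}_\nu(\tau-\nu+1\le m_\alpha\mid\mathscr{F}_{\nu-1})$ is small; since $\textnormal{WADD}$ is a supremum over $(\nu,\text{history})$, this weaker statement suffices, and your write-up is missing exactly this averaging step. In part~3, dominating $\tau_{ms}$ by the single-branch test $T_\nu$ and invoking \eqref{eq:Znnu_UB multi-stream} gives only $\mathsf{P}_\nu(T_\nu-\nu>n\mid\mathscr{F}_{\nu-1})\to0$; convergence in probability does not by itself bound the conditional expectation. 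The standard fix is a block argument applied to the full CUSUM statistic: partition time after $\nu$ into blocks of length roughly $A/(I_B-\delta)$ and use, in block $j$, the branch restarting at the block's first index $k_j>\nu$, whose failure probabilities are uniformly small and conditionally independent across blocks, so the delay has a geometrically decaying tail. This is exactly why the condition is stated with $\sup_{k\ge\nu\ge1}$ rather than only $k=\nu$---a uniformity your single-branch reduction never uses.
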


All the examples given in Section~\ref{sec:designSingleStream} will work for the multi-stream case as well. For further examples, we refer to \cite{tart-niki-bass-2014, tart-book-2019}. 

\subsection{On Performance of the Multi-Stream Robust Algorithm under Laws other than LFLs}
The following result obtains bounds on the mean time to a false alarm and detection delay of the stopping rule $\tau_{ms}$ when the pre- and post-change laws are not necessarily the pair of LFLs. 
\begin{corollary}
    Suppose the log-likelihood ratios $\{\bar Z_{\theta, n, \nu}\}$ satisfy the conditions in Theorem~\ref{thm:modifiedconds_mult}. Then the multi-stream robust algorithm $\tau_{ms}$ in \eqref{eqn: tau glr B}, when we choose the threshold $\bar A_{n, \alpha} = \log \frac{|\mathcal{B}|}{\alpha}$,
		$$
		\tau_{ms} = \min\left\{n \geq 1: \max_{1 \leq k \leq n} \max_{B \in \mathcal{B}} \sum_{i=k}^n \sum_{\theta \in B} \bar Z_{\theta, i,k} \geq \log \frac{|\mathcal{B}|}{\alpha}\right\},
		$$
		satisfies for all $B \in \mathcal{B}$, 
        \begin{equation}
			\begin{split}
            \mathsf{E}_\infty^{ F}[\tau_{ms}] &\geq \frac{1}{\alpha}, \quad \forall F \in \mathcal{F}. \\
				\textup{WADD}^{ F, {G}, B}(\tau_{ms})  &\leq  \frac{|\log \alpha |}{I_B}(1+o(1)), \quad \alpha \to 0, \quad \forall F \in \mathcal{F}, G \in \mathcal{G}. 
			\end{split}
		\end{equation}
\end{corollary}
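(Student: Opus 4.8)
The plan is to follow the template of the single-stream corollary, deriving both estimates from the same stochastic-dominance machinery used in the proof of Theorem~\ref{thm:2}. Two claims must be established separately: the false-alarm lower bound $\Expect_\infty^F[\tau_{ms}]\ge 1/\alpha$ for every $F\in\mf$, and the asymptotic delay bound $\textup{WADD}^{F,G,B}(\tau_{ms})\le |\log\alpha|/I_B\,(1+o(1))$ for every $F\in\mf$, $G\in\mg$, and $B\in\mb$. The workhorse throughout is Lemma~\ref{lem:stocbound_UV}, together with the observation that, because every likelihood-ratio function is monotone increasing (condition (c) of Theorem~\ref{thm:2}), the statistic $\bar\Psi_n$ in \eqref{G_n_B} is a coordinate-wise nondecreasing function of the observations $\{X_{\theta,i}\}$: it is a maximum, over $k$ and $B'\in\mb$, of sums of increasing functions of distinct coordinates. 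Consequently $\tau_{ms}$ is a nonincreasing functional of the observations, and $(\tau_{ms}-\nu+1)^+$ is nonincreasing in the post-change observations once the past is fixed.

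For the false-alarm bound, no change occurs, so every stream follows its true pre-change law and $\bar f_{\theta,n}\succ f_{\theta,n}$ holds in each stream at each time (Definition~\ref{def1 B}). For each horizon $n$ the running maximum $\max_{m\le n}\bar\Psi_m$ is a coordinate-wise nondecreasing function of the finitely many observations $\{X_{\theta,i}\}_{i\le n}$, so Lemma~\ref{lem:stocbound_UV} yields $\Prob_\infty^F(\tau_{ms}\le n)=\Prob_\infty^F(\max_{m\le n}\bar\Psi_m\ge\log(|\mb|/\alpha))\le\Prob_\infty^{\bar F}(\tau_{ms}\le n)$ for every $n$. Hence $\tau_{ms}$ under $F$ stochastically dominates $\tau_{ms}$ under $\bar F$, and $\Expect_\infty^F[\tau_{ms}]\ge\Expect_\infty^{\bar F}[\tau_{ms}]\ge 1/\alpha$ by part~2 of Theorem~\ref{thm:modifiedconds_mult}.

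The delay bound is the interesting step, and it is where the multi-stream setting introduces the one genuine obstacle absent from the single-stream corollary: under $(F,G,B)$ the streams outside $B$ follow $f_{\theta,n}\prec\bar f_{\theta,n}$, i.e.\ they are stochastically \emph{smaller} than under the reference $(\bar F,\bar G,B)$, so $\bar\Psi_n$ is not uniformly dominated in the favorable direction and a direct comparison $\textup{WADD}^{F,G,B}(\tau_{ms})\le\textup{WADD}^{\bar F,\bar G,B}(\tau_{ms})$ cannot be read off from monotonicity alone. I would resolve this by isolating the detection to the changed streams: fixing $B'=B$ in the outer maximum defines $\bar\Psi_n^{(B)}:=\max_{1\le k\le n}\sum_{\theta\in B}\sum_{i=k}^n\bar Z_{\theta,i,k}\le\bar\Psi_n$, so that $\tau_{ms}\le\tau^{(B)}:=\inf\{n\ge 1:\bar\Psi_n^{(B)}\ge\log(|\mb|/\alpha)\}$ pathwise and therefore $\textup{WADD}^{F,G,B}(\tau_{ms})\le\textup{WADD}^{F,G,B}(\tau^{(B)})$. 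The statistic $\bar\Psi_n^{(B)}$ depends only on the streams in $B$, whose post-change observations obey $g_{\theta,n,\nu}\succ\bar g_{\theta,n,\nu}$; conditioning on $\mathscr{F}_{\nu-1}$ (which removes the pre-change observations from the comparison via the essential supremum) and applying Lemma~\ref{lem:stocbound_UV} to the nonincreasing functional $(\tau^{(B)}-\nu+1)^+$ gives $\textup{WADD}^{F,G,B}(\tau^{(B)})\le\textup{WADD}^{\bar F,\bar G,B}(\tau^{(B)})$. Finally, under the LFLs the pooled $B$-statistic has drift $I_B=\sum_{\theta\in B}I_\theta$, so with threshold $\log(|\mb|/\alpha)=|\log\alpha|(1+o(1))$ the single-stream analysis behind Theorem~\ref{thm:modifiedconds_mult}(3) yields $\textup{WADD}^{\bar F,\bar G,B}(\tau^{(B)})\le|\log\alpha|/I_B\,(1+o(1))$; chaining the three inequalities closes the argument, uniformly over $B\in\mb$.

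The main obstacle is thus the contribution of the unchanged streams to $\bar\Psi_n$, neutralized by the restriction $\bar\Psi_n^{(B)}\le\bar\Psi_n$ that routes the delay estimate through the pooled statistic of the changed streams only. The single remaining technical point is that Lemma~\ref{lem:stocbound_UV} is stated for a continuous function on $\mathbb{R}^n$, whereas the delay functional is defined through a stopping time on an infinite horizon; I would dispatch this exactly as in the proof of Theorem~\ref{thm:2}, by truncating to $\{\tau^{(B)}\le N\}$, applying the lemma to the bounded, coordinate-wise monotone finite-horizon functional, and letting $N\to\infty$ by monotone convergence.
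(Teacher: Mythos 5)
Your proposal is correct, and on the delay bound it takes a genuinely different route from the paper. The paper's intended proof of this corollary reuses the delay analysis in the proof of Theorem~\ref{thm:2}: the comparison $\textup{WADD}^{F,G,B}(\tau_{ms})\le\textup{WADD}^{\bar F,\bar G,B}(\tau_{ms})$ is obtained there through step $(2b)$ — the claim that, once one conditions on $\mathscr{F}_{k-1}$, the conditional delay of $\tau_{ms}$ no longer depends on the pre-change law — followed by step $(2c)$ and then Theorem~\ref{thm:modifiedconds_mult}(3). You instead bound $\tau_{ms}\le\tau^{(B)}$ pathwise, where $\tau^{(B)}$ uses only the statistic pooled over the true changed set $B$, and apply Lemma~\ref{lem:stocbound_UV} only within the streams of $B$. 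The obstacle you flag is real and is precisely what makes the paper's step $(2b)$ delicate in the multi-stream setting: streams outside $B$ keep following the pre-change law after the change point, so the conditional delay of $\tau_{ms}$ given $\mathscr{F}_{k-1}$ \emph{does} depend on $F$ through those streams' future observations, and the dependence goes the unfavorable way — under $\bar F$ (with $\bar f_{\theta,n}\succ f_{\theta,n}$) the unchanged streams push $\bar\Psi_n$ up and shorten the delay, so $(2b)$ is really an inequality pointing against the desired chain rather than an equality. Your detour through $\tau^{(B)}$, whose distribution involves no unchanged streams at all, makes the dominance argument airtight; combined with $\log(|\mathcal{B}|/\alpha)=|\log\alpha|(1+o(1))$ and condition \eqref{eq:Znnu_UB multi-stream}, it delivers exactly the claimed bound $|\log\alpha|/I_B\,(1+o(1))$, uniformly over $B\in\mathcal{B}$. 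What the paper's route buys, when it applies, is the stronger exact worst-case property of the LFL pair for $\tau_{ms}$ itself (needed for the exact-optimality clause of Theorem~\ref{thm:2}); what your route buys is a clean proof of the asymptotic statement, which is all this corollary asserts. Your false-alarm argument coincides with the paper's, and the technical points you note (applying the lemma to the continuous statistic rather than to indicators, finite-horizon truncation, and the ess-sup being unaffected by switching between equivalent pre-change laws) are resolved exactly as in the paper's appendix.
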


\section{Numerical Results}
\label{sec:NumericalResults}
In this section, we show the effectiveness of the robust tests on simulated and real data. In Section~\ref{sec:numerical_single}, we conduct experiments using the single-stream robust test:
\begin{equation} \label{eq:singlestream_robust}
\begin{split}
     \bar{W}_{n} &= \max_{1 \leq k \leq n} \sum_{i=k}^n \log \frac{\bar{g}_{i,k}(X_{i})}{\bar f_{i}(X_{i})}\\
    \tau_{ss} &= \inf\{n \geq 1 : \bar W_{n} \geq  A\}.
\end{split}
\end{equation}
In Section~\ref{sec:numerical_mult}, we give results on change detection using the multi-stream robust test:
\begin{equation}\label{eq:multstream_robust}
\begin{split}
    \bar\Psi_{n} = \max_{1 \leq k \leq n} \max_{{B} \in \mathcal{B}} \sum_{\theta \in {B}} \sum_{i = k}^{n} &\log\frac{\bar g_{\theta, i, k}(X_{\theta, i})}{\bar{f}_{\theta, i}(X_{\theta,i})}=\max_{{B} \in \mathcal{B}} \max_{1 \leq k \leq n}  \sum_{\theta \in {B}} \sum_{i = k}^{n} \log\frac{\bar g_{\theta, i, k}(X_{\theta, i})}{\bar{f}_{\theta, i}(X_{\theta,i})}\\
    \tau_{ms} &= \inf \{n \geq 1 : \bar\Psi_{n} \geq A\}. 
    \end{split}
\end{equation}	
When only one stream is affected post-change, then $\tau_{ms}$ simplifies to
\begin{equation}\label{eq:multstream_robust_onechange}
\begin{split}
    \bar\Phi_n &= \max_{1 \leq k \leq n} \max_{\ta \in \Theta} \sum_{i = k}^{n} \log\frac{\bar g_{\theta, i, k}(X_{\theta, i})}{\bar{f}_{\theta, i}(X_{\theta,i})}, \\
    \tau_{mc} &= \inf \{n \geq 1 : \bar\Phi_n \geq A\}. 
    \end{split}
\end{equation}		
We keep the thresholds constant over time for ease of implementation and due to the asymptotic optimality of single-threshold versions (see Theorems~\ref{thm:modifiedconds} and Theorem~\ref{thm:modifiedconds_mult}). For the application of the multi-stream algorithms, we also report the identity of the stream(s) that caused the change using 
\begin{equation}\label{eq:argmax_to_identify_stream}
\begin{split}
    \hat B &= \arg \max_{{B} \in \mathcal{B}} \max_{1 \leq k \leq n}  \sum_{\theta \in {B}} \sum_{i = k}^{n} \log\frac{\bar g_{\theta, i, k}(X_{\theta, i})}{\bar{f}_{\theta, i}(X_{\theta,i})}, \; \text{ or } \\
    \hat \theta &= \arg \max_{\ta \in \Theta} 
 \max_{1 \leq k \leq n} \sum_{i = k}^{n} \log\frac{\bar g_{\theta, i, k}(X_{\theta, i})}{\bar{f}_{\theta, i}(X_{\theta,i})}. 
    \end{split}
\end{equation}

\subsection{Numerical Results for Single-Stream Data}
\label{sec:numerical_single}

\subsubsection{Application to Simulated Gaussian and Poisson data}
For the Gaussian case, we choose the real-time data process with the following model:
\begin{equation}\label{eq: real data normal}
    \begin{split}
        f_n = \mathcal{N}(\theta_n,1), \quad  g_{n, \nu} = \mathcal{N}(\mu_{n,\nu}, 1), \quad \theta_n \in [0, 1], \quad \mu_{n,\nu} \in [2, 3].
    \end{split}
\end{equation}
Following the result in Example \ref{exam:GaussLFD}, pre-LFL is $\mathcal{N}(1, 1)$, and post-LFL is $\mathcal{N}(2, 1)$. Thus, single-stream robust test in \eqref{eq:singlestream_robust} designed with 
$$\bar f_n =\bar{f}= \mathcal{N}(1, 1) \quad  \text{ and } \quad \bar g_{n, \nu} = \bar{g} = \mathcal{N}(2, 1)
$$ 
is robust optimal. 

We generate the observations $\{X_n\}$ in the following three ways:
\begin{itemize}
\item [1)] LFL observations: $\{X_n\}$ follow LFL $\bar f_n = \bar f = \mathcal{N}(1, 1)$ before the change and $\bar g_{n, \nu} = \bar g = \mathcal{N}(2, 1)$ after the change;
\item [2)] Random observations: $\{X_n\}$ follow $f_n = \mathcal{N}(\mu_{0, n}, 1)$ and $g_{n, \nu} = g_n = \mathcal{N}(\mu_{1, n}, 1)$, with $\mu_{0, n} \sim \text{Unif}[0, 1]$ and $\mu_{1, n} \sim \text{Unif}[2, 3]$, i.e., mean values are randomly generated in their respective ranges;
\item [3)] I.P.I.D. (independent and periodically identically distributed) observations: $\{X_n\}$ follow $f_n = \mathcal{N}(\mu_{0, n}, 1)$ and $g_{n, \nu} = \mathcal{N}(\mu_{1, n}, 1)$, with $\mu_{0,n}$ selected periodically with period $11$ from the set $\{0, 0.1, \dots, 1\}$ (in that order), and $\mu_{1,n}$ selected periodically with period $11$ from the set $\{2, 2.1, \dots, 3\}$. 
\end{itemize}

Figure \ref{fig:robust cusum} demonstrates the effectiveness of $\tau_{ss}$, compared to non-robust tests (CUSUM with randomly picked $f$ and $g$). 
The threshold is set to $5 = \log(150)$ to satisfy average run length to false alarm $\geq 150$.
In Figure \ref{fig:robust cusum} [Left], change time $(\nu = 23)$ is detected by $\tau_{ss}$ in the three scenarios of $\{X_n\}$ observations.
In Figure \ref{fig:robust cusum} [Middle], robust test ($\tau_{ss}$) uses 10 different random observation $\{X_n\}$ $(\nu = 54)$, with detection delays less than 10.
In Figure \ref{fig:robust cusum} [Right], the non-robust test is CUSUM \eqref{eq:iidcusum} with $f\sim\mathcal{N}(0, 1)$ and $g\sim\mathcal{N}(3, 1)$. It uses 10 different LFL observations $(\nu = 23)$. Even though the threshold is raised to be $6.9 = \log(1000)$ to prevent false alarms, premature stopping remains in 2 out of 10 cases.
\begin{figure}[!]
    \centering
    \includegraphics[scale=0.355]{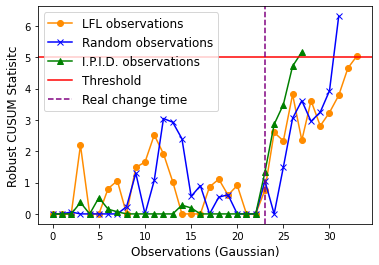}
    \includegraphics[scale=0.355]{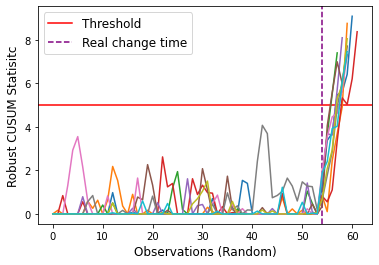}
    \includegraphics[scale=0.355]{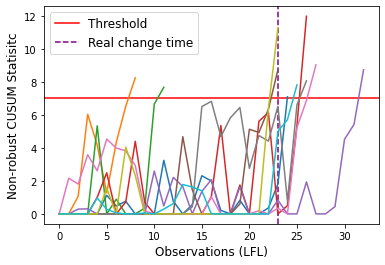}
    \caption{{Evaluating effectiveness of $\tau_{ss}$ with Gaussian data following \eqref{eq: real data normal}.} 
    [Left] $\tau_{ss}$ with $\bar f_n =\mathcal{N}(1,1)$ and $\bar g_{n, \nu} = \mathcal{N}(2, 1)$ tested on three scenarios of $\{X_n\}$. 
    [Middle] $\tau_{ss}$ tested on 10 random observations $\{X_n\}$. 
    [Right] Non-robust CUSUM test with $f = \mathcal{N}(0,1)$ and $g = \mathcal{N}(3, 1)$, tested on 10 LFL observation $\{X_n\}$. }
    \label{fig:robust cusum}
\end{figure}


Similarly, for the Poisson case, we choose the following model:
\begin{equation}\label{eq: real data poisson}
    \begin{split}
        f_n = \text{Pois}(\gamma_n), \quad  g_{n, \nu} = \text{Pois}(\lambda_{n,\nu}), \quad  \gamma_n \in [0.4, 0.5], \quad \lambda_{n,\nu} \in [1, 1.1].
    \end{split}
\end{equation}
Following Example \ref{exam:PoissonLFD}, $\tau_{ss}$ in \eqref{eq:singlestream_robust} designed with pre-LFL $\bar f_n = \text{Pois}(0.5)$ and post-LFL $\bar g_{n, \nu} = \text{Pois}(1)$ is robust optimal. 
Similar to the Gaussian case above, we consider three scenarios of $\{X_n\}$ observations:
\begin{itemize}
\item [1)] LFL observations: $\{X_n\}$ follow LFL $\bar f_n = \bar f = \text{Pois}(0.5)$ before the change and $\bar g_{n, \nu} = \bar g = \text{Pois}(1)$ after the change;
\item [2)] Random observations: $\{X_n\}$ follow $f_n = \text{Pois}(\lambda_{0, n})$ and $g_{n, \nu} = g_n = \text{Pois}(\lambda_{1, n}, 1)$, with $\lambda_{0, n} \sim \text{Unif}[0.4, 0.5]$ and $\lambda_{1, n} \sim \text{Unif}[1, 1.1]$, i.e., the parameter values are randomly generated in their respective ranges;
\item [3)] I.P.I.D. (independent and periodically identically distributed) observations: $\{X_n\}$ follow $f_n = \text{Pois}(\lambda_{0, n}, 1)$ and $g_{n, \nu} = \text{Pois}(\lambda_{1, n}, 1)$, with $\lambda_{0,n}$ selected periodically with period $11$ from the set $\{0.4, 0.41, \dots, 0.5\}$ (in that order), and $\lambda_{1,n}$ selected periodically with period $11$ from the set $\{1, 1.01, \dots, 1.1\}$.
\end{itemize}

\begin{figure}[t]
    \centering
    \includegraphics[scale=0.355]{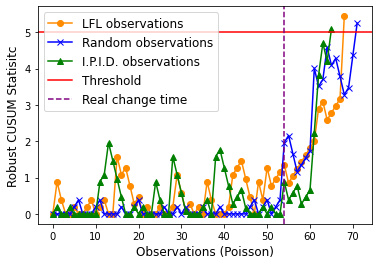}
    \includegraphics[scale=0.355]{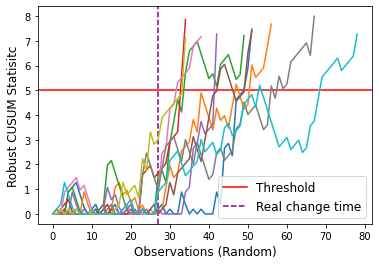}
    \includegraphics[scale=0.355]{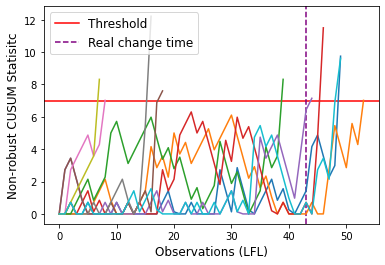}
    \caption{Evaluating effectiveness of $\tau_{ss}$ with Poisson data following \eqref{eq: real data poisson}.
    [Left] $\tau_{ss}$ with $\bar f_n =\text{Pois}(0.5)$ and $\bar g_{n, \nu} = \text{Pois}(1)$ tested on three scenarios of $\{X_n\}$. 
    [Middle] $\tau_{ss}$ tested on 10 random observations $\{X_n\}$. 
    [Right] Non-robust CUSUM test with $f = \text{Pois}(0.4)$ and $g = \text{Pois}(1.1)$, tested on 10 LFL observation $\{X_n\}$. }
    \label{fig:robust cusum poisson}
\end{figure}

Figure~\ref{fig:robust cusum poisson} demonstrated the effectiveness of $\tau_{ss}$. The threshold is set to $5 = \log(150)$ to satisfy average run length to false alarm $\geq 150$.
In Figure \ref{fig:robust cusum poisson} [Left], change $(\nu = 54)$ is detected for different observation cases.
In Figure \ref{fig:robust cusum poisson} [Middle], $\tau_{ss}$ detects the change $(\nu = 27)$ with LFL $\{X_n\}$.
In Figure \ref{fig:robust cusum poisson} [Right], the non-robust test is CUSUM \eqref{eq:iidcusum} with $f\sim\text{Pois}(0.4)$ and $g\sim\text{Pois}(1.1)$. It uses 10 different LFL observations $(\nu = 43)$. Even though the threshold is raised to be $6.9 = \log(1000)$ to prevent false alarms, such premature stopping remains in 5 out of 10 cases.


\subsubsection{Application to Detecting Multiple Waves of COVID-19}
In this section, we report results on the application of the robust test $\tau_{ss}$ in \eqref{eq:singlestream_robust} to detect the onset and second outbreak of a pandemic, via the publicly available U.S. COVID-19 infection dataset. We selected two U.S. counties of similar populations: Allegheny, Pennsylvania, and St. Louis, Missouri. The daily number of cases for the first $200$ days (starting from 2020/1/22) are plotted in Fig.~\ref{fig:data_stat_Allegheny} (Left) and Fig.~\ref{fig:data_stat_StLouis} (Left), with $\text{Pois}(1)$ noise added to the data. 
This data generation process simulates the scenario where it is required to detect the onset (or second outbreak) of a pandemic in the backdrop of daily infections due to other viruses or to detect the arrival of a new variant. 
Thus, the data represents the detection of deviation from a baseline. 
In Figure ~\ref{fig:data_stat_Allegheny} (Left) and Figure~\ref{fig:data_stat_StLouis} (left), the infection rates started to become significant (in both counties) around day 50. The second outbreak is around day 150 in Allegheny and day 140 in St. Louis.

\begin{figure}[ht]
    \centering
    \includegraphics[scale=0.355]{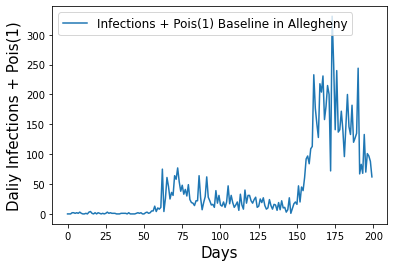}
    \includegraphics[scale=0.355]{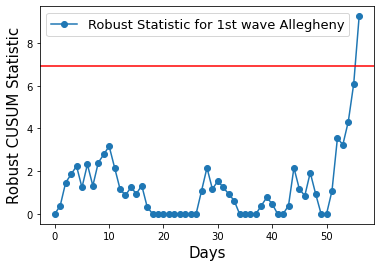}
    \includegraphics[scale=0.355]{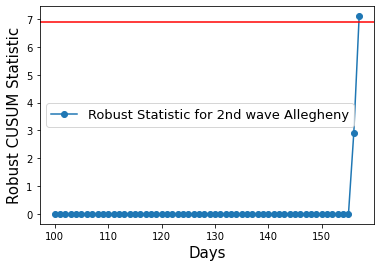}
    \caption{Robust test $\tau_{ss}$ for detecting COVID-19 outbreak in Allegheny County. 
    [Left] Daily increase of confirmed infection cases with $\text{Pois}(1)$ noise added. 
    [Middle] For the first-wave detection, $\tau_{ss}$ is designed with pre-LFL being $\text{Pois}(1)$ and post-LFL being $\text{Pois}(2)$. 
    [Right] For the second-wave detection starting from day 100, $\tau_{ss}$ is designed with pre-LFL being $\text{Pois}(70)$ and post-LFL being $\text{Pois}(93)$. 
    }
    \label{fig:data_stat_Allegheny}
\end{figure}

\begin{figure}[ht]
    \centering
    \includegraphics[scale=0.355]{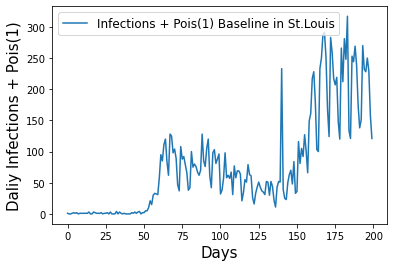}
    \includegraphics[scale=0.355]{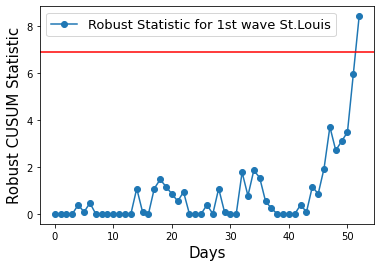}
    \includegraphics[scale=0.355]{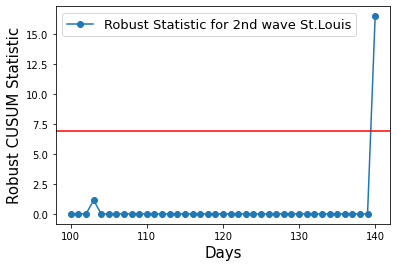}
    \caption{Robust test $\tau_{ss}$ for detecting COVID-19 outbreak in St. Louis County. 
    [Left] Daily increase of confirmed infection cases with $\text{Pois}(1)$ noise added. 
    [Middle] For the first-wave detection, $\tau_{ss}$ is designed with pre-LFL being $\text{Pois}(1)$ and post-LFL being $\text{Pois}(2)$. 
    [Right] For the second-wave detection starting from day 100, $\tau_{ss}$ is designed with pre-LFL being $\text{Pois}(138)$ and post-LFL begin $\text{Pois}(171)$. 
    }
    \label{fig:data_stat_StLouis}
\end{figure}

First, we detect the onset of COVID-19. Since the actual number of infections is unknown and can change over time, we take the robust approach and design the robust test with pre-LFL being $\text{Pois}(1)$, and post-LFL being $\text{Pois}(2)$. The threshold is chosen to be $6.9 = \log(1000)$ to guarantee a mean-time to false alarm greater than $1000$ days.  
As seen in Figure~\ref{fig:data_stat_Allegheny} (Middle) and Figure~\ref{fig:data_stat_StLouis} (Middle), the robust CUSUM test detects the change quickly (within a week). 

For the second wave detection, we start at day 100, when the daily infections become stable.
To estimate the LFL, we use daily infections from the detected day of the first wave (day 56 for Allegheny and 52 for St.Louis) to day 99 (before second-wave detection).
For the pre-LFL, we choose $\text{Pois}(\mu + 2\sigma)$, with $\mu$ and $\sigma$ estimated by sample mean and standard deviation.
The resulting pre-LFL is $\text{Pois}(70)$ for Allegheny, and $\text{Pois}(138)$ for St. Louis.
For the post-LFL, to align with condition \eqref{eq:stocbounded-2}, we choose $\text{Pois}(\mu + 3\sigma)$, resulting $\text{Pois}(93)$ for Allegheny, and $\text{Pois}(171)$ for St. Louis.
In Figure~\ref{fig:data_stat_Allegheny} (Right) and Figure~\ref{fig:data_stat_StLouis} (Right), the test $\tau_{ss}$ detects the second outbreak quickly (day 157 and day 140), around the days when infections start to become significant again. Note that in St. Louis, day 140 has an obvious high infection, and it is not due to the noise added. Our algorithm quickly detects such change without any delay.

\subsection{Numerical Results for Multi-Stream Data}
\label{sec:numerical_mult}

\subsubsection{Application to Simulated Data with Change in One Stream}
\label{sec: 6.2.1}

We discuss the performance of multi-stream robust algorithm $\tau_{mc}$ in \eqref{eq:multstream_robust_onechange}, where change is assumed to happen in one stream. 
For the Gaussian case, we choose the following model:
\begin{equation}
\label{eq: multi simulation 2}
	\begin{split}
		f_{\theta} = \mathcal{N}(1, 1), \quad
		g_{\theta, n, \nu} = \mathcal{N}(\mu_{\theta, n, \nu}, 1), \quad
		\mu_{\theta, n, \nu} \geq 1.5,\quad
		\theta = 0, 1, 2.
	\end{split}
\end{equation}
It follows that the post-change LFL for each stream is
$$
\bar g_{\theta, n, \nu} = \mathcal{N}(1.5, 1), \quad \forall n, \nu, \theta. 
$$
We set the change time $\nu = 10$, and only stream $\theta = 1$ changes to $\mathcal{N}(\mu_{\theta, n, \nu}, 1)$. We consider the following three experiments:
\begin{itemize}
    \item [1)] At the change point, the density of observations in the stream indexed by $\theta=1$ changes to a stationary distribution $g_{\theta, n, \nu} = \mathcal{N}(\mu_{\theta, n, \nu}, 1)$ with $\mu_{\theta, n, \nu} = 2$ while the density of observations in streams $\theta=0$ and $\theta=2$ remains $\mathcal{N}(1, 1)$;
    \item [2)] At the change point, the density of observations in the stream indexed by $\theta=1$ changes to a stationary distribution $g_{\theta, n, \nu} = \mathcal{N}(\mu_{\theta, n, \nu}, 1)$ with $\mu_{\theta, n, \nu} = 4$ while the density of observations in streams $\theta=0$ and $\theta=2$ remains $\mathcal{N}(1, 1)$;
    \item [3)] At the change point, the density of observations in the stream indexed by $\theta=1$ changes to a non-stationary distribution with $g_{\theta, n, \nu} = \mathcal{N}(\mu_{\theta, n, \nu}, 1)$, $\mu_{\theta, n, \nu}\sim\text{Unif}[2, 4]$ and the mean value is randomly chosen for each $n$.
\end{itemize}

\begin{figure}[ht]
	\centering
	\includegraphics[scale=0.5]{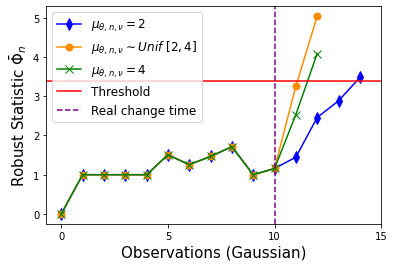}
	\includegraphics[scale=0.5]{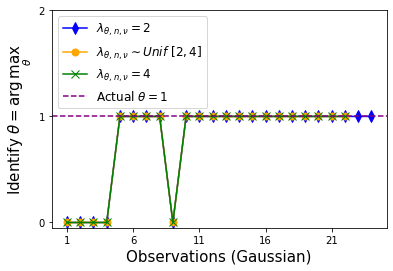}
    \caption{[Left] Detecting the change in the three experiments using robust algorithm designed using $\bar f_{\theta, n} = \mathcal{N}(1, 1)$ and $\bar g_{\theta, n, \nu} = \mathcal{N}(1.5, 1)$ for each $\theta$. 
    [Right] Identifying post-change scenario  by $\arg\max_{\ta\in\Theta}$, with the actual change in $\theta = 1$. }
    \label{fig:single normal}
\end{figure}

With $|\Theta| = 3$, the robust statisitc $\bar\Phi_n$ \eqref{eq:multstream_robust_onechange} with LFL is calculated, with threshold being $\log(|\Theta|/\alpha) = \log(10|\Theta|)$.
Figure \ref{fig:single normal} [Left] shows that the detection delays for the three experiments are 4, 2, and 2, respectively.
Figure \ref{fig:single normal} [Right] identifies the change in $\theta = 1$, induced by $\arg\max_{\ta\in\Theta}$ in $\bar\Phi_n$, with 10 additional observations taken after detected change time to ensure stable conclusion. Namely, when we stop by the detected change time, $\ta$ is decided by $\arg\max_{\ta\in\Theta}$ in \eqref{eq:argmax_to_identify_stream}. The identification aligns with the true post-change situation. 

Note that identifying the exact post-change scenario involves the data isolation process, and there is a possibility for false identification (\cite{warner2024worst}). The exhibition of the detected post-change scenario is just to show the possible validity of our mutli-stream robust algorithm. However, we can apply the single stream robust algorithm to each stream separately, and the first stream hitting the threshold is the actual post-change scenario.

For the Poisson case, we choose the following model:
\begin{equation}
\label{eq: multi simulation 1}
	\begin{split}
		f_{\theta, n} = \text{Pois}(1), \quad
		g_{\theta, n, \nu} = \text{Pois}(\lambda_{\theta, n, \nu}), \quad
		\lambda_{\theta, n, \nu} \geq 1.5,\quad
		\theta = 0, 1, 2.
	\end{split}
\end{equation}
It follows that the post-change LFL for each stream is
$$
\bar g_{\theta, n, \nu} = \text{Pois}(1.5), \quad \forall n, \nu, \theta. 
$$
Similar to the Gaussian case, we set the change time $\nu = 10$ and consider the following three experiments: 
\begin{itemize}
    \item [1)] At the change point, the density of observations in the stream indexed by $\theta=1$ changes to a stationary distribution $g_{\theta, n, \nu} = \text{Pois}(\lambda_{\theta, n, \nu} = 2)$ while the density of observations in streams $\theta=0$ and $\theta=2$ remains \text{Pois}(1);
    \item [2)] At the change point, the density of observations in the stream indexed by $\theta=1$ changes to a stationary distribution $g_{\theta, n, \nu} = \text{Pois}(\lambda_{\theta, n, \nu} = 4)$ while the density of observations in streams $\theta=0$ and $\theta=2$ remains \text{Pois}(1);
    \item [3)] At the change point, the density of observations in the stream indexed by $\theta=1$ changes to a non-stationary distribution with $g_{\theta, n, \nu} = \text{Pois}(\lambda_{\theta, n, \nu})$, $\lambda_{\theta, n, \nu}\sim\text{Unif}[2, 4]$ and the parameter value is randomly chosen for each $n$.
\end{itemize}

\begin{figure}[ht]
	\centering
	\includegraphics[scale=0.5]{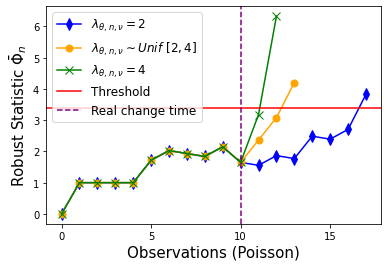}
	\includegraphics[scale=0.5]{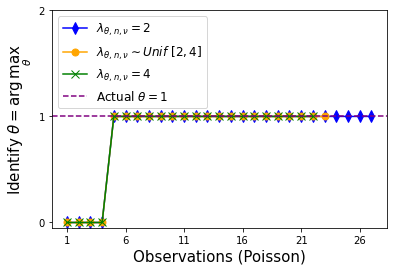}
    \caption{[Left] Detecting the change in three experiments using the robust algorithm designed using $\bar f_{\theta, n} = \text{Pois}(1)$ and $\bar g_{\theta, n, \nu} = \text{Pois}(1.5)$ for all $\theta$. [Right] Identifying post-change scenario by $\arg\max_{\ta\in\Theta}$, with the actual change in $\theta = 1$. }
    \label{fig:single poisson}
\end{figure}

With $|\Theta| = 3$, the robust test statistic $\bar\Phi_n$ with LFL is calculated with threshold $\log(|\Theta|/\alpha) = \log(10|\Theta|)$. 
Figure \ref{fig:single poisson} [Left] shows that the detection delays for the three cases are 7, 2, and 3, respectively.
Figure \ref{fig:single poisson} [Right] identifies the change in $\theta = 1$, induced by $\arg\max_{\ta\in\Theta}$ in \eqref{eq:argmax_to_identify_stream}, with 10 additional observations taken after detected change time to ensure stable conclusion. The identification aligns with the true post-change situation.

\subsubsection{Application to Simulated Data with Change in More Than One Stream}

We discuss the performance of multi-stream robust algorithm $\tau_{ms}$ in \eqref{eq:multstream_robust}, where change is assumed to happen in at most $K$ streams, and the post-change streams are indexed by $\ta\in B$. 
For the Gaussian case, we choose the following model:
\begin{equation}
\label{eq: multi simulation 4}
	\begin{split}
		f_{\theta, n} = \mathcal{N}(1, 1), \quad
		g_{\theta, n, \nu} = \mathcal{N}(\mu_{\theta, n, \nu}, 1), \quad
		\lambda_{\theta, n, \nu} \geq 1.5,\quad
		\theta = 0, 1, 2.
	\end{split}
\end{equation}
It follows that the post-change LFL for each stream is
$$
\bar g_{\theta, n, \nu} = \mathcal{N}(1.5, 1), \quad \forall n, \nu, \theta. 
$$
With $K = 2$, possible situations of change in subset of streams is
$$\mathcal{B} = \{\{0\}, \{1\}, \{2\}, \{0, 1\}, \{0, 2\}, \{1, 2\}\}.$$ 
We set the change time $\nu = 10$, and subset of streams indexed by $\theta \in B = \{0, 1\}$ change to $\mathcal{N}(\mu_{\theta, n, \nu}, 1)$. We consider the following three experiments:
\begin{itemize}
    \item [1)] At the change point, the density of observations in the streams indexed by $\theta=0, 1$ change to a stationary distribution $g_{\theta, n, \nu} = \mathcal{N}(\mu_{\theta, n, \nu}, 1)$ with $\mu_{\theta, n, \nu} = 2$, while the density of observations in stream $\theta=2$ remains $\mathcal{N}(1, 1)$;
    \item [2)] At the change point, the density of observations in the streams indexed by $\theta=0, 1$ change to a stationary distribution $g_{\theta, n, \nu} = \mathcal{N}(\mu_{\theta, n, \nu}, 1)$ with $\mu_{\theta, n, \nu} = 4$, while the density of observations in stream $\theta=2$ remains $\mathcal{N}(1, 1)$;
    \item [3)] At the change point, the density of observations in the streams indexed by $\theta=0, 1$ change to a non-stationary distribution $g_{\theta, n, \nu} = \mathcal{N}(\mu_{\theta, n, \nu}, 1)$ with $\mu_{\theta, n, \nu}\sim\text{Unif}[2, 4]$ and the mean value is randomly chosen for each $n$, while the density of observations in stream $\theta=2$ remains $\mathcal{N}(1, 1)$.
\end{itemize}

\begin{figure}[ht]
	\centering
	\includegraphics[scale=0.5]{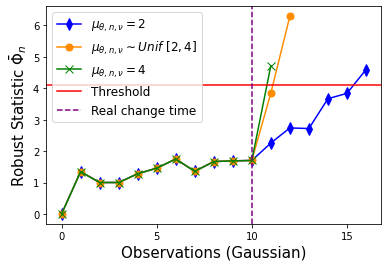}
	\includegraphics[scale=0.5]{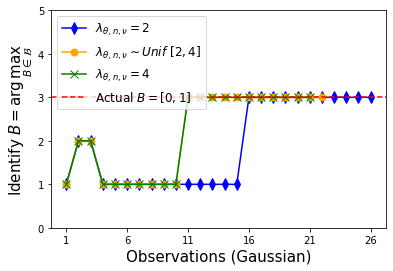}
    \caption{[Left] Detecting change with multi-stream robust statistic $\bar\Psi_n$ using $\bar f_{\theta, n} = \mathcal{N}(1, 1)$ and $\bar g_{\theta, n, \nu} = \mathcal{N}(1.5, 1)$. Three scenarios of $\{X_{\ta, n}\}_{\ta\in\Theta}$ are generated with post-change $\mathcal{N}(\mu_{\theta, n, \nu}, 1)$. [Right] Identifying post-change scenario by $\arg\max_{B \in \mathcal{B}}$, given that the actual change happens in $\theta\in B = \{0, 1\}$ (index 3 in $\mathcal{B}$). }
    \label{fig:multi normal}
\end{figure}

The multi-stream robust statistic $\bar\Psi_n$ in \eqref{eq:multstream_robust} is calculated, with the threshold being $\log(|\mathcal{B}|/\alpha) = \log(10|\mathcal{B}|)$. 
In Figure \ref{fig:multi normal} [Left], the detection delay for the three experiments are 6, 1, and 2, respectively. 
In Figure \ref{fig:multi normal} [Right], the change is detected in stream $\theta \in B = \{0, 1\}$ (index 3 in $\mathcal{B}$), which is consistent with the truth. Such identification of post-change scenario is induced by $\arg\max_{B \in \mathcal{B}}$ in \eqref{eq:argmax_to_identify_stream}, and 10 additional observations are taken after detected change time to ensure stable result. 

For the Poisson case, we choose the following model:
\begin{equation}
\label{eq: multi simulation 3}
	\begin{split}
		f_{\theta, n} = \text{Pois}(1), \quad
		g_{\theta, n, \nu} = \text{Pois}(\lambda_{\theta, n, \nu}), \quad
		\lambda_{\theta, n, \nu} \geq 1.5,\quad
		\theta = 0, 1, 2.
	\end{split}
\end{equation}
It follows that the post-change LFL for each stream is
$$
\bar g_{\theta, n, \nu} = \text{Pois}(1.5), \quad \forall n, \nu, \theta. 
$$
With $K = 2$, possible situations of change in subset of streams is
$$\mathcal{B} = \{\{0\}, \{1\}, \{2\}, \{0, 1\}, \{0, 2\}, \{1, 2\}\}.$$ 
We set the change time $\nu = 10$, and subset of streams indexed by $\theta \in B = \{1, 2\}$ change to $\mathcal{N}(\mu_{\theta, n, \nu}, 1)$.
We consider the following three experiments:
\begin{itemize}
    \item [1)] At the change point, the density of observations in the streams indexed by $\theta=1, 2$ change to a stationary distribution $g_{\theta, n, \nu} = \text{Pois}(\lambda_{\theta, n, \nu})$ with $\lambda_{\theta, n, \nu} = 2$, while the density of observations in stream $\theta=0$ remains $\text{Pois}(1)$;
    \item [2)] At the change point, the density of observations in the streams indexed by $\theta=1, 2$ change to a stationary distribution $g_{\theta, n, \nu} = \text{Pois}(\lambda_{\theta, n, \nu})$ with $\lambda_{\theta, n, \nu} = 4$, while the density of observations in stream $\theta=0$ remains $\text{Pois}(1)$;
    \item [3)] At the change point, the density of observations in the streams indexed by $\theta=1, 2$ change to a non-stationary distribution $g_{\theta, n, \nu} = \text{Pois}(\lambda_{\theta, n, \nu})$ with $\lambda_{\theta, n, \nu}\sim\text{Unif}[2, 4]$ and the mean value is randomly chosen for each $n$, while the density of observations in stream $\theta=0$ remains $\text{Pois}(1)$.
\end{itemize}

\begin{figure}[ht]
	\centering
	\includegraphics[scale=0.5]{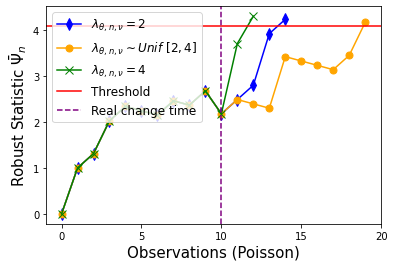}
	\includegraphics[scale=0.5]{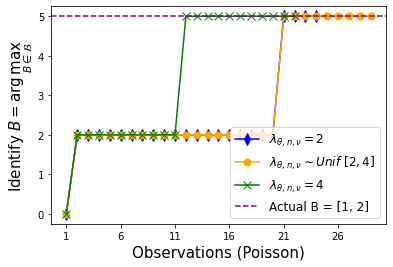}
    \caption{[Left] 
    Detecting change time with multi-stream robust statistic $\bar\Psi_n$ in \eqref{eq:multstream_robust} using $\bar f_{\theta, n} = \text{Pois}(1)$ and $\bar g_{\theta, n, \nu} = \text{Pois}(1.5)$. The three scenarios of observational data are generated following \eqref{eq: multi simulation 3}. [Right] Identifying post-change scenario by $\arg\max_{B \in \mathcal{B}}$, given that the actual change happens in $\theta\in B = \{1, 2\}$ (index 5 in $\mathcal{B}$). }
    \label{fig:multi poisson}
\end{figure}

The multi-stream robust statistic $\bar\Psi_n$ in \eqref{eq:multstream_robust} is calculated, with the threshold being $\log(|\mathcal{B}|/\alpha) = \log(10|\mathcal{B}|)$. 
In Figure \ref{fig:multi poisson} [Left], the detection delay for the three experiments are 4, 2, and 9, respectively.
In Figure \ref{fig:multi poisson} [Right], the changes is identified in stream $\theta \in B = \{1, 2\}$ (index 5 in $\mathcal{B}$), aligning with the truth. Such identification of post-change scenario is induced by $\arg\max_{B \in \mathcal{B}}$ in \eqref{eq:argmax_to_identify_stream}, and 10 additional observations are taken after detected change time to ensure stable result.

\subsubsection{Applying Multi-stream Robust Algorithm to Pittsburgh Flight Data}

In this section, we apply the multi-stream robust algorithm $\tau_{ms}$ in \eqref{eq:multstream_robust} to detect the arrival of aircraft based on data collected around Pittsburgh-Butler Regional Airport (\cite{Patrikar2021}). 
We assume that change can occur in at most $K = 3$ streams. 

The distance measurements for the last 100 seconds for randomly selected 35 flights are shown in Figure \ref{fig: flight distance}. We converted these distance measurements to signals by using the transformation $10/\text{Distance}$ (see Figure \ref{fig: flight distance} [Right]). $\mathcal{N}(0, 1)$ noise is added after padding 10 zeros at the beginning of the signals (see Figure \ref{fig: flight signal} [Right]). The noise is also added to simulate a scenario where an approaching enemy is being detected in a noisy environment.

\begin{figure}[ht]
	\centering
	\includegraphics[scale=0.5]{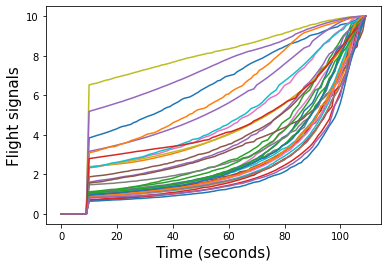}
	\includegraphics[scale=0.5]{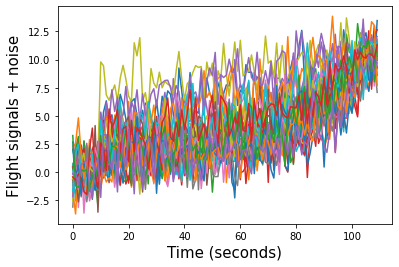}
    \caption{[Left] Flight signals for the last 100 seconds of 35 randomly chosen aircraft arriving at the Pittsburgh-Bulter Regional Airport (\cite{Patrikar2021}). Signals are padded with zeros, with the first 10 seconds as time before aircraft appear in the sensor system. [Right] Flight signal corrupted by $\mathcal{N}(0, 1)$ noise.}
    \label{fig: flight signal}
\end{figure}

\begin{figure}[ht!]
	\centering
	\includegraphics[scale=0.5]{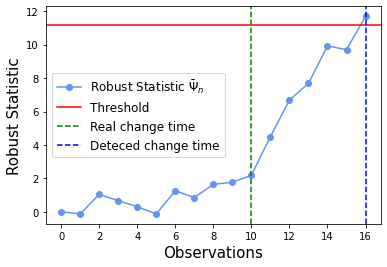}
	\includegraphics[scale=0.5]{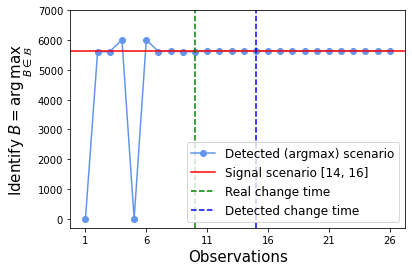}
    \caption{[Left] Detecting change via multi-stream robust statistic $\bar\Psi_n$ with $\bar f_{\theta, n} = \mathcal{N}(0, 1)$ and $\bar g_{\ta, n, \nu} = \mathcal{N}(0.5, 1)$. Threshold is $\log(|\mathcal{B}|/\alpha) = \log(7175 \times 10)$. 
    [Right] Identifying arrived flights, induced by $\arg\max_{B \in \mathcal{B}}$ from $\bar\Psi_n$, given that the true flights identities are $\{14, 16\}$.}
    \label{fig: flight signal detect}
\end{figure}

To simulate the asynchronous arrival of different flights, we randomly select 1 to 3 flights among the 35 flights to create the subset $B \subset [35]$. We then generate data streams as follows: we create $35$ streams of data, one for each flight. For streams not in $B$, the data remains $\mathcal{N}(0,1)$ before and after change. For streams in $B$, the data corresponds to the flight data for that stream corrupted by $\mathcal{N}(0,1)$ noise. 
Thus, the collection of possible post-change scenarios $\mathcal{B}$ is composed of the combination of $C^{35}_1$, $C^{35}_2$, and $C^{35}_3$, making $|\mathcal{B}| = 7175$. 
In Figure \ref{fig: flight signal detect}, we apply the multi-stream robust algorithm \eqref{eq:multstream_robust} to detect change time and identify the detected flights by \eqref{eq:argmax_to_identify_stream}, with 10 additional observations are taken after detected change time to ensure stable result. Specifically, we use the multi-stream robust statistic $\bar\Psi_n$ with $\bar f_{\theta, n} = \mathcal{N}(0, 1)$ and $\bar g_{\ta, n, \nu} = \mathcal{N}(0.5, 1)$. The threshold is set to $\log(|\mathcal{B}|/\alpha) = \log(7175 \times 10)$. 

\subsubsection{Applying Multi-stream Robust Algorithm to COVID Infection Data}
We apply the special case of multi-stream robust algorithm $\tau_{mc}$ in \eqref{eq:multstream_robust_onechange} to detect the earliest onset of a pandemic among counties in one state. The daily confirmed cases in Alabama (AL) and Pennsylvania (PA) for the first 150 days (starting from 2020/1/22) are plotted in Figure \ref{fig: covid infection}, and Pois(1) noise are added in Figure \ref{fig:AL infection} [Upper left] and Figure \ref{fig:PA infection} [Upper left]. 
County with the earliest infections (referring to the highest infections in early days) and the highest infections (for most of the time) are highlighted. We want to detect the county with the earliest infections, instead of the highest. 

\begin{figure}[!t]
	\centering
	\includegraphics[scale=0.45]{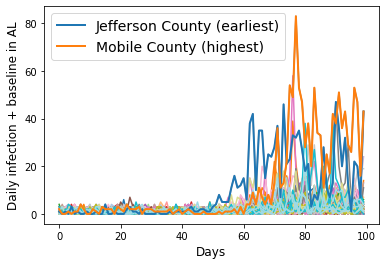}
	\includegraphics[scale=0.45]{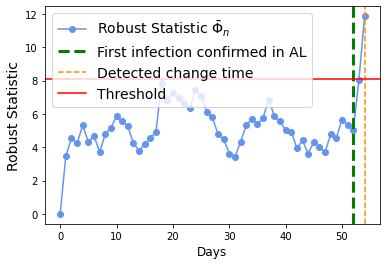}
	\includegraphics[scale=0.45]{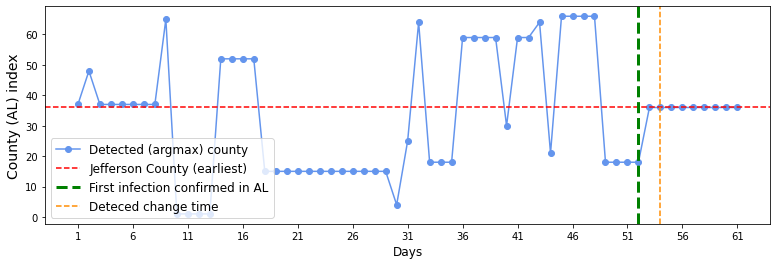}
    \caption{[Upper left] Noisy COVID-19 daily infection of 67 counties in AL. 
    [Upper right] Detecting the earliest onset time via multi-stream robust statistic $\bar\Phi_n$ with $\bar f_{\ta, n} = \text{Pois}(1)$ and $\bar g_{\ta, n, \nu} = \text{Pois}(2)$. The detected change time is 52 (2020/3/14). 
    [Lower] Identifying county with the earliest onset (Jefferson) in AL, induced by $\arg\max_{\theta \in \Theta}$.}
    \label{fig:AL infection}
\end{figure}

\begin{figure}[!t]
	\centering
	\includegraphics[scale=0.45]{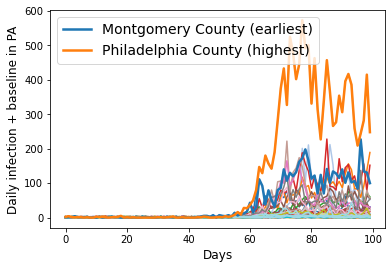}
	\includegraphics[scale=0.45]{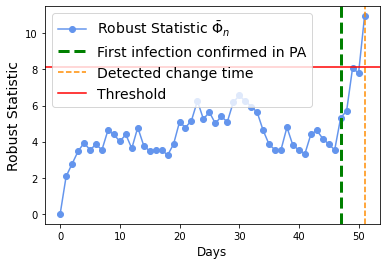}
	\includegraphics[scale=0.45]{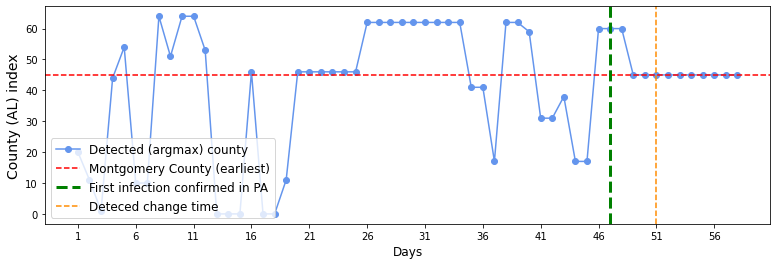}
    \caption{[Top left] Noisy COVID-19 daily infection of 67 counties in PA. 
    [Top right] Detecting the earliest onset time via multi-stream robust statistic $\bar\Phi_n$ with $\bar f_{\ta, n} = \text{Pois}(1)$ and $\bar g_{\ta, n, \nu} = \text{Pois}(2)$. The detected change time is 48 (2020/3/10). 
    [Lower] Identifying county with the earliest onset  (Montgomery) in PA, induced by $\arg\max_{\theta \in \Theta}$.}
    \label{fig:PA infection}
\end{figure}

We calculated $\bar\Phi_n$ in \eqref{eq:multstream_robust_onechange} with pre-LFL $\bar f_{\ta, n} = \text{Pois}(1)$ and post-LFL $\bar g_{\ta, n, \nu} = \text{Pois}(2)$. The threshold is $\log(|\Theta|/\alpha) = \log(50 \times 67)$. 
In Figure \ref{fig:AL infection} [Upper right] and Figure \ref{fig:PA infection} [Upper right], the multi-stream robust statistic detects the change quickly. In Figure \ref{fig:AL infection} [Lower] and Figure \ref{fig:PA infection} [Lower], counties with the earliest onsets are correctly identified by \eqref{eq:argmax_to_identify_stream}, with 7 additional observations taken after detected change time to ensure stable result.

\section{Conclusion}
\label{sec:conc}
We developed optimal robust algorithms for the quickest detection of changes in multi-stream non-stationary processes. We allowed for the possibility of unknown distributions both before and after the change in each stream. Our robust algorithms are based on the notion of the least favorable distribution or law (LFL). Our definition of LFL is different from the definitions used in classical QCD literature because we also allow our post-change statistical model to depend on the change point. Because of this dependence, we also need a stronger notion of monotone likelihood ratios to prove optimality. Thus, we provide a novel set 
of sufficient conditions for robust optimality in a multi-stream non-stationary setting. We also showed that these sufficient conditions are satisfied by Gaussian or Poisson count multi-stream data, and then used our algorithm for detecting changes in COVID and flight data.

\appendix

\section{Appendix}

\subsection{Proof of Theorem \ref{thm:LFLRobust_minimax}}\label{appen:1}
For the mean time to a false alarm, we show that the stopping rule defined in \eqref{eq:tau_t_star},  $\tau_{ss}$, satisfies the constraint in \eqref{eq:robustProbmini}. 
For the delay analysis, we show that the stopping rule $\tau_{ss}$ achieves the worst-case detection delay (over all possible pre- and post-change laws $F$ and $G$) when the pre-change distribution sequence is $\bar F = \{\bar f_n\}$, and the post-change distribution sequence is $\bar G = \{\bar g_{n, \nu}\}$. This ensures that the worst-case delay of $ \tau_{ss}$ is lower than the worst-case delay of any other stopping rule satisfying the false alarm constraint in \eqref{eq:robustProbmini}. 

\subsubsection{Mean Time to a False Alarm Analysis for Theorem \ref{thm:LFLRobust_minimax}}
To show that $\tau_{ss}$ satisfies the constraint in \eqref{eq:robustProbmini}, we show that the stopping rule has a mean time to a false alarm higher than $1/\alpha$, for all pre-change laws $F \in \mf$. For this purpose, we assume that the change does never occurs, i.e., $\nu = \infty$.

Applying Lemma \eqref{lem:stocbound_UV} with \eqref{eq:stocbounded-1}, and using the conditions of the theorem, we have
\begin{equation}\label{eq:cusum case 1}
    \begin{split}
        \Prob_{\infty}^{\bar F}(\tau_{ss} \leq N) 
        &= \Prob_{\infty}^{\bar F}[h(X_1, \dots, X_N) \geq 0]\\
        &\stackrel{(1a)}{\geq} \Prob_{\infty}^{F}[h(X_1, \dots, X_N) \geq 0]
        =\Prob_{\infty}^{F}(\tau_{ss} \leq N), \quad \forall F \in \mf,
    \end{split}
\end{equation}
where the function $h(z_1, z_2, \dots, z_{N})$ is given by
	\begin{equation}\label{eq:cusum function h}
		\begin{split}
			h(z_1, z_2, \dots, z_N) 
			&= \max_{1 \leq n \leq N} \left(\max_{1 \leq k \leq n} \left(\sum_{i=k}^n \log \frac{\bar{g}_{i,k}(z_{i})}{\bar f_i(z_i)}\right)- \bar A_{n, \alpha}\right).
		\end{split}
	\end{equation}
The inequality $h(X_1, X_2, \dots, X_{N}) \geq 0$ in \eqref{eq:cusum case 1} means that the CUSUM statistics defined in \eqref{eq:W_t_n} exceeds the sequence of thresholds $\{ \bar A_{n, \alpha}\}$ before time $N$.
Inequality $(1a)$ in \eqref{eq:cusum case 1} is implied by Lemma \ref{lem:stocbound_UV}, with independent $X_1, \dots, X_N$, $U_i = X_i \sim \bar f_{n}$, $V_i = X_i \sim f_{n}$ ($\bar f_{n} \succ {f}_{n}$ from \eqref{eq:stocbounded-1}),  and $h(X_1, X_2, \dots, X_{N})$ is a continuous and increasing function of these variables due to continuity and monotonicity of likelihood ratios and the sum and maximum operations.

Thus, we have the following:
\begin{align*}
    \Prob_{\infty}^{F}(\tau_{ss} \leq N) \leq \Prob_{\infty}^{\bar F}(\tau_{ss} \leq N), \quad \forall N \geq 0.
\end{align*}
After multiplying by $-1$, then adding $1$, summing over all $N \geq 0$, and invoking the assumption (b) in Theorem \eqref{thm:LFLRobust_minimax}, we have
\begin{align}
    \Expect_{\infty}^F(\tau_{ss}) \geq \Expect_{\infty}^{\bar F}(\tau_{ss}) = \frac{1}{\alpha}, \quad \forall F \in \mf.
\end{align}
Taking infimum over all $F \in \mathcal{F}$,  we get
\begin{align*}
    \inf_{F\in\mf} \Expect_{\infty}^F(\tau_{ss}) \geq \frac{1}{\alpha},
\end{align*}
and the mean time to a false alarm rate is higher than $1/\alpha$ for all $F \in \mf$. Thus, the constraint in \eqref{eq:robustProbmini} is satisfied by $\tau_{ss}$. 

\subsubsection{Detection delay analysis for Theorem \ref{thm:LFLRobust_minimax}}
We now show that the stopping rule $\tau_{ss}$ achieves the worst-case detection delay over all pre- and post-change laws when the laws are given by the LFLs. For this purpose, we consider $\nu=k$, $1 \leq k < \infty$.
The key step in the proof is to show that for each $k \in \mathbb{N}$,
\begin{equation}\label{eq:cusum keystep}
	\begin{split}
	\Expect_k^{\bar F, \bar{G}}&\left[(\tau_{ss} - k + 1)^+ | \mathscr{F}_{k-1}\right] 
   \geq \Expect_k^{F, G}\left[(\tau_{ss} - k+ 1)^+ | \mathscr{F}_{k-1}\right], \quad \forall (F, G) \in \mf\times\mg.
	\end{split}
\end{equation}
If the above statement is true, by the definition of essential supremum, we have the following for all $(F, G) \in \mf\times\mg$,
\begin{equation}\label{eq:esssuporder}
	\begin{split}
		\esssup \Expect_k^{\bar F, \bar{G}}&\left[(\tau_{ss} - k + 1)^+ | \mathscr{F}_{k-1}\right] 
        \geq \esssup \Expect_k^{F, G}\left[(\tau_{ss} - k+ 1)^+ | \mathscr{F}_{k-1}\right].
	\end{split}
\end{equation}
Since \eqref{eq:esssuporder} is true for every $k \geq 1$ and all $(F, G) \in \mf\times\mg$, 
\begin{equation*}
		\begin{split}
		\sup_k\esssup \Expect_k^{\bar F, \bar{G}}&\left[(\tau_{ss} - k + 1)^+ | \mathscr{F}_{k-1}\right] 
        \geq \sup_k \esssup \Expect_k^{F, G}\left[(\tau_{ss} - k+ 1)^+ | \mathscr{F}_{k-1}\right].
		\end{split}
\end{equation*}
This gives us
\begin{equation*}
		\sup_k\esssup\Expect_k^{\bar F, \bar{G}} \left[(\tau_{ss} - k + 1)^+ | \mathscr{F}_{k-1}\right] 
        = \sup_{(F, G) \in \mf\times\mg} \sup_k \esssup \Expect_k^{F, G}\left[(\tau_{ss} - k+ 1)^+ | \mathscr{F}_{k-1}\right].
\end{equation*}
By the Definition \eqref{eqn:wadd definition}, we have the core statement regarding worst-case delay,
\begin{equation}\label{eqn:WADD equality}
 \text{WADD}^{\bar F, \bar{G}}(\tau_{ss})=\sup_{(F, G)\in \mf\times\mg} \text{WADD}^{F, G}(\tau_{ss}).
\end{equation}
Thus, the supremum on the right is achieved at the LFLs. 

Now, if $\tau$ is any stopping rule satisfying the mean time to a false alarm constraint of $1/\alpha$ uniformly over $F$, i.e., satisfies \eqref{eq:robustProbmini}, then by the optimality or asymptotic optimality assumption on $\tau_{ss}$ for the LFLs, we have 
	\begin{equation}
		\begin{split}
			\sup_{(F, G)\in\mf\times\mg} \text{WADD}^{F, G}({\tau})
            \geq \text{WADD}^{\bar F, \bar{G}}({\tau})
            &\geq \text{WADD}^{\bar F, \bar{G}}(\tau_{ss}) (1+o^*(1))\\
			&= \sup_{(F, G)\in\mf\times\mg} \text{WADD}^{F, {G}}(\tau_{ss})(1+o^*(1)).
		\end{split}
	\end{equation}
Here, the term $o^*(1)$ is defined as 
    \begin{equation*}
        o^*(1) = 
        \begin{cases}
                0, &\quad \text{if $\tau_{ss}$ is exactly optimal,}\\
                o(1), &\quad \text{if $\tau_c$ is asymptotically optimal}.\\
        \end{cases}
    \end{equation*}
Thus, if we can prove the key step \eqref{eq:cusum keystep}, the above equation shows the robust optimality of the stopping rule $\tau_{ss}$.

We now prove the key step \eqref{eq:cusum keystep}. We prove this by showing that
\begin{equation}
\label{eq:cusum case 2}
\begin{split}
    \Expect_k^{\bar{F}, \bar{G}} \left[(\tau_{ss} - k+1)^+ \right | \mathscr{F}_{k-1}] 
    &\stackrel{(1b)}{=} \Expect_k^{F, \bar{G}}\left[(\tau_{ss} - k+1)^+ \right | \mathscr{F}_{k-1}]\\
    &\stackrel{(1c)}{\geq} \Expect_k^{F, G}\left[(\tau_{ss} - k+1)^+ \right | \mathscr{F}_{k-1}].
\end{split}
\end{equation}
The equality $(1b)$ in \eqref{eq:cusum case 2} is true because for CUSUM-type schemes, once we condition on the past realizations, the conditional expectation given no longer depends on the distribution of those realizations. 

Next, we prove the inequality $(1c)$ in \eqref{eq:cusum case 2}. To prove this, we show that for integers $N\geq 0$, $k \geq 1$, and all $(F, G) \in \mf\times\mg$,
	\begin{equation}\label{eq:keystep44}
		\begin{split}
			\Prob_k^{F, \bar{G}}\left[(\tau_{ss} - k+ 1)^+ > N |\mathscr{F}_{k-1}\right] 
            &\geq \Prob_k^{F, G}\left[(\tau_{ss} - k+ 1)^+ > N |\mathscr{F}_{k-1}\right].
		\end{split}
	\end{equation}
We prove this inequality for $N=0$ and $N \geq 1$ separately. 

For $N=0$, note that for arbitrary $\tau$,
\begin{align}\label{eqn:event equality 1}
    \{({\tau} - k + 1)^+ > 0\} = \{{\tau} - k + 1 > 0\} = \{{\tau} >  k- 1\} = \{{\tau} \leq  k- 1\}^c. 
\end{align}
Since the event $\{{\tau} \leq  k- 1\}$ is $\mathscr{F}_{k-1}$ measurable, both sides of the inequality \eqref{eq:keystep44} are equal to the indicator of the event $\{(\tau_{ss} - k + 1)^+ > 0\}$. This proves \eqref{eq:keystep44} for $N=0$. 
 
To prove \eqref{eq:keystep44} for $N \geq 1$, we show that for $N \geq 1$, $k \geq 1$, 
 \begin{equation*}
		\begin{split}
		\Prob_k^{F, \bar{G}}\left[(\tau_{ss} - k + 1)^+ \leq N  | \mathscr{F}_{k-1}\right] 
        &\leq \Prob_k^{F, G}\left[(\tau_{ss} - k+ 1)^+ \leq N | \mathscr{F}_{k-1} \right], \quad  \forall G \in \mathcal{G}.
		\end{split}
	\end{equation*}
First note that for arbitrary $\tau$, the events $\{(\tau - k+1)^+ \leq N\}$ and $\{(\tau - k+1) \leq N\}$ are identical when $N \geq 1$:
\begin{equation}
\label{eq:eventequiv_1}
    \begin{split}
       &\{(\tau - k+1)^+ \leq N\} \\
       =\ & \left[ \{(\tau - k+1)^+ \leq N\} \cap \{\tau \geq k-1 \}\right] \cup \left[ \{(\tau - k+1)^+ \leq N\} \cap \{\tau < k-1 \}\right]\\
       =\ & \left[ \{(\tau - k+1) \leq N\} \cap \{\tau \geq k-1 \}\right] \cup \left[ \{(\tau - k+1)^+ \leq N\} \cap \{\tau < k-1 \}\right]\\
       \stackrel{(1d)}{=}\ & \left[ \{(\tau - k+1) \leq N\} \cap \{\tau \geq k-1 \}\right] \cup \left[  \{\tau < k-1 \}\right]\\
       \stackrel{(1e)}{=}\ & \left[ \{(\tau - k+1) \leq N\} \cap \{\tau \geq k-1 \}\right] \cup \left[ \{(\tau - k+1) \leq N\} \cap \{\tau < k-1\}\right]\\
       =\ & \{(\tau - k+1) \leq N\}.
    \end{split}
\end{equation}
Here the equalities $(1d)$ and $(1e)$ follow because for $N \geq 0$,
\begin{equation*}
    \begin{split}
        \{\tau < k-1 \}  &\subset \{(\tau - k+1)^+ \leq N\}, \\
        \{\tau < k-1 \}  &\subset \{(\tau - k+1) \leq N\}.
    \end{split}
\end{equation*}
This equivalence of events implies that
	\begin{equation}\label{eq:temp1-1}
		\begin{split}
			\Prob_k^{F, \bar{G}}\left[(\tau_{ss} - k + 1)^+ \leq N |\mathscr{F}_{k-1}\right] 
			&= \Prob_k^{F, \bar{G}}\left[\tau_{ss} \leq k+N -1 |\mathscr{F}_{k-1}\right] \\
			& = \Prob_k^{F, \bar{G}}\left[h(X_1, X_2, \dots, X_{k+N-1}) \; \geq \; 0 \; | \; \mathscr{F}_{k-1}\right],
		\end{split}
	\end{equation}
where the function $h(z_1, \dots, z_{k+N-1})$ is defined as in \eqref{eq:cusum function h}: 
\begin{equation}\label{eq:cusum function h_1}
		\begin{split}
			h(z_1, z_2, \dots, z_{k+N-1}) 
			&= \max_{1 \leq n \leq k+N-1} \left(\max_{1 \leq k \leq n} \left(\sum_{i=k}^n \log \frac{\bar{g}_{i,k}(z_{i})}{\bar f_i(z_i)}\right)- \bar A_{n, \alpha}\right).
		\end{split}
	\end{equation}
The inequality $h(X_1, \dots, X_{k+N{-1}}) \geq 0$ in (\ref{eq:temp1-1}) means that the statistics defined in (\ref{eq:W_t_n}) exceeds the sequence of thresholds $\{ \bar A_{n, \alpha}\}$ before or at time $k + N{-1}$.
 
Using the stochastic boundedness assumption \eqref{eq:stocbounded-2}, 
 	\begin{equation*}
		\begin{split}
			g_{n, \nu}&\succ 	\bar{g}_{n, \nu}, \quad \forall g_{n, \nu} \in \mathcal{P}^G_{n, \nu}, \quad n, \nu=1,2, \dots,
		\end{split}
	\end{equation*}
we have
	\begin{equation}\label{eq:temp4}
		\begin{split}
			\Prob_k^{F,\bar{G}}\left[(\tau_{ss} - k+1)^+ \leq N \; | \; \mathscr{F}_{k-1}\right] 
			&= \Prob_k^{F,\bar{G}}\left[\tau_{ss} \leq k+N-1\; |\; \mathscr{F}_{k-1}\right] \\
			& = \Prob_k^{F,\bar{G}}\left[h(X_1, X_2, \dots, X_{k+N-1}) \geq 0 \; |\; \mathscr{F}_{k-1}\right] \\
			&\stackrel{(1f)}{\leq} \Prob_k^{F,{G}}\left[h(X_1, X_2, \dots, X_{k+N-1}) \geq 0 \; |\; \mathscr{F}_{k-1}\right] \\ 
            &= \Prob_k^{F,G}\left[\tau_{ss} \leq k+N-1 \; |\; \mathscr{F}_{k-1}\right] \\
			&=\Prob_k^{F,{G}}\left[(\tau_{ss} - k+1)^+ \leq N \; |\; \mathscr{F}_{k-1}\right].
		\end{split}
	\end{equation}
The inequality $(1f)$ is implied by Lemma \ref{lem:stocbound_UV}. 
The first $k-1$ variables of $(X_1, \dots, X_{k+N})$ are fixed after conditioning on $\mathscr{F}_{k-1}$ and the remaining $X_k, \dots, X_{k+N-1}$ are independent, with their law governed by $\bar{G}$ to the left of inequality, and by $G$ to the right ($g_{n, \nu} \succ \bar g_{n, \nu}$ from \eqref{eq:stocbounded-2}).
Besides, the function $h(X_1, X_2, \dots, X_{k+N-1})$ defined in \eqref{eq:cusum function h_1} is continuous and increasing in these independent variables $(X_1, \dots, X_{k+N})$, since finite number of maximum and sum operations preserve continuity and the monotone and continuous likelihood ratio assumptions stated in the theorem statement.
Therefore, we can invoke Lemma \ref{lem:stocbound_UV}. This proves \eqref{eq:keystep44}, and hence \eqref{eq:cusum case 2} and \eqref{eq:cusum keystep}.

\subsection{Proof of Theorem \ref{thm:2}}\label{appen:2}
The proof is based on similar arguments as the proof of Theorem \ref{thm:LFLRobust_minimax} given above. However, the notations are more involved. We provide the proof here for completeness. 

\subsubsection{Mean time to a false alarm analysis for Theorem \ref{thm:2}}
We prove that the stopping rule $\tau_{ms}$ in \eqref{eqn: tau glr B} has a mean-time to false alarm higher than $1/\alpha$, for all $F \in \mf$. 
We have
\begin{equation}\label{eq:cusum case 11}
    \begin{split}
        \Prob_{\infty}^{\bar{F}}(\tau_{ms} \leq N) 
        &= \Prob_{\infty}^{\bar{F}}[l(\{X_{\ta, 1}, \dots, X_{\ta,N}\}_{\ta\in\Theta}) \geq 0]\\
        &\stackrel{(2a)}{\geq} \Prob_{\infty}^{F}[l(\{X_{\ta,1}, \dots, X_{\ta,N}\}_{\ta\in\Theta}) \geq 0]
        =\Prob_{\infty}^{F}(\tau_{ms} \leq N), \quad \forall F \in \mf,
    \end{split}
\end{equation}
where the function $l(\{z_{\ta,1}, \dots, z_{\ta,N}\}_{\ta\in\Theta})$ is given by
	\begin{equation*}
		\begin{split}
			l(\{z_{\ta,1}, \dots, z_{\ta,N}\}_{\ta\in\Theta}) 
		=  \max_{1 \leq n \leq N} \left(\max_{1 \leq k \leq n}\max_{B \in \mb} \left(\sum_{\ta\in B}\sum_{i=k}^n \log\frac{\bar{g}_{\theta, i,k}(z_{\ta, i})}{\bar{f}_{\theta,i}(z_{\ta, i})}\right)-{{A}_{n, \alpha}}\right).
		\end{split}
	\end{equation*}
Thus, $l(\{X_{\ta, 1}, \dots, X_{\ta,N}\}_{\ta\in\Theta}) \geq 0$ in \eqref{eq:cusum case 11} means that the multi-stream robust test statistic \eqref{G_n_B} exceeds the sequence of thresholds $\{A_{n, \alpha}\}$ before time $N$.
Inequality $(2a)$ in \eqref{eq:cusum case 11} follows from Lemma \eqref{lem:stocbound_UV}, the fact that $(\bar{F}, \bar{G})$ are LFLs, and the monotone and continuous likelihood ratio assumptions stated in the theorem statement.
Thus, we have the following:
\begin{align*}
    \Prob_{\infty}^{F}(\tau_{ms} \leq N) \leq \Prob_{\infty}^{\bar{F}}(\tau_{ms} \leq N), \quad \forall N \geq 1.
\end{align*}
After multiplying by $-1$, then adding $1$, summing over all $N \geq 0$, and invoking the assumption \eqref{eqn:E_infty_glr multi 2}, we get 
\begin{align}
\nonumber
    \Expect_{\infty}^{F}(\tau_{ms}) \geq \Expect_{\infty}^{\bar{F}}(\tau_{ms}) = \frac{1}{\alpha}, \quad \forall F \in \mf.
\end{align}
Since $\bar{F} \in \mf$, we have the mean time to a false alarm higher than $1/\alpha$ for all $F \in \mf$,
\begin{align*}
    \inf_{F\in\mf} \Expect_{\infty}^{F}(\tau_{ms}) \geq \frac{1}{\alpha}.
\end{align*}

\subsubsection{Detection delay analysis for Theorem \ref{thm:2}}
We show that the stopping rule $\tau_{ms}$ in \eqref{eqn: tau glr B} achieves the worst-case detection delay when the distribution is LFL. 
Without loss of generality, suppose $B$ is the true but fixed post-change parameter set, where the changes occur in stream $\theta\in B$. 

For any stopping rule $\tau$ satisfying $\inf_{F\in\mf} \Expect_{\infty}^{F}(\tau) \geq \frac{1}{\alpha}$, since 
$\tau_{ms}$ is optimal for problem \eqref{eq:QCDproblem-B} with $\mf = \{\bar{F}\}$ and $\mg = \{\bar{G}\}$, we have 
\begin{align} \label{eq: optimal performance}
    \text{WADD}^{\bar F, \bar G, B}(\tau)
    \geq
    \text{WADD}^{\bar{F}, \bar{G}, B}(\tau_{ms})(1 + o^*(1)).
\end{align}
The term $o^*(1)$ is defined as
        \begin{equation*}
            o^*(1) = 
            \begin{cases}
                0, &\quad \text{if $\tau_{ms}$ is exactly optimal,}\\
                o(1), &\quad \text{if $\tau_{ms}$ is asymptotically optimal}.\\
            \end{cases}
        \end{equation*}
Hence, we have
\begin{align} \label{eqn:mid result 1}
\begin{split}
    \sup_{F \in\mf, G \in\mg } \text{WADD}^{F, G, B}(\tau)
    &\geq \text{WADD}^{\bar F, \bar G, B}(\tau)\\
    &\geq \text{WADD}^{\bar{F}, \bar{G}, B}(\tau_{ms})(1 + o^*(1)).
\end{split}
\end{align}
To complete the proof, we will show that
\begin{align}\label{eqn:mid result 2}
    \text{WADD}^{\bar{F}, \bar{G}, B }(\tau_{ms}) 
    \geq \sup_{F \in\mf, G \in\mg }
    \text{WADD}^{F, G, B}(\tau_{ms}).
\end{align}
If this is true, we can substitute it in \eqref{eqn:mid result 1} to get
\begin{align*}
    \sup_{F \in\mf, G \in\mg } \text{WADD}^{F, G, B}(\tau)
    \geq
    \sup_{F \in\mf, G \in\mg }
    \text{WADD}^{F, G, B}(\tau_{ms})(1 + o^*(1)).
\end{align*}
Thus, $\tau_{ms}$ will be robust optimal to problem \eqref{eqn:robustprob2}. 

We are left with proving the core statement \eqref{eqn:mid result 2}. Similar to the single-stream case, the key step is the following: for each $k \in \mathbb{N}$ and all $(F, G) \in \mf\times\mg $,
\begin{equation}\label{eq:keystep31}
\begin{split}
\Expect_{k}^{\bar{F}, \bar{G}, B }\left[(\tau_{ms} - k+1)^+ | \mathscr{F}_{k-1}\right] 
   &\stackrel{(2b)}{=}
   \Expect_{k}^{F, \bar{G}, B }\left[(\tau_{ms} - k+1)^+ | \mathscr{F}_{k-1}\right]\\
   &\stackrel{(2c)}{\geq} 
   \Expect_{k}^{F, {G}, B }\left[(\tau_{ms} - k+1)^+ | \mathscr{F}_{k-1}\right].
\end{split}
\end{equation}
The justification for (2b) is similar to that given in the single-stream case. Thus, we only need to establish the the inequality (2c). This is because if we have (2c), then we have the following series of inequalities: 
\begin{equation}
    \begin{split}
        \Expect_{k}^{\bar{F}, \bar{G}, B }\left[(\tau_{ms} - k+1)^+ | \mathscr{F}_{k-1}\right] 
   &\geq
   \Expect_{k}^{F, {G}, B }\left[(\tau_{ms} - k+1)^+ | \mathscr{F}_{k-1}\right], \; \forall F, {G} \\
   \text{WADD}^{\bar{F}, \bar{G}, B }(\tau_{ms}) 
    &\geq 
    \text{WADD}^{F, G, B}(\tau_{ms}), \; \forall F, {G}\\
     \text{WADD}^{\bar{F}, \bar{G}, B }(\tau_{ms}) 
    &\geq \sup_{F \in\mf, G \in\mg }
    \text{WADD}^{F, G, B}(\tau_{ms}).
    \end{split}
\end{equation}
This gives us the core statement \eqref{eqn:mid result 2} because $\bar F \in\mf$, and  $\bar G \in\mg $. 

Thus, the proof is completed if we prove (2c) in \eqref{eq:keystep31}. 
For it, we need to show that for every integer $N\geq 0$, and all $G  \in \mathcal{G} $ and $k \geq 1$, 
	\begin{equation}
 \label{eq:keystep4}
		\begin{split}
            \Prob_{k}^{F, \bar{G}, B }\left[(\tau_{ms} - k+1)^+ > N \mid \mathscr{F}_{k-1}\right] \geq \Prob_{k}^{F, G, B}\left[(\tau_{ms} - k+1)^+ > N \mid \mathscr{F}_{k-1}\right].
		\end{split}
	\end{equation}
We prove this inequality for $N=0$ and $N \geq 1$ separately. 
For $N=0$, by \eqref{eqn:event equality 1} and event $\{\tau_{ms} \leq k - 1\}$ is $\mathscr{F}_{k-1}$-measurable, both sides of the inequality \eqref{eq:keystep4} are equal to the indicator of the event $\{(\tau_{ms} - k + 1)^+ > N\}$. 

For $N \geq 1$, we show the equivalent statement of \eqref{eq:keystep4} for $k \geq 1$ and all $G  \in \mathcal{G} $,
 \begin{equation}\label{eq:keystep33}
		\begin{split}
		\Prob_k^{F, \bar{G}, B }\left[(\tau_{ms} - k + 1)^+ \leq N  | \mathscr{F}_{ k-1}\right] 
        &\leq 
        \Prob_k^{F, G, B}\left[(\tau_{ms} - k+ 1)^+ \leq N | \mathscr{F}_{ k-1} \right].
		\end{split}
	\end{equation}
By the equivalence in \eqref{eq:eventequiv_1}, we have
	\begin{align*}
		\Prob_{k}^{F, \bar{G}, B }\left[(\tau_{ms} - k+1)^+ \leq N \mid \mathscr{F}_{k-1}\right]
		&= \Prob_{k}^{F, \bar{G}, B }\left[\tau_{ms} \leq k+N-1 \mid \mathscr{F}_{k-1}\right] \\
		&=\Prob_{k}^{F, \bar{G}, B }\left[l(\{X_{\ta, 1}, \dots, X_{\ta,k+N-1}\}_{\ta\in\Theta}) \; \geq \; 0 \mid \mathscr{F}_{k-1}\right],
	\end{align*}
where the function $l(\{z_{\ta, 1}, \dots, z_{\ta,k+N-1}\}_{\ta\in\Theta})$ is given by
	\begin{equation*}
			l(\{z_{\ta, 1}, \dots, z_{\ta,k+N-1}\}_{\ta\in\Theta}) 
		=  \max_{1 \leq n \leq N} \left(\max_{1 \leq k \leq n}\max_{B \in \mb} \left(\sum_{\ta\in B}\sum_{i=k}^n \log\frac{\bar{g}_{\theta, i,k}(z_{\ta, i})}{\bar{f}_{\theta,i}(z_{\ta, i})}\right)-{{A}_{n, \alpha}}\right).
	\end{equation*}

This gives us
\begin{equation}
\begin{split}
   \Prob_{k}^{F, \bar{G}, B }&\left[(\tau_{ms} - k+1)^+ \leq N \mid \mathscr{F}_{ k-1}\right]\\
=\ &\Prob_{k}^{F, \bar{G}, B }\left[\tau_{ms} \leq k+N-1 \mid \mathscr{F}_{ k-1}\right] \\
=\ &\Prob_{k}^{F, \bar{G}, B }\left[l(\{X_{\ta, 1}, \dots, X_{\ta,k+N-1}\}_{\ta\in\Theta}) \geq 0 \mid \mathscr{F}_{ k-1}\right] \\
\stackrel{(2d)}{\leq}\ &\Prob_{k}^{F, {G}, B }\left[l(\{X_{\ta, 1}, \dots, X_{\ta,k+N-1}\}_{\ta\in\Theta}) \geq 0 \mid \mathscr{F}_{ k-1}\right] \\ 
=\ &\Prob_{k}^{F, {G}, B }\left[\tau_{ms} \leq k+N-1 \mid \mathscr{F}_{ k-1}\right] \\
=\ &\Prob_{k}^{F, {G}, B }\left[(\tau_{ms} - k+1)^+ \leq N \mid \mathscr{F}_{ k-1}\right].
\end{split}		
\end{equation}
Inequality $(2d)$ is true, because the conditioning on the change point and the past realization $\mathscr{F}_{k-1}$ fixes the first $k-1$ coordinates of $(\{X_{\ta, 1}, \dots, X_{\ta,k+N-1}\}_{\ta\in\Theta})$ for each $\theta$. For $\ta\notin B$, the law remains unchanged. For $\ta\in B$ where changes occur, the rest of the coordinates are independent, with their law being $\bar{G} $ to the left of inequality, and $G $ to the right. 
Inequality $(2d)$ then follows from Lemma \eqref{lem:stocbound_UV}, the fact that $(\bar{F}, \bar{G})$ are LFLs, the fact that finite number of maximum and sum operations preserve continuity, and the monotone and continuous likelihood ratio assumptions stated in the theorem statement.

This proves \eqref{eq:keystep33}, \eqref{eq:keystep4} and hence \eqref{eq:keystep31}.

\section*{Acknowledgment}
Taposh Banerjee was supported in part by the U.S. National Science Foundation under grant 2427909. Hoda Bidkhori was supported in part by the U.S. National Science Foundation under grant 2427910. 

\bibliographystyle{IEEEtran}
\bibliography{QCD.bib}

\begin{thebibliography}{10}
\providecommand{\url}[1]{#1}
\csname url@samestyle\endcsname
\providecommand{\newblock}{\relax}
\providecommand{\bibinfo}[2]{#2}
\providecommand{\BIBentrySTDinterwordspacing}{\spaceskip=0pt\relax}
\providecommand{\BIBentryALTinterwordstretchfactor}{4}
\providecommand{\BIBentryALTinterwordspacing}{\spaceskip=\fontdimen2\font plus
\BIBentryALTinterwordstretchfactor\fontdimen3\font minus \fontdimen4\font\relax}
\providecommand{\BIBforeignlanguage}[2]{{%
\expandafter\ifx\csname l@#1\endcsname\relax
\typeout{** WARNING: IEEEtran.bst: No hyphenation pattern has been}%
\typeout{** loaded for the language `#1'. Using the pattern for}%
\typeout{** the default language instead.}%
\else
\language=\csname l@#1\endcsname
\fi
#2}}
\providecommand{\BIBdecl}{\relax}
\BIBdecl

\bibitem{veer-bane-elsevierbook-2013}
V.~V. Veeravalli and T.~Banerjee, \emph{Quickest Change Detection}.\hskip 1em plus 0.5em minus 0.4em\relax Academic Press Library in Signal Processing: Volume 3 -- Array and Statistical Signal Processing, 2014.

\bibitem{tart-niki-bass-2014}
A.~G. Tartakovsky, I.~V. Nikiforov, and M.~Basseville, \emph{Sequential Analysis: {Hypothesis} Testing and Change-Point Detection}, ser. Statistics.\hskip 1em plus 0.5em minus 0.4em\relax CRC Press, 2014.

\bibitem{poor-hadj-qcd-book-2009}
H.~V. Poor and O.~Hadjiliadis, \emph{Quickest detection}.\hskip 1em plus 0.5em minus 0.4em\relax Cambridge University Press, 2009.

\bibitem{shir-siamtpa-1963}
A.~N. Shiryaev, ``On optimum methods in quickest detection problems,'' \emph{Theory of Probability and Its Applications}, vol.~8, pp. 22--46, 1963.

\bibitem{tart-veer-siamtpa-2005}
A.~G. Tartakovsky and V.~V. Veeravalli, ``General asymptotic {Bayesian} theory of quickest change detection,'' \emph{Theory of Probability and its Applications}, vol.~49, no.~3, pp. 458--497, Sep. 2005.

\bibitem{tart-book-2019}
A.~Tartakovsky, \emph{Sequential change detection and hypothesis testing: general non-iid stochastic models and asymptotically optimal rules}.\hskip 1em plus 0.5em minus 0.4em\relax CRC Press, 2019.

\bibitem{bane-tit-2021}
T.~Banerjee, P.~Gurram, and G.~T. Whipps, ``A {Bayesian} theory of change detection in statistically periodic random processes,'' \emph{IEEE Transactions on Information Theory}, vol.~67, no.~4, pp. 2562--2580, 2021.

\bibitem{guo2023bayesian}
J.~Guo, H.~Yan, and C.~Zhang, ``A bayesian partially observable online change detection approach with thompson sampling,'' \emph{Technometrics}, vol.~65, no.~2, pp. 179--191, 2023.

\bibitem{naha2024bayesian}
A.~Naha and S.~Dey, ``Bayesian quickest change-point detection with an energy harvesting sensor and asymptotic analysis,'' \emph{IEEE Transactions on Signal Processing}, 2024.

\bibitem{poll-astat-1985}
M.~Pollak, ``Optimal detection of a change in distribution,'' \emph{Annals of Statistics}, vol.~13, no.~1, pp. 206--227, Mar. 1985.

\bibitem{lord-amstat-1971}
G.~Lorden, ``Procedures for reacting to a change in distribution,'' \emph{Annals of Mathematical Statistics}, vol.~42, no.~6, pp. 1897--1908, Dec. 1971.

\bibitem{page-biometrica-1954}
E.~S. Page, ``Continuous inspection schemes,'' \emph{Biometrika}, vol.~41, no. 1/2, pp. 100--115, Jun. 1954.

\bibitem{mous-astat-1986}
G.~V. Moustakides, ``Optimal stopping times for detecting changes in distributions,'' \emph{Annals of Statistics}, vol.~14, no.~4, pp. 1379--1387, Dec. 1986.

\bibitem{lai-ieeetit-1998}
T.~L. Lai, ``Information bounds and quick detection of parameter changes in stochastic systems,'' \emph{IEEE Transactions on Information Theory}, vol.~44, no.~7, pp. 2917 --2929, Nov. 1998.

\bibitem{liang2022quickest}
Y.~Liang, A.~G. Tartakovsky, and V.~V. Veeravalli, ``Quickest change detection with non-stationary post-change observations,'' \emph{IEEE Transactions on Information Theory}, vol.~69, no.~5, pp. 3400--3414, 2022.

\bibitem{brucks2023modeling}
T.~Brucks, T.~Banerjee, and R.~Mishra, ``Modeling and quickest detection of a rapidly approaching object,'' \emph{Sequential Analysis}, vol.~42, no.~4, pp. 387--403, 2023.

\bibitem{xie2013sequential}
Y.~Xie and D.~Siegmund, ``Sequential multi-sensor change-point detection,'' \emph{THE ANNALS of STATISTICS}, pp. 670--692, 2013.

\bibitem{wang2015large}
Y.~Wang and Y.~Mei, ``Large-scale multi-stream quickest change detection via shrinkage post-change estimation,'' \emph{IEEE Transactions on Information Theory}, vol.~61, no.~12, pp. 6926--6938, 2015.

\bibitem{fellouris2017multistream}
G.~Fellouris, G.~V. Moustakides, and V.~V. Veeravalli, ``Multistream quickest change detection: Asymptotic optimality under a sparse signal,'' in \emph{2017 IEEE International Conference on Acoustics, Speech and Signal Processing (ICASSP)}.\hskip 1em plus 0.5em minus 0.4em\relax IEEE, 2017, pp. 6444--6447.

\bibitem{xu2021optimum}
Q.~Xu, Y.~Mei, and G.~V. Moustakides, ``Optimum multi-stream sequential change-point detection with sampling control,'' \emph{IEEE Transactions on Information Theory}, vol.~67, no.~11, pp. 7627--7636, 2021.

\bibitem{xu2021multi}
Q.~Xu and Y.~Mei, ``Multi-stream quickest detection with unknown post-change parameters under sampling control,'' in \emph{2021 IEEE International Symposium on Information Theory (ISIT)}.\hskip 1em plus 0.5em minus 0.4em\relax IEEE, 2021, pp. 112--117.

\bibitem{xu2022active}
------, ``Active quickest detection when monitoring multi-streams with two affected streams,'' in \emph{2022 IEEE International Symposium on Information Theory (ISIT)}.\hskip 1em plus 0.5em minus 0.4em\relax IEEE, 2022, pp. 1915--1920.

\bibitem{oleyaeimotlagh2023quickest}
Y.~Oleyaeimotlagh, T.~Banerjee, A.~Taha, and E.~John, ``Quickest change detection in statistically periodic processes with unknown post-change distribution,'' \emph{Sequential Analysis}, vol.~42, no.~4, p. 404–437, 2023.

\bibitem{hou2024robust}
Y.~Hou, Y.~Oleyaeimotlagh, R.~Mishra, H.~Bidkhori, and T.~Banerjee, ``Robust quickest change detection in nonstationary processes,'' \emph{Sequential Analysis}, vol.~43, no.~3, pp. 275--300, 2024.

\bibitem{Patrikar2021}
\BIBentryALTinterwordspacing
J.~Patrikar, B.~Moon, S.~Ghosh, J.~Oh, and S.~Scherer, ``{TrajAir: A General Aviation Trajectory Dataset},'' 6 2021. [Online]. Available: \url{https://kilthub.cmu.edu/articles/dataset/TrajAir_A_General_Aviation_Trajectory_Dataset/14866251}
\BIBentrySTDinterwordspacing

\bibitem{unni-etal-ieeeit-2011}
J.~Unnikrishnan, V.~V. Veeravalli, and S.~P. Meyn, ``Minimax robust quickest change detection,'' \emph{IEEE Transactions on Information Theory}, vol.~57, no.~3, pp. 1604 --1614, Mar. 2011.

\bibitem{lau2018binning}
T.~S. Lau, W.~P. Tay, and V.~V. Veeravalli, ``A binning approach to quickest change detection with unknown post-change distribution,'' \emph{IEEE Transactions on Signal Processing}, vol.~67, no.~3, pp. 609--621, 2018.

\bibitem{poll-astat-1987}
M.~Pollak, ``Average run lengths of an optimal method of detecting a change in distribution,'' \emph{Annals of Statistics}, vol.~15, no.~2, pp. 749--779, Jun. 1987.

\bibitem{jain2020algorithms}
A.~Jain, P.~Sarvepalli, S.~Bhashyam, and A.~P. Kannu, ``Algorithms for change detection with sparse signals,'' \emph{IEEE Transactions on Signal Processing}, vol.~68, pp. 1331--1345, 2020.

\bibitem{molloy2018minimax}
T.~L. Molloy and J.~J. Ford, ``Minimax robust quickest change detection in systems and signals with unknown transients,'' \emph{IEEE Transactions on Automatic Control}, vol.~64, no.~7, pp. 2976--2982, 2018.

\bibitem{bane-elsevier-2024}
T.~Banerjee, P.~Gurram, and G.~Whipps, ``Minimax asymptotically optimal quickest change detection for statistically periodic data,'' \emph{Signal Processing}, vol. 215, p. 109290, 2024.

\bibitem{krishnamurthy2016partially}
V.~Krishnamurthy, \emph{Partially Observed Markov Decision Processes}.\hskip 1em plus 0.5em minus 0.4em\relax Cambridge University Press, 2016.

\bibitem{warner2024worst}
A.~Warner and G.~Fellouris, ``Worst-case misidentification control in sequential change diagnosis using the min-cusum,'' \emph{IEEE Transactions on Information Theory}, 2024.

\end{thebibliography}

\end{document}